\documentclass[12pt]{article}

\usepackage[margin=2.5cm]{geometry}

\usepackage{amsmath,amsthm,amsfonts,amscd,amssymb,bbm,mathrsfs,enumerate,url}
\numberwithin{equation}{section}
\usepackage{authblk}

\usepackage{graphicx,tikz}
\usepackage[pdfborder={0 0 0}]{hyperref}
\usetikzlibrary{matrix,arrows}
\usetikzlibrary{decorations.pathreplacing}
\usetikzlibrary{decorations.pathmorphing}
\usetikzlibrary{patterns,fadings}
\allowdisplaybreaks

\newcounter{2dCN}
\newcounter{1dCN}
\newcounter{2dW}
\newcounter{1dW}

\def\semicolon{;}
\def\applytolist#1{
    \expandafter\def\csname multi#1\endcsname##1{
        \def\multiack{##1}\ifx\multiack\semicolon
            \def\next{\relax}
        \else
            \csname #1\endcsname{##1}
            \def\next{\csname multi#1\endcsname}
        \fi
        \next}
    \csname multi#1\endcsname}

\def\calc#1{\expandafter\def\csname c#1\endcsname{{\mathcal #1}}}
\applytolist{calc}QWERTYUIOPLKJHGFDSAZXCVBNM;
\def\bbc#1{\expandafter\def\csname bb#1\endcsname{{\mathbb #1}}}
\applytolist{bbc}QWERTYUIOPLKJHGFDSAZXCVBNM;
\def\bfc#1{\expandafter\def\csname bf#1\endcsname{{\mathbf #1}}}
\applytolist{bfc}QWERTYUIOPLKJHGFDSAZXCVBNM;
\def\sfc#1{\expandafter\def\csname s#1\endcsname{{\sf #1}}}
\applytolist{sfc}QWERTYUIOPLKJHGFDSAZXCVBNM;
\def\rfc#1{\expandafter\def\csname r#1\endcsname{{\mathrm #1}}}
\applytolist{rfc}QWERTYUIOPLKJHGFDSAZXCVBNM;
\def\scfc#1{\expandafter\def\csname sc#1\endcsname{{\mathscr #1}}}
\applytolist{scfc}QWERTYUIOPLKJHGFDSAZXCVBNM;

\DeclareMathOperator{\id}{id}
\DeclareMathOperator{\Hom}{Hom}
\DeclareMathOperator{\Ad}{Ad}
\DeclareMathOperator{\dom}{Dom}
\DeclareMathOperator{\supp}{supp}

\DeclareMathOperator{\diff}{Diff_+}
\DeclareMathOperator{\LR}{LR}

\usepackage{xspace}
\DeclareRobustCommand{\eg}{e.g.\@\xspace}

\DeclareRobustCommand{\cf}{cf.\@\xspace}

\DeclareRobustCommand{\ie}{i.e.\@\xspace}

\def\bb1{\mathbbm{1}}

\def\<{\langle}
\def\>{\rangle}
\def\Mob{\mathrm{M\ddot{o}b}}
\def\fin{\text{fin}}

\newtheorem{theorem}{Theorem}[section]

\newtheorem{proposition}[theorem]{Proposition}
\newtheorem{lemma}[theorem]{Lemma}
\theoremstyle{remark}
\newtheorem{remark}[theorem]{Remark}

\title{Wightman fields for two-dimensional conformal field theories with pointed representation category}
\author[1]{Maria Stella Adamo\thanks{{\tt adamoms@ms.u-tokyo.ac.jp}}}
\author[2]{Luca Giorgetti \thanks{{\tt giorgett@mat.uniroma2.it}}}
\author[2]{Yoh Tanimoto\thanks{{\tt hoyt@mat.uniroma2.it}}}

\affil[1]{Department of Mathematical Sciences, The University of Tokyo, \authorcr
3-8-1 Komaba, Tokyo, 153-8914, JAPAN}
\affil[2]{Dipartimento di Matematica, Universit\`a di Roma Tor Vergata,\authorcr
   Via della Ricerca Scientifica 1, I-00133 Roma, Italy}
\date{}

\begin{document}

\maketitle

\begin{center}
 \textit{Dedicated to Roberto Longo in occasion of his 70th birthday}
\end{center}

\begin{abstract}
Two-dimensional full conformal field theories have been studied in various mathematical frameworks,
from algebraic, operator-algebraic to categorical. In this work, we focus our attention on theories with chiral components having pointed braided tensor representation subcategories, namely having automorphisms whose equivalence classes necessarily form an abelian group.
For such theories, we exhibit the explicit Hilbert space structure and construct primary fields as Wightman fields for the two-dimensional full theory.
Given a finite collection of chiral components with automorphism categories with trivial total braiding,
we also construct a local extension of their tensor product as a chiral component.
We clarify the relations with the Longo--Rehren construction, and 
illustrate these results with concrete examples including the $\rU(1)$-current.
\end{abstract}

\section{Introduction}
Two-dimensional conformal field theories (CFTs) have been studied extensively \cite{DMS97} and have attracted the interest of mathematicians
for their algebraic, analytic and geometric structures. In particular, the conformal symmetry in two-dimensional Minkowski spacetime
is described by the diffeomorphism group of the lightrays, therefore it is infinite-dimensional.
This allows to study first the theories that depend only on one of the lightray coordinates (the chiral components)
and then their two-dimensional (full) extensions.

From the operator-algebraic point of view (Haag--Kastler axioms), a general quantum field theory can be formulated as a net of von Neumann algebras
associated with the open spacetime regions \cite{Haag96}. The relations between the full theory and the chiral components
have been obtained, \eg, in \cite{Rehren00}, and they resulted in classification schemes for certain classes
of two-dimensional CFTs, see, \eg, \cite{KL04-2}, \cite{BKL15}. In this course, it was important that from two copies of a single chiral theory
and a family of charged sectors, one can construct two-dimensional theories as extensions of the two-dimensional theory
obtained just by taking tensor products of the chiral components, and then extended it by introducing non-chiral (bulk) fields. Among such extensions,
the most studied ones are the
\lq\lq diagonal" extensions, obtained by letting the chiral theories act simultaneously on a direct sum of copies of their vacuum representation,
and then introducing \lq\lq charged" fields that mix the different components.

Since \cite{LR95}, in the operator-algebraic setting, the (finite index, \ie, \lq\lq relatively small") extensions can be equivalently
well described by Q-systems (\ie, $C^*$-Frobenius algebra objects) in the unitary braided tensor representation category of the net that one wants to extend. Locality of the extension can also be characterized by means of a commutativity condition on the associated Q-system. In the operator-algebras context, the Q-system associated with a finite index \lq\lq diagonal" extension is called a Longo--Rehren Q-system.

It is worth mentioning that the method of Q-systems applies both to chiral and to two-dimensional theories (in fact even in four-dimensions), and that similar ideas (commutative Frobenius algebra objects in tensor categories) emerged independently in other algebraic approaches to CFT, such as vertex operator algebras \cite{HKL15}, \cite{KO02}, both in one and two dimensions, see, \eg, \cite{FuRS02-1}, \cite{HK07FullFieldAlgebras}, \cite{Kong07}, \cite{RFFS07}. In the unitary VOA \cite{CKLW18}, \cite{DL14}, \cite{Gui22} and unitary tensor category context \cite{GLR85}, \cite{DR89}, \cite{LR97}, these notions (Frobenius algebras, $C^*$-Frobenius algebras, Q-systems, to describe extensions) have been recently shown to be equivalent \cite{CGGH23-online}.

The relationship between chiral and full two-dimensional CFTs (not restricted to a single Minkowski spacetime) can also be cast and studied \cite{BGS22} in the more general categorical/operadic framework of locally covariant AQFT \cite{BFV03}, \cite{BSW19}, \cite{BSW21}.

Let us note, however, that most of the purely algebraic frameworks, see, \eg, \cite{HK07FullFieldAlgebras}, are designed particularly for conformal field theories,
and do not apply, as they are, to massive theories.
On the other hand, it is natural to expect that a two-dimensional full CFT can be described in
a more traditional framework for quantum field theory, the Wightman axioms \cite{SW00}.
Wightman fields are operator-valued distributions on a Hilbert space, and a reasonable
description of such fields and the Hilbert space would be desirable.

Having explicit Wightman fields is not just interesting on its own, but could be a starting point
for constructing non-conformal field theory, \eg, by perturbing the dynamics of the CFT by these fields.
Such an idea is presented in \cite{Zamolodchikov89-1}, where certain massive integrable fields
are associated with charged fields in CFT.
Perturbing the dynamics of the free field on the same Hilbert space has been carried out on the de Sitter space \cite{BJM23}.
Therefore, developing a theory of Wightman fields in two dimensions will be a basis for rigorously studying the relations between CFT and massive models \cite{JT23Towards}.

In this paper, we study full two-dimensional CFTs whose chiral components admit a (finite or infinite) collection of automorphisms (invertible objects in the language of tensor categories) among their irreducible representations (superselection sectors). We will define and construct both two-dimensional conformal Haag--Kastler nets and two-dimensional conformal Wightman fields explicitly in terms of the Hilbert space of the chiral components and of their charged sectors. We introduce charged primary fields as operators between different charged sectors of the chiral components, and combine them to obtain local bulk fields.

To be more specific, let $\cA_\rL$ and $\cA_\rR$ be chiral conformal nets on $S^1$ admitting a family of automorphisms (including the defining vacuum representation) among their charged representations (in the sense of Doplicher--Haag--Roberts \cite{DHR69I}, \cite{DHR69II}, \cite{DHR71} \cite{DHR74}, but in one and two dimensions instead of four) parametrized by the same abelian group $G$.
These automorphisms, respectively denoted by $\rL(g)$ and $\rR(g)$, $g\in G$, are defined on the same Hilbert spaces, but we denote them
as $\cH_\rL^{\rL(g)}, \cH_\rR^{\rR(g)}$ to distinguish the representation.
On the Hilbert space $\bigoplus_{g \in G} \cH_\rL^{\rL(g)}\otimes \cH_\rR^{\rR(g)}$,
the chiral observables $\cA_\rL \otimes \cA_\rR$ act diagonally. We will add charged fields that shift the sectors in the direct sum,
under some condition on their braiding (which is satisfied in many cases, see, \eg, the end of Section \ref{U(1)current})
and obtain full two-dimensional conformal nets and conformal Wightman fields.
Moreover, with a similar technique, we construct some (presumably new) local conformal nets
on $S^1$ by taking the tensor product of a family of conformal nets and then extending it, in such a way that a certain trivial total braiding condition is fulfilled.
This generalizes the well-known extensions of a single chiral $\rU(1)$-current net \cite{BMT88},
and it can be seen as a variation of the \lq\lq gluing" construction due to \cite{CKM22Gluing} for VOAs.

We describe explicitly the fields for the $\rU(1)$-current algebra.
Charged primary fields of a single chiral component are given as formal series between charged sectors \cite{TZ12}, \cite{Toledano-LaredoThesis}.
We expect that our construction works for loop group nets (for a simply laced, simple, simply connected, compact group)
at level $1$ as well \cite{Wassermann98}, \cite{Toledano-LaredoThesis}.
With a more involved combination of left and right chiral components, it should also be possibile
to generalize it to other completely rational nets (when the charged fields are available), in the presence of irreducible representations
with statistical dimension greater that 1 (\ie, in the non-pointed tensor category case).

This paper is organized as follows.
In Section 2, we recall the fact that two-dimensional conformal field theories extend to
the Einstein cylinder, then set out the operator-algebraic formulation of the chiral components
and of the full two-dimensional CFTs. We also collect some facts about representations and charged fields
of chiral components, in the case of automorphisms. In Section 3, from a family of chiral components equipped with a collection
of automorphisms and satisfying certain conditions on the braiding, we construct an extension of
their tensor product on $S^1$. In Section 4, we take a pair of left and right chiral components
with a collection of automorphisms and assume that their braidings cancel in a certain sense,
and we construct a full two-dimensional CFT extending the tensor product of the chiral components.
In Section 5, we study the case where the charged primary fields are given as formal series.
Under similar assumptions on the braiding and assuming energy bounds, we exhibit the Wightman fields responsible
for the extensions at the level of nets constructed in the previous sections.
In Section 6, we consider the explicit example of the $\rU(1)$-current.
We exhibit its superselection charge structure, the associated braiding and charged fields, and we show that they
fit in the general construction of the previous sections. In Section 7, we summarize our outlook.

\subsection*{Notations}

In order to keep uniform notations throughout the paper,
we label chiral (one-dimensional) objects, while
we use symbols without a label for two-dimensional objects.
\begin{itemize}
\item $\cA_\kappa, \phi_{\kappa,j}, \psi_\kappa, \cA_\rL, \cA_\rR, \psi_\rL, \psi_\rR$: chiral net/chiral fields/chiral charged fields.
\item $\cA_K, \psi_K$: chiral extension/chiral charged local fields.
\item $\cA$: two-dimensional net.
\item $\widetilde\cA, \widetilde\psi$: two-dimensional net/two-dimensional charged local fields.
\end{itemize}

\section{Preliminaries}

\subsection{Einstein cylinder}\label{cylinder}
As suggested by \cite{LM75}, it is natural to study four-dimensional conformally covariant Wightman fields on the Einstein cylinder\footnote{More
precisely, \cite{LM75} showed that, assuming that there are Wightman fields and their Euclidean $n$-point functions
are invariant under Euclidean conformal group, the representation of the Poincar\'e group can be extended to $\mathrm{SO}(4,2)$.
See \cite[Section 8.3]{KQR21Distributions} for a recent review.}.
Moreover, the conformal group acts on the Einstein cylinder in a natural way and the fields are covariant with respect to its action.
In two dimensions, an analogous procedure can be carried out with one more step because the two-dimensional Einstein cylinder is not simply connected.
To state these results, let us first discuss the conformal geometry.
We follow \cite{KL04-2}, and start the discussion with the lightrays $\bbR$ in $\bbR^{1+1}$.

The M\"obius group $\rP\rS\rL(2,\bbR) \cong \Mob$ acts on the one-point compactification $S^1$ of $\bbR$, where $\bbR$ is identified with $S^1 \setminus \{-1\}$
through the stereographic projection.
Hence, by lifting such action, its universal covering group $\overline{\Mob}$
acts on $\bbR$, the universal covering of $S^1$.
The original lightray $\bbR$ is identified with
the interval $(-\pi, \pi)$ in $\bbR$ as the universal covering of $S^1$. For an interval $I$ such that $\overline I \subset (-\pi, \pi)$,
there is a neighborhood $\cU$ of the unit element in $\Mob$ such that if $\gamma \in \cU$, then $\gamma \cdot I \subset (-\pi,\pi)$.
In this sense, the group $\Mob$ acts locally on the lightray.

The two-dimensional Minkowski space $\bbR^{1+1}$ has the metric $(a,b) = a_0 b_0 - a_1 b_1$, where $a,b \in \bbR^{1+1}$.
With the lightcone coordinates $(a_-, a_+) = \left(\frac{a_0-a_1}{\sqrt2},\frac{a_0+a_1}{\sqrt 2}\right)$,
the Minkowski space is the product of two lightrays $\bbR^{1+1} = \bbR \times \bbR$
(in the lightcone coordinates, not the $(a_0,a_1)$-coordinates) and the metric can be written as $(a,b) = a_- b_+ + a_+ b_-$.
A conformal transformation of $\bbR^{1+1}$ is, by definition, a transformation of $\bbR^{1+1}$
that preserves the metric up to a scalar. From the above expression, it is clear that
a product of any pair of orientation-preserving diffeomorphisms of lightrays is a conformal transformation.

The diamond $D_0 = \{(a_-, a_+): -\pi < a_\pm < \pi\}$ can be mapped to the Minkowski space $\bbR^{1+1}$ 
by the conformal transformation $(a_-, a_+) \mapsto (\tan(\frac12 a_-), \tan(\frac12 a_+))$.
Through this transformation, $\bbR^{1+1}$ can be identified with $D_0$.
On $D_0 = (-\pi, \pi)\times (-\pi,\pi)$, the group $\overline{\Mob} \times \overline{\Mob}$ acts locally in the sense above.

Let $R_t$ be the lift of the rotation by $t$ in $\overline{\Mob}$. 
In a two-dimensional conformal field theory, the correlation functions are invariant under the local action
of $\overline{\Mob} \times \overline{\Mob}$, and moreover, the spacelike
$2\pi$-rotations $\mathfrak R := \{R_{2n\pi}\times R_{-2n\pi}: n \in \bbZ\}$ are often trivial. If this holds, the field theory can be
extended to the Einstein cylinder\footnote{This is topologically equivalent to $S^1 \times \bbR$,
but the product structure is different from the lightray decomposition.}
$\cE = \bbR^{1+1}/\mathfrak R$.
In the most favorable case, the theory extends to $\cE$ and there is also a local action of
the \textbf{conformal group} $\scC = \overline{\diff(S^1)}\times \overline{\diff(S^1)}/\mathfrak R$.

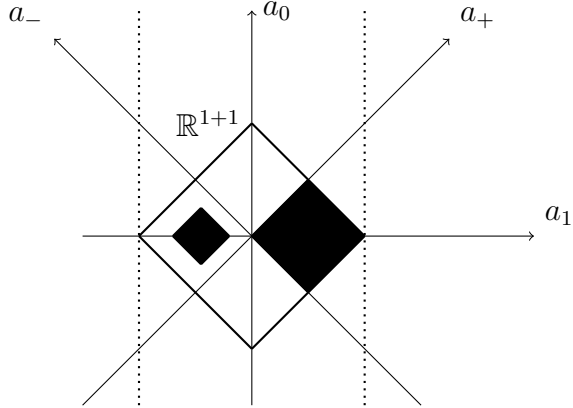
\begin{figure}[ht]\centering
\begin{tikzpicture}[scale=0.75]
        \begin{scope}
        \draw [->] (-3,0) --(5,0) node [above right] {$a_1$};
         \draw [->] (0,-3)--(0,4) node [ right] {$a_0$};
          \draw [thick] (-2,0)-- (0,2) node [ left] {$\bbR^{1+1}$};
          \draw [thick] (0,2)-- (2,0);
          \draw [thick] (-2,0)-- (0,-2);
          \draw [thick] (2,0)-- (0,-2);
          
           \draw [ thick] (0,0)-- (1,1);
          \draw [thick] (0,0)-- (1,-1);
          
\draw [ thick] (-0.4,0) -- (-0.9,0.5);
\draw [thick] (-0.4,0) -- (-0.9,-0.5);
\draw [thick] (-1.4,0) -- (-0.9,0.5);
\draw [thick] (-1.4,0) -- (-0.9,-0.5);
          
          \draw [thick,dotted] (-2,-3)--(-2,4);
          \draw [thick,dotted] (2,-3)--(2,4);
          
          \draw [thin,->] (-3,-3) -- (3.5,3.5) node [above right] {$a_+$};
          \draw [thin,->] (3,-3) -- (-3.5,3.5) node [above left] {$a_-$};
  \fill [color=black,opacity=0.2]
               (-0.4,0) -- (-0.9,0.5) -- (-1.4,0) --(-0.9,-0.5);
           \fill [color=black,opacity=0.6]
              (0,0) -- (1,1) -- (2,0) -- (1,-1);
        \end{scope}
\end{tikzpicture}
\caption{The Minkowski space $\bbR^{1+1}$, depicted as the diamond.
When the spacelike rotation $R_{2\pi} \times R_{-2\pi}$ is trivial, the dotted lines are identified
and it is a subset of the Einstein cylinder $\cE$.}
\label{fig:conformal}
\end{figure}

\subsection{Chiral components and representation theory}\label{chiral}
\subsubsection{Conformal net on \texorpdfstring{$S^1$}{S1}}\label{net1d}
Among two-dimensional conformal fields, there are those that do not depend on one of the lightray coordinates.
They are called \textbf{chiral fields}, and can be restricted to the lightray, then extended to the circle $S^1$ by locality.
We put the index $\kappa$ to the objects in this section, although we study a single chiral field theory at a time.
Note that we do not use the index $\kappa$ to distinguish between left and right chiral components.
This convention will be useful later when we combine different chiral fields on either component of a two-dimensional conformal field theory,
while we denote two-dimensional objects without index.

Let us start with a net on $\bbR$, and see how it extends to $S^1$.
Note that if a unitary projective $U_\kappa$ representation of $\overline{\diff(S^1)}$ on a Hilbert space $\cH_\kappa$ is restricted to
the semisimple subgroup $\overline{\Mob}$, there is a unique true (non-projective) representation
whose quotient in $\rP\rU(\cH_\kappa)$ coincides with $U_\kappa$. In this sense, we can consider the spectrum
of $U_\kappa$ restricted to $\overline{\Mob}$ without ambiguity \cite[Theorem 7.1]{Bargmann54}.

We call a triple $(\cA_\kappa, U_\kappa, \Omega_\kappa)$ a \textbf{conformal net on $\bbR$}
if $\cA_\kappa$ assigns to each open non-dense non-empty interval $I \subset \bbR$ a von Neumann algebra $\cA_\kappa(I)$ on a Hilbert space $\cH_\kappa$,
$U_\kappa$ is a unitary projective representation of $\overline{\diff(S^1)}$ and $\Omega_\kappa \in \cH_\kappa$ such that
\begin{enumerate}[{(1dCN}1{)}]
 \item \textbf{Isotony: }if $I_1\subset I_2$, then $\cA_\kappa(I_1)\subset\cA_\kappa(I_2)$. \label{1dcn:isotony}
 \item \textbf{Locality:} if $I_1$ and $I_2$ are disjoint, then $\cA_\kappa(I_1)\subset\cA_\kappa(I_2)'$. \label{1dcn:locality}
 \item\textbf{Diffeomorphism covariance:} \label{1dcn:diff}
 For a bounded interval $I \subset \bbR$, there is a neighborhood $\cU$ of the unit element of $\diff(S^1)$
 such that if $\gamma \in \cU$ then $\gamma \cdot I \subset \bbR$ and
 \[
  U_\kappa(\gamma)\cA_\kappa(I)U_\kappa(\gamma)^*=\cA_\kappa(\gamma \cdot I).
 \]
 Furthermore, if $\supp \gamma$ is disjoint from $I$, then $\Ad U_\kappa(\gamma)(x) = x$ for $x \in \cA_\kappa(I)$.
 \item\textbf{Positivity of energy:}
 the restriction of $U_\kappa$ to the translation subgroup $\bbR \subset \overline{\Mob}$
 has the spectrum contained in $\bbR_+$.
 \label{1dcn:positiveenergy}
 \item \textbf{Vacuum and the Reeh-Schlieder property:} there exists a unique (up to a phase)
 vector $\Omega_\kappa \in\cH_\kappa$ such that $U_\kappa(g)\Omega_\kappa=\Omega_\kappa$ for $g\in \overline{\Mob}$
 and $\overline{\cA_\kappa(I)\Omega_\kappa}=\cH_\kappa$. \label{1dcn:vacuum}
 \setcounter{1dCN}{\value{enumi}}
\end{enumerate}
If we assume only the covariance with respect to $\overline{\Mob}$, we call it a \textbf{M\"obius-covariant} net.

If $U_\kappa$ factors through $\diff(S^1)$ as a projective representation, that is, if $U_\kappa(R_{2n\pi}), n \in \bbZ$ is a scalar,
then $\cA_\kappa$ can be extended to a net defined on the set $\cI$ of non-dense, non-trivial open intervals on $S^1$
and it satisfies the usual axioms of \textbf{conformal net on $S^1$}, that is a triple $(\cA_\kappa, U_\kappa, \Omega_\kappa)$
satisfying isotony, locality, $\diff(S^1)$-covariance, positivity of energy and the vacuum properties.
See, \eg, \cite[Section 2.1]{KL04-1}, \cite[Chapter 3]{CKLW18}.

We also recall that $U_\kappa$ can be made into a unitary multiplier representation (rather than projective) of $\overline{\diff(S^1)}$,
that is, $U_\kappa(\gamma_1)U_\kappa(\gamma_2) = c(\gamma_1, \gamma_2)U_\kappa(\gamma_1\gamma_2)$ for some $c(\gamma_1, \gamma_2) \in \bbC$
\cite[Theorem A.2]{Carpi04}, \cite[Proposition 5.1]{FH05}. Such $c$ is called the \textbf{cocycle} of the multiplier representation.

\begin{proposition}\label{pr:1dcircle}
 Let $(\cA_\kappa, U_\kappa, \Omega_\kappa)$ be a conformal net on $\bbR$.
 Assume furthermore that $U_\kappa(R_{2\pi}) = \bb1$ (in $\rP\rU(\cH_\kappa)$), that is,
 the lift of $2\pi$-rotation is trivial.
 Then $(\cA_\kappa, U_\kappa, \Omega_\kappa)$ extends to $S^1$.
\end{proposition}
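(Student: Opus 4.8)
The plan is to show that the net $\cA_\kappa$ and the representation $U_\kappa$, originally defined on $\bbR$ (equivalently on $S^1 \setminus \{-1\}$), can be consistently extended to all of $S^1$, in particular to intervals whose closure contains the point $-1$. The key structural input is the hypothesis $U_\kappa(R_{2\pi}) = \bb1$ in $\rP\rU(\cH_\kappa)$, which says the projective representation of $\overline{\diff(S^1)}$ descends to a projective representation of $\diff(S^1)$, since the kernel of the covering map $\overline{\diff(S^1)} \to \diff(S^1)$ is generated by $R_{2\pi}$. As noted in the paragraph preceding the statement, this factorization is precisely what is needed to define the net on the full circle.

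First I would make the identification precise: the lightray $\bbR$ corresponds to $S^1 \setminus \{-1\}$ via stereographic projection, and the existing net assigns algebras to bounded intervals $I \subset \bbR$, i.e.\ to intervals $I \subset S^1$ with $-1 \notin \overline I$. To extend $\cA_\kappa$ to an arbitrary interval $J \in \cI$ whose closure contains $-1$, I would pick a diffeomorphism (or a Möbius transformation) $\gamma \in \diff(S^1)$ such that $\gamma \cdot J$ is a bounded interval avoiding $-1$, and define $\cA_\kappa(J) := \Ad U_\kappa(\gamma)^*(\cA_\kappa(\gamma \cdot J))$. Because $U_\kappa$ now factors through $\diff(S^1)$ as a projective representation, the adjoint action $\Ad U_\kappa(\gamma)$ is well defined on von Neumann algebras (the scalar cocycle cancels in the adjoint), so this formula makes sense on $\diff(S^1)$ rather than only on the universal cover.

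The crux is \emph{well-definedness}: if $\gamma_1, \gamma_2$ both map $J$ into the lightray, I must check that $\Ad U_\kappa(\gamma_1)^*(\cA_\kappa(\gamma_1 \cdot J)) = \Ad U_\kappa(\gamma_2)^*(\cA_\kappa(\gamma_2 \cdot J))$. Setting $\delta = \gamma_2 \gamma_1^{-1}$, this reduces to the covariance relation $\Ad U_\kappa(\delta)(\cA_\kappa(\gamma_1 \cdot J)) = \cA_\kappa(\delta \cdot (\gamma_1 \cdot J))$, which holds on $\bbR$ by axiom (1dCN\ref{1dcn:diff}); the subtlety is that a single $\gamma$ need not lie in the small neighborhood $\cU$ of covariance, so I would decompose the diffeomorphism covariance into a chain of local steps, each within a covariance neighborhood, and use the connectedness of $\diff(S^1)$ together with the factorization through $\diff(S^1)$ to patch them. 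I expect this patching — promoting the local covariance on $\bbR$ to a genuine circle-covariance once the $2\pi$-rotation acts trivially — to be the main obstacle; the ambiguity coming from the choice of lift in $\overline{\diff(S^1)}$ is exactly what the hypothesis $U_\kappa(R_{2\pi}) = \bb1$ removes.

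Once $\cA_\kappa$ is well defined on all of $\cI$, the remaining axioms transfer by transport of structure: isotony, locality, and the vacuum/Reeh--Schlieder property on $S^1$ follow from their counterparts on $\bbR$ by applying suitable $U_\kappa(\gamma)$ and using that these properties are preserved under the unitary adjoint action and that $\Omega_\kappa$ is $\overline{\Mob}$-invariant. Positivity of energy is unchanged since it is a statement about the restriction of $U_\kappa$ to the translation subgroup. Thus $(\cA_\kappa, U_\kappa, \Omega_\kappa)$ satisfies the axioms of a conformal net on $S^1$.
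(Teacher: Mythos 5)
Your overall strategy coincides with the paper's: extend the net by covariance and use the hypothesis $U_\kappa(R_{2\pi}) = \bb1$ to remove the ambiguity in the choice of transformation. The paper, however, organizes the well-definedness differently, and its packaging matters. Instead of assigning algebras to intervals containing $-1$ directly on $S^1$ and comparing different choices of $\gamma$, it first extends $\cA_\kappa$ from $(-\pi,\pi)$ to all bounded intervals of the universal covering $\bbR$ of $S^1$, observes that triviality of $\Ad U_\kappa(R_{2\pi})$ makes this extended net $2\pi$-periodic, and then lets it descend to $S^1$. This is not merely cosmetic: your patching step, as described, has a hole. To compare two choices $\gamma_1,\gamma_2$ you must control $\Ad U_\kappa(\delta)$ for $\delta = \gamma_2\gamma_1^{-1}$, and a chain of small steps joining $\id$ to $\delta$ in $\diff(S^1)$ may drag the interval across the point $-1$, where no algebra has yet been assigned, so the local covariance axiom (1dCN\ref{1dcn:diff}) is simply unavailable for those intermediate steps. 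Working on the cover dissolves this: every bounded interval there can be moved into $(-\pi,\pi)$ by a lifted rotation with all intermediate positions staying inside the convex hull of the endpoints (hence away from the bad point), and two lifts of the same interval of $S^1$ differ by $R_{2\pi n}$, which the hypothesis renders harmless. So the mechanism you identify is the right one, but the clean execution goes through the covering space rather than through connectedness of $\diff(S^1)$.

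The concrete gap is your treatment of positivity of energy. For a conformal net on $S^1$ (the axioms of \cite[Section 2.1]{KL04-1}, which is what the proposition must deliver), positivity of energy means positivity of the conformal Hamiltonian, i.e., of the generator of the \emph{rotation} subgroup, not of the translations. Your claim that the axiom ``is unchanged since it is a statement about the restriction of $U_\kappa$ to the translation subgroup'' conflates the axiom on $\bbR$ (which is about translations, axiom (1dCN\ref{1dcn:positiveenergy})) with the axiom on $S^1$. The implication from translation positivity to rotation positivity is true, but it is a genuine theorem, not a tautology; the paper invokes it explicitly, citing \cite[Lemma 3.1]{Weiner06}. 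Without this step, or at least a citation of the equivalence, your argument does not establish that the extended triple is a conformal net on $S^1$.
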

\begin{proof}
 If $U_\kappa(R_{2\pi}) = \bb1$, then its adjoint action is trivial,
 and the natural extension of $\cA_\kappa$ on $(-\pi, \pi)$ to $\bbR$ (the universal covering of $S^1$)
 is periodic, hence we can regard it as a net on $S^1$.
 It is known that the positivity of energy restricted to the rotation subgroup
 and that to the translation subgroup are equivalent (see, \eg, \cite[Lemma 3.1]{Weiner06}).
 The remaining axioms of \cite[Section 2.1]{KL04-1} follow easily.
\end{proof}

The assumption $U_\kappa(R_{2\pi}) = \bb1$ is necessary, but we do not know whether this follows from
more general assumptions (\cf the Bisognano--Wichmann property is necessary and sufficient \cite[Theorem 1.4]{GLW98},
but we are not aware whether it is automatic for conformal nets on $\bbR$).

There is an example of a net on $\bbR$ with a weaker covariance and not extending to $S^1$ ($U_\kappa$ is only a projective representation of
the group generated by translations, dilations and diffeomorphisms of $\bbR$ with compact support):
the $\rU(1)$-current net with the perturbed stress-energy tensor \cite{BS90}, see also the discussion in \cite[Section 5.2]{MT18}.

\subsubsection{Representations of chiral conformal nets}\label{representations}

A \textbf{representation} of a conformal net $\cA_\kappa$
(or more precisely of $(\cA_\kappa, U_\kappa, \Omega_\kappa)$)
on $S^1$ is a family of representations $\rho = \{\rho_I\}$
of $\{\cA_\kappa(I)\}_{I \in \cI}$ on a single Hilbert space $\cH_\kappa^\rho$,
which is compatible in the sense that if $I_1 \subset I_2$, then $\rho_{I_2}|_{\cA_\kappa(I_1)} = \rho_{I_1}$.
If $\cH_\kappa^\rho$ is equal to $\cH_\kappa$ and $\rho_I(\cA_\kappa(I)) = \cA_\kappa(I)$, equivalently if the $C^*$-tensor categorical or statistical dimension $d_\rho$ equals $1$, we say that $\rho$ is an \textbf{automorphism} of $\cA_\kappa$.

Let us summarize the general theory of representations in the case of automorphisms (\cf \cite[Section 6]{MTW18}):
\begin{itemize}
 \item Any automorphism is irreducible,
 hence diffeomorphism covariant in the following sense (\cf \cite[Theorem 6]{DFK04} for projective representations,
 \cite[Section 3.2]{CDVIT21} for local multiplier representations):
 there is a local unitary multiplier representation $U_\kappa^\rho$
 of a neighborhood $\cU$ of the unit element of $\overline{\diff(S^1)}$ with the same cocycle as that of $U_\kappa$
 such that $U_\kappa^\rho(\gamma) = \rho(U_\kappa(\gamma))$ if $\supp \gamma \subset I$ for some $I$
 and $\Ad U_\kappa^\rho(\gamma)(\rho(x)) = \rho(\Ad U_\kappa(\gamma)(x))$ for $x \in \cA_\kappa(I)$ for some $I$, $\gamma \in \cU$
 (the last restriction is why we call it a \textit{local} representation).
 $U_\kappa^\rho$ extends to $\overline{\diff(S^1)}$ as a projective representation
 because $\overline{\diff(S^1)}$ is simply connected.
 Moreover, its restriction to $\overline{\Mob}$
 is a true representation and it has automatically positive energy \cite[Theorem 3.8]{Weiner06}.
 Therefore, the lowest eigenvalue of the generator
 of lift rotations $L_0^\rho$ is uniquely determined, and we denote it by $D_{\kappa, \rho}$.
 This is called the \textbf{conformal dimension}.
 As the $2\pi$-rotation is trivial, the spectrum of $L_0^\rho$ is contained in $D_{\kappa, \rho} + (\bbN \cup \{0\})$.
 By the first part of the proof of \cite[Theorem 5.1]{FH05}, the representation $U_\kappa^\rho$ extends to a
 unitary multiplier representation of $\overline{\diff(S^1)}$.
 \item If $\rho, \rho'$ are two representations of $\cA_\kappa$ and there is a unitary $V \in \rU(\cH_\kappa)$
 such that $V\rho_I(x) = \rho_I'(x)V$, for every $x \in \cA_\kappa(I)$ and for all $I\in \cI$
 we say that they are unitarily equivalent. We also say that $V$ is a unitary intertwiner between $\rho$ and $\rho'$.
 
 \item An automorphism is said to be \textbf{localized in $I$} if $\rho_{I'} = \id$,
 where $I'$ is the interior of $S^1 \setminus I$. In this case, $\rho_{I}$ maps $\cA_\kappa(I)$ to itself,
 \ie, it is an endomorphism (in this case an automorphism) of $\cA_\kappa(I)$, by Haag duality on $S^1$, \cite[Theorem 2.19]{GF93}
 (\ie, $\cA_\kappa(I')= \cA_\kappa(I)'$ for every $I\in\cI$, cf. \cite{BS90}).  
 Given $\rho$ localized in $I_1$, one can always find a unitarily equivalent automorphism localized in another interval $I_2$.
 (More generally, every representation of a conformal net on a separable Hilbert space can be localized in any given interval, \cite[Lemma 4.6]{GF93}).
 If $I_1 \cup I_2$ is not dense in $S^1$, one can take a proper interval $I$
 that contains $I_1 \cup I_2$, and by Haag duality on $S^1$, the unitary operator $V$ implementing the equivalence
 belongs to $\cA_\kappa(I)$. Such a $V$ is called a \textbf{charge transporter}.
 
 \item The operator $z_{\kappa,\rho}(\gamma) := U_\kappa(\gamma)U_\kappa^\rho(\gamma)^*$ for $\gamma\in \overline{\diff(S^1)}$ is a charge transporter between $\rho$ and $\rho^\gamma$, where $\rho^\gamma := \Ad U_\kappa(\gamma) \circ \rho \circ \Ad U_\kappa(\gamma^{-1})$ is localized in $\gamma \cdot I$ if $\rho$ is localized in $I$. We call $z_{\kappa,\rho}(\gamma)$ the \textbf{covariance cocycle}, or just cocycle, of $\rho$.

 \item The \textbf{DHR tensor (or monoidal) product} of representations $\rho_1$ and $\rho_2$ is defined by the composition of the associated endomorphisms (automorphisms in this case) localized in each $I$. Note that the tensor product is not given by considering the tensor product Hilbert space, which would not give a well-defined family of representations of the local algebras. 
 In symbols, $(\rho_1)_I \otimes (\rho_2)_I := (\rho_1)_I \circ (\rho_2)_I$, or just $(\rho_1)_I (\rho_2)_I$, for short. The resulting representation is denoted by $\rho_1\otimes\rho_2$, or just by $\rho_1\rho_2$. See, \eg, \cite[Section IV.2]{GF93}. For strongly additive conformal nets $\cA_{\kappa}$ on $S^1$ (\ie, $\cA_\kappa(I_1)\vee\cA_\kappa(I_2) = \cA_\kappa(I)$ for every $I\in\cI$ and $I_1, I_2\in\cI$ arising as connected components of $I\smallsetminus \{p\}$, $p\in I$), or equivalently for conformal nets on $\bbR$ that satisfy Haag duality on $\bbR$, the tensor product can be defined globally, see, \cite[Appendix B]{KLM01}, \cite[Section 2]{GR18}, which is closer in spirit to the original definition of Doplicher--Haag--Roberts \cite{DHR69I}, \cite{DHR69II}. 

 \item Let $\rho_1, \rho_2$ be two automorphisms localized in $I$. In order to define the \textbf{DHR braiding} \cite{DHR71}, \cite{FRS89} between
 $\rho_1$ and $\rho_2$, let us choose a point on $S^1$, identified with the point at infinity by means of the corresponding stereographic projection onto $\bbR$. 
 Take $\widetilde\rho_1, \widetilde\rho_2$ localized in $\widetilde I_1, \widetilde I_2$, respectively,
 such that $\widetilde I_1 \cap \widetilde I_2 = \emptyset$, $\widetilde I_1, \widetilde I_2$ are away from infinity, and take charge transporters $V_1, V_2$
 between $\rho_1$ and $\widetilde\rho_1$, $\rho_2$ and $\widetilde \rho_2$, respectively.
 The operator $\epsilon^{\pm}_{\rho_1, \rho_2} := \rho_2(V_1^*) V_2^* V_1 \rho_1(V_2)$,
 where $\pm$ depends on whether $I_1$ is on the left or right of $I_2$, is a unitary intertwiner 
 between $\rho_1\rho_2$ and $\rho_2\rho_1$ with all the properties of a unitary braiding in a unitary tensor category.
 It does not depend on the choice of $\widetilde\rho_1, \widetilde\rho_2$ or $V_1, V_2$ under the same configuration of $\widetilde I_1, \widetilde I_2$, or on the chosen point at 
 infinity, see, \eg, \cite[Section IV.4]{GF93}. 
 
 \item Let $\rho$ be an automorphism localized in $I$.
 The conjugate automorphism $\overline\rho$ of $\rho$ is the automorphism localized in $I'$ given by
 $j\circ \rho \circ j$, where $j = \Ad J_I$ and $J_I$ is the modular conjugation of $\cA_\kappa(I)$ with respect to the vacuum.
 It follows that $\rho \circ \overline{\rho} \cong \id$, 
 see \cite[Theorem 8.3]{GL92}, \cite[Theorem 2.11]{GL96}.
\end{itemize}

Let $(\cB_\kappa, U_\kappa, \Omega_\kappa)$ be a conformal net on $S^1$ and let $\{\cA_\kappa\}_{I\in \cI}$ be a family of von Neumann subalgebras
$\cA_\kappa(I) \subset \cB_\kappa(I)$ satisfying covariance with respect to $U_\kappa$.
Then, on the subspace $\cH_{\cA_\kappa}=\overline{\bigcup_{I\in \cI}\cA_\kappa(I)\Omega_\kappa}$,
$(\cA_\kappa|_{\cH_{\cA_\kappa}}, U_\kappa|_{\cH_{\cA_\kappa}}, \Omega_\kappa)$ is a M\"obius-covariant net
with respect to $U_\kappa$ because $U_\kappa(\gamma)x\Omega_\kappa = \Ad U_\kappa(\gamma)(x)\Omega_\kappa \in \cA(\gamma \cdot I)\Omega$
for $x \in \cA_\kappa(I)$, thus $U_\kappa(\gamma)$ preserves $\cH_{\cA_\kappa}$.
The restriction of $U_\kappa|_{\cH_{\cA_\kappa}}$ to $\Mob$ often extends to $\diff(S^1)$
and $\cA_\kappa|_{\cH_{\cA_\kappa}}$ is covariant with respect to it, in which case we say that $\cA_\kappa$ is a (conformal) subnet of $\cB_\kappa$
and write $\cA_\kappa \subset \cB_\kappa$ for simplicity. In this case, we say that $\cB_\kappa$ is an \textbf{extension} of $\cA_\kappa$.

\subsubsection{Charged fields associated with automorphisms}\label{charged}

Let $\{(\cA_\kappa, U_\kappa, \Omega_\kappa)\}$ be a conformal net on $S^1$. In this section, we assume that among its irreducible representations there are non-trivial automorphisms.
In the language of tensor categories, the representations of $\cA_\kappa$ contain a (braided) \textbf{pointed tensor subcategory}.

The construction of charged fields and extensions performed in this section is inspired by \cite{DHR69II} and \cite{DR89-1}, \cite{DR90} in the four-dimensional context, \cf \cite[Chapter 3]{Baumgaertel95}, where the unitary tensor categories in question are in addition symmetrically braided (hence dual to a compact group \cite{DR90}). In \cite{DHR69II} the pointed case is investigated, as in our case. See also \cite{Mueger01} for comparison with the one and two-dimensional context.

Fix an interval $I \in \cI$.
Let $\Delta_\kappa$ 
be a choice of mutually inequivalent automorphisms of $\cA_\kappa$ localized in $I$, one for each unitary equivalence class, including the trivial automorphism $\id$ (the defining vacuum representation) of $\cA_\kappa$.
Under the present assumption, the equivalence classes of automorphisms 
form a \emph{discrete (finite or infinite) abelian group} $G$ under class multiplication $[\rho][\sigma] = [\rho\sigma]$
(where $\rho\sigma$ is the composition of automorphisms) and inversion $[\rho]^{-1} = [\rho^{-1}]$.
Let us assume that $G$ is \emph{finitely generated}, whose elements we denote by $g,h,g^{-1},\ldots$, and we denote by $\iota$ the identity element. Let $\kappa(g) \in \Delta_\kappa$, for every $g\in G$, be the previously made choice of automorphisms of $\cA_\kappa$. We assume that $\kappa(\iota) = \id$. Without loss of generality, by suitably changing the localization of the automorphisms inside $I$, we may assume that $\kappa(g) \kappa(h) = \kappa(h) \kappa(g)$ for every $g,h\in G$. Note that we are \emph{not} assuming $\kappa(g)\kappa(h) = \kappa(gh)$, as $\Delta_\kappa$ need \emph{not} be closed under composition and inverses.
For every $g,h\in G$, let $V^{g,h}$ be a unitary intertwiner in $\cA_\kappa(I)$ between $\kappa(gh)$ and $\kappa(g)\kappa(h)$ realizing the equivalence $[\kappa(gh)] = [\kappa(g)][\kappa(h)]$. Namely, $V^{g,h} \kappa(gh)(x) = \kappa(g)(\kappa(h)(x)) V^{g,h}$ for every $x\in\cA_\kappa(I)$. We may assume that $V^{g,h} = V^{h,g}$ and that $V^{g,\iota} = V^{\iota,h} = 1$.

For each $g \in G$, we define the operator $\psi_\kappa^g$ acting on the Hilbert space $\hat \cH_\kappa := \bigoplus_{g\in G} \cH_\kappa^{\kappa(g)}$, where $\cH_\kappa^{\kappa(g)}:=\cH_\kappa$ the vacuum Hilbert space of $\cA_\kappa$, by setting
\begin{align*}
(\psi_\kappa^{g}\Psi)_h :=  V^{g,h} (\Psi)_{gh},
\end{align*}
for every $\Psi \in \hat \cH_\kappa$. We call the $\psi_\kappa^g$ \textbf{charged field operators}. They satisfy the following:
\begin{itemize}
\item Each $\psi_\kappa^g$ is unitary on $\hat \cH_\kappa$ and it maps $\cH_\kappa^{\kappa(h)}$ to $\cH_\kappa^{\kappa(gh)}$ for every $h\in G$.
\item Consider the representation $\hat \kappa(x) := \bigoplus_{h \in G} \kappa(h)(x)$ of $x\in\cA_\kappa(I)$ on $\hat \cH_\kappa$. 
Then
\begin{align*}
(\psi_\kappa^g \hat\kappa(x)\Psi)_h &= V^{g,h}\left(\hat\kappa(x)\Psi\right)_{gh} \\
&= V^{g,h}\kappa(gh)(x)\left(\Psi\right)_{gh} \\
&= \kappa(g) (\kappa(h)(x)) V^{g,h}\left(\Psi\right)_{gh} \\
&= \kappa(h) (\kappa(g)(x)) V^{g,h}\left(\Psi\right)_{gh} \\
&= \kappa(h) (\kappa(g)(x)) \left(\psi_\kappa^g\Psi\right)_{h} \\
&= \left(\hat\kappa(\kappa(g)(x)) \psi_\kappa^g\Psi\right)_h,
\end{align*}
for every $\Psi \in \hat \cH_\kappa$, where we used that $\kappa(h)$ and $\kappa(g)$ commute in the 4th equality.

Therefore, we have the \lq\lq charged field intertwiner" property (in the sense of Doplicher--Roberts \cite{DR72})
together with its conjugate (by substituting $x$ with $\kappa(g)^{-1}(x^*)$):
\begin{align}\label{eq:chargedintertwiner}
\psi_\kappa^g \hat\kappa(x) &= \hat\kappa(\kappa(g)(x)) \psi_\kappa^g, \\
(\psi_\kappa^g)^* \hat\kappa(x) &= \hat\kappa(\kappa(g)^{-1}(x)) (\psi_\kappa^g)^*, \nonumber
\end{align}
for every $x\in\cA_\kappa(I)$, as operators on $\hat \cH_\kappa$.
\item
The covariance cocycles $z_{\kappa(g)}(\gamma)$ for $\gamma\in \overline{\diff(S^1)}$ are unitary charged transporters between $\kappa(g)$ and $\kappa(g)^\gamma$, and they fulfill the following tensoriality property.

\emph{Tensoriality of cocycles}: $z_{\kappa(g)}(\gamma)\kappa(g)(z_{\kappa(h)}(\gamma)) = z_{\kappa(g)\kappa(h)}(\gamma)$, or equivalently written as $z_{\kappa(g)}(\gamma) \otimes z_{\kappa(h)}(\gamma) = z_{\kappa(g) \otimes \kappa(h)}(\gamma)$. 

See \cite[Appendix A]{Longo97}, \cite[Section 7]{DG18}, \cite[Proposition 6.1]{MTW18}.
\item Let $\hat U_\kappa(\gamma) := \bigoplus_{h\in G} U_\kappa^{\kappa(h)}(\gamma)$ be a representation of $\overline{\diff(S^1)}$. 
As in \cite[Theorem 7.7]{DG18}, \cite[Section 6]{MTW18}, we get
\begin{align*}
(\hat U_\kappa(\gamma) \psi_\kappa^g \hat U_\kappa(\gamma)^*\Psi)_h &= U_\kappa^{\kappa(h)}(\gamma) (\psi_\kappa^g \hat U_\kappa(\gamma)^*\Psi)_h \\
&= U_\kappa^{\kappa(h)}(\gamma) V^{g,h} (\hat U_\kappa(\gamma)^*\Psi)_{gh} \\
&= U_\kappa^{\kappa(h)}(\gamma) V^{g,h} U_\kappa^{\kappa(gh)}(\gamma)^*(\Psi)_{gh} \\
&= U_\kappa^{\kappa(h)}(\gamma) U_\kappa^{\kappa(g)\kappa(h)}(\gamma)^*V^{g,h}(\Psi)_{gh} \\
&= U_\kappa^{\kappa(h)}(\gamma)U_\kappa^{\kappa(\iota)}(\gamma)^*U_\kappa^{\kappa(\iota)}(\gamma)  U_\kappa^{\kappa(g)\kappa(h)}(\gamma)^*V^{g,h}(\Psi)_{gh} \\
&= z_{\kappa(h)}(\gamma)^* z_{\kappa(h) \kappa(g)}(\gamma) V^{g,h} (\Psi)_{gh} \\
&= \kappa(h)(z_{\kappa(g)}(\gamma)) V^{g,h} (\Psi)_{gh} \\
&= (\hat \kappa(z_{\kappa(g)}(\gamma)) \psi_\kappa^g \Psi)_{h},
\end{align*}
for every $\Psi \in \hat \cH_\kappa$. We used that $U_\kappa^{\kappa(g)}(\gamma) = \kappa(g)(U_\kappa(\gamma))$ if $\supp \gamma \subset I$ 
in the 4th equality (see Section \ref{representations}), the definition of $z_{\kappa(g)}(\gamma)$ in the 6th equality,
and the tensoriality of the cocycles in the 7th equality.

Therefore, we have the following \lq\lq covariance property" of the charged fields:
for every $\gamma\in \overline{\diff(S^1)}$, as operators on $\hat \cH_\kappa$,
\begin{align}\label{eq:covariancecharged}
\Ad \hat U_\kappa(\gamma)(\psi_\kappa^g) &= \hat \kappa (z_{\kappa(g)}(\gamma)) \psi_\kappa^g, \\
\Ad \hat U_\kappa(\gamma)((\psi_\kappa^g)^*) &= \hat \kappa (z_{\kappa(g)^{-1}}(\gamma)) (\psi_\kappa^g)^*, \nonumber
\end{align}
where we used that
$(\psi_\kappa^g)^*\hat \kappa (z_{\kappa(g)}(\gamma)^*) = \hat \kappa (\kappa(g)^{-1}(z_{\kappa(g)}(\gamma)^*))(\psi_\kappa^g)^*
= \hat \kappa (z_{\kappa(g)^{-1}}(\gamma)) (\psi_\kappa^g)^*$
by the charged field intertwiner property of $(\psi_\kappa^g)^*$ and the tensoriality of cocycle.
\end{itemize}

\begin{remark}
Note that the charged fields constructed above are highly non-canonical, as they depend, \eg, on the choice of $V^{g,h}$ for every $g,h\in G$ (each $V^{g,h}$ is unique up to unitary equivalence). Note also that we are demanding neither $\psi_\kappa^g \psi_\kappa^h = \psi_\kappa^{gh}$, nor ${\psi_\kappa^{g}}^* = \psi_\kappa^{g^{-1}}$ for $g,h \in G$, \ie, the charged field operators need not form a group, \cf \cite[Section 3]{Rehren90}. Nevertheless, by the intertwiner and covariance properties, for every choice of $V^{g,h}$, they give rise to \emph{relatively local} (but possibly neither irreducible, nor factorial) extensions of $\cA_\kappa$.
\end{remark}

Later we shall need the condition $\psi_\kappa^g \psi_\kappa^h = \psi_\kappa^h \psi_\kappa^g$ (established in Proposition \ref{commutingfieldsZZn} below) on the charged fields just defined, in order to produce local extensions of tensor products by \lq\lq gluing" conformal nets. 
Note that $\psi_\kappa^g \psi_\kappa^h = \psi_\kappa^h \psi_\kappa^g$, alone, does not mean locality of the extension. Note also that the group multiplication condition $\psi_\kappa^g \psi_\kappa^h = \psi_\kappa^{gh}$ would imply the commutativity condition $\psi_\kappa^g \psi_\kappa^h = \psi_\kappa^h \psi_\kappa^g$, as $G$ is abelian.

\begin{proposition}\label{commutingfieldsZZn}
In the notation of this section, suppose that $G=\bbZ_n$ whose elements we label by $\{0,1,\ldots, n-1\}$ modulo $n$, and let $\alpha \in \Delta_\kappa$ be an automorphism localized in $I$ such that $\alpha^k \in \Delta_\kappa$ for all $k=0,1,\ldots, n-1$ and $[\alpha^n] = [\id]$.
Choose a unitary $V\in\cA_\kappa(I)$ intertwining $\id$ with $\alpha^n$, \ie, $Vx = \alpha^n(x)V$ for every $x\in\cA_\kappa(I)$.
Let the $V^{g,h}$, $g,h\in G$, be defined by $V^{g,h} := V$ if $g+h \geq n$, where $g,h \in \{0,1,\cdots,n-1\}$ are thought of as representatives,
and by $V^{g,h} := \bb 1$ if $g+h < n$.
Then $\psi_\kappa^g \psi_\kappa^h = \psi_\kappa^h \psi_\kappa^g$
and $(\psi_\kappa^g)^* \psi_\kappa^h = \psi_\kappa^h (\psi_\kappa^g)^*$
for every $g,h \in G$.
\end{proposition}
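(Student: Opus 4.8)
The plan is to show that, for this special choice of intertwiners, every charged field is just a power of the single unitary $W := \psi_\kappa^1$ (the field attached to the generator $1 \in \bbZ_n$), after which both relations become automatic. First I would record the concrete form of the data. Under the identification $\kappa(g) = \alpha^g$ for the representatives $g \in \{0,1,\ldots,n-1\}$ implicit in the hypotheses, one has $\kappa(g)\kappa(h) = \alpha^{g+h}$ while $\kappa(gh) = \alpha^{(g+h)\bmod n}$; since $V$ intertwines $\id$ with $\alpha^n$, it intertwines $\alpha^m$ with $\alpha^{m+n}$ for every $m$, so the prescription $V^{g,h} = \bb1$ when $g+h<n$ and $V^{g,h}=V$ when $g+h\geq n$ is indeed an intertwiner between $\kappa(gh)$ and $\kappa(g)\kappa(h)$. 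The crucial structural point is that every nontrivial intertwiner is the \emph{same} unitary $V$.

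Next I would compute the iterated action of $W$. Unwinding the recursion $(W^g\Psi)_h = V^{1,h}(W^{g-1}\Psi)_{1\oplus h}$ (where $\oplus$ is addition modulo $n$) gives
\[
(W^g\Psi)_h = \Big(\prod_{j=0}^{g-1} V^{1,(j+h)\bmod n}\Big)\,\Psi_{(g+h)\bmod n}.
\]
Now $V^{1,m} = V$ exactly when $m = n-1$, so the factor indexed by $j$ is nontrivial precisely when $j+h \equiv n-1 \pmod n$. As $j$ ranges over the $g$ consecutive integers $0,\ldots,g-1$ with $g<n$, this congruence is satisfied at most once, and it is satisfied (by $j = n-1-h$) if and only if $h \geq n-g$, i.e. $g+h\geq n$. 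Hence the whole product collapses to $V^{g,h}$ (and the order of the factors is irrelevant, since at most one differs from $\bb1$), and we conclude $\psi_\kappa^g = W^g$ for every $g\in\{0,\ldots,n-1\}$. This counting of wrap-arounds in $\bbZ_n$ is the only nontrivial step, and I expect it to be the main obstacle; everything around it is formal.

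With the identification $\psi_\kappa^g = W^g$ in hand, the conclusion is immediate. Since $W$ is unitary (hence normal), the powers $W^m$, $m\in\bbZ$, together with $W^*=W^{-1}$ form a commutative family, so
\[
\psi_\kappa^g\psi_\kappa^h = W^{g}W^{h} = W^{g+h} = W^{h}W^{g} = \psi_\kappa^h\psi_\kappa^g,
\qquad
(\psi_\kappa^g)^*\psi_\kappa^h = (W^*)^{g}W^{h} = W^{h}(W^*)^{g} = \psi_\kappa^h(\psi_\kappa^g)^*,
\]
which are exactly the two asserted relations.

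Should one prefer to avoid the identification $\psi_\kappa^g = W^g$, the same relations reduce, upon writing out the components, to the cocycle-type identities $V^{g,k}V^{h,g\oplus k} = V^{h,k}V^{g,h\oplus k}$ and its conjugate. Because each factor is either $\bb1$ or $V$ and powers of the single unitary $V$ commute, these follow by verifying that the number of factors equal to $V$ agrees on both sides; in the first identity this common number is the number of wrap-arounds $\lfloor (g+h+k)/n\rfloor$, and the conjugate identity is settled by an analogous (if slightly longer) case count. The power-of-$W$ route is cleaner, so I would present that.
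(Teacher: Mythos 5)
Your proposal is correct and follows essentially the same route as the paper: the paper's proof also verifies that $V^{g,h}$ intertwines $\kappa(g+h)$ with $\kappa(g)\kappa(h)$ (by cases on $g+h\ge n$, using $\kappa(g)=\kappa(1)^g$), observes that $\psi_\kappa^1$ is a shift followed by $\bb1$ or $V$ with $\psi_\kappa^g=(\psi_\kappa^1)^g$, and concludes commutation from the fact that powers of a single unitary and its adjoint commute. The only difference is that the paper asserts $\psi_\kappa^g=(\psi_\kappa^1)^g$ as straightforward, while you supply the wrap-around counting argument that justifies it.
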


\begin{proof}
Only in this proof, we use the additive notation $g+h$ for the group operation.

We first note that with our choice $V^{g,h} \kappa(g+h) = \kappa(g)\kappa(h)V^{g,h}$.
By taking the representative $g,h \in \{0,1,\ldots, n-1\}$,
this can be checked by cases depending on whether $g+h \ge n$ or $g+h < n$
and using that $\kappa(g) = \kappa(1)^g$.

From the definition, it is straightforward that
$\psi^1_\kappa$ (where the upper index $1$ is the group element, not the exponent)
is the product of a shift on $\bigoplus_{g\in G} \cH_\kappa^{\kappa(g)} = \bigoplus_{g\in G} \cH_\kappa$
as a Hilbert space followed by $1$ or $V$, and that $\psi_\kappa^g = (\psi_\kappa^1)^g$,
again with $g \in \{0,1,\ldots, n-1\}$.
From this it is clear that $\psi_\kappa^g, (\psi_\kappa^g)^*$ and $\psi_\kappa^h$ commute.
\end{proof}

Now let $G$ be a finitely generated abelian group.
Then it is isomorphic to a finite product of cyclic groups $\bbZ_n$ or $\bbZ$.
Let us assume that $\Delta_\kappa$ is a choice of mutually inequivalent automorphisms of $\cA_\kappa$ localized in $I$
and their equivalence classes form $G$ as a group.
For each of the finite cyclic groups $\bbZ_n$ in $G$, we can choose a generator in $\Delta_\kappa$
localized in a smaller interval $\underline I \subset I$.
For each of these generators, we choose $V^{h,g} \in \cA(\underline I)$ as in Proposition \ref{commutingfieldsZZn},
constructing the operator $\psi^h_\kappa$ for such $h, g \in \bbZ_n$.
We can make this choice for each copy of $\bbZ_n$ by choosing finitely many mutually disjoint smaller intervals inside $I$, 
obtaining commuting operators $\psi^h_\kappa, \psi^g_\kappa$ where $g,h$ are in different finite cyclic groups in $G$.
For each of the copies of the infinite cyclic group $\bbZ$, we can take a generator $1$ and we may assume that
$\kappa(g) = \kappa(1)^g$ for all $g \in \bbZ$,
thus we can set $V^{g,h} := \bb 1$ and $\psi^g_\kappa \psi^h_\kappa = \psi^{gh}_\kappa$ for every $g,h \in \bbZ$.
Altogether, we obtain charged fields $\psi^h_\kappa$ for every $h \in G$, satisfying $\psi^h_\kappa\psi^g_\kappa = \psi^g_\kappa\psi^h_\kappa$.

\begin{remark}
 Even if $G$ is not finitely generated, if there is a choice of $\Delta_\kappa$
 such that $\kappa(g)\kappa(h) = \kappa(gh)$, then one can construct
 $\psi^h_\kappa$ satisfying $\psi^g_\kappa \psi^h_\kappa = \psi^{gh}_\kappa = \psi^h_\kappa \psi^g_\kappa$ as in \cite[Section 6]{MTW18},
 thus $(\psi^g)^* = \psi^{g^{-1}}$  and the rest of our construction works as well.
 This happens with the $\rU(1)$-current net with $G= \bbQ$ or $\bbR$.
\end{remark}

\subsection{Two-dimensional conformal nets}\label{net2d}

In \cite{KL04-2}, it was shown that any Haag--Kastler net on $\bbR^{1+1}$ that is locally conformally covariant extends
to the Einstein cylinder, using a slight modification of the conformal spin-statistics theorem \cite{GL96}, \cite[Theorem A.5]{MT18}.
To state this result precisely, let us start with a net on $\bbR^{1+1}$.
We call a triple $(\cA, U, \Omega)$ a \textbf{conformal net on $\bbR^{1+1}$}
if $\cA$ assigns to each open region $O \subset \bbR^{1+1}$ a von Neumann algebra $\cA(O)$ on a Hilbert space $\cH$,
$U$ is a unitary projective representation of $\overline{\diff(S^1)}\times \overline{\diff(S^1)}$ and $\Omega \in \cH$ satisfying
\begin{enumerate}[{(2dCN}1{)}]
 \item \textbf{Isotony: }if $O_1\subset O_2$, then $\cA(O_1)\subset\cA(O_2)$. \label{2dcn:isotony}
 \item \textbf{Locality:} if $O_1$ and $O_2$ are spacelike separated, then $\cA(O_1)\subset\cA(O_2)'$. \label{2dcn:locality}
 \item\textbf{Diffeomorphism covariance:}
 For a bounded region $O \subset \bbR^{1+1}$, there is a neighborhood $\cU$ of the unit element of $\overline{\diff(S^1)}\times \overline{\diff(S^1)}$
 such that if $\gamma \in \cU$ then $\gamma\cdot O \subset \bbR^{1+1}$ and
 \[
  U(\gamma)\cA(O)U(\gamma)^*=\cA(\gamma\cdot O).
 \]
 Furthermore, if $\supp \gamma$ is disjoint from $O$, then $\Ad U(\gamma)(x) = x$ for $\cA(O)$.
 \label{2dcn:diff}
 \item\textbf{Positivity of energy:}
 the restriction of $U$ to the translation subgroup $\bbR^2 \subset \overline{\Mob} \times \overline{\Mob}$
 has the joint spectrum contained in the closed forward light cone
 $\overline {V_+}=\{(a_0,a_1)\in\bbR^{1+1}: a_0^2-a_1^2\geq0, a_0\geq0\}$. \label{2dcn:positiveenergy}
 \item \textbf{Vacuum and the Reeh-Schlieder property: }there exists a unique (up to a phase)
 vector $\Omega \in\cH$ such that $U(g)\Omega=\Omega$ for $g\in \overline{\Mob}$ 
 and is cyclic for any local algebra, namely $\overline{\cA(O)\Omega}=\cH$. \label{2dcn:vacuum}

 \setcounter{2dCN}{\value{enumi}}
\end{enumerate}
If there is a conformal net on $\bbR^{1+1}$ as above,
we can consider it as a net on the diamond $D_0 = (-\pi, \pi) \times (-\pi, \pi)$ as in Section \ref{cylinder}.
With this identification, the group $\overline{\diff(S^1)}\times \overline{\diff(S^1)}$
acts on $\bbR \times \bbR$, and we can extend the net $\cA$ by covariance.
However, this extension is not very natural.
Indeed, it often happens that the conformal net extends to $\cE$ rather than to $\bbR \times \bbR$.
We say that $(\cA, U, \Omega)$ is a \textbf{conformal net on $\cE$}
if $\cA$ is defined for regions in $\cE$, covariant with respect to $U$
along the natural action of $\overline{\diff(S^1)}\times \overline{\diff(S^1)}$
and local in the sense that $\cA(O_1)$ and $\cA(O_2)$ commute whenever
there is a translation $\gamma$ of the cylinder such that $D_0$ contains $\gamma \cdot O_1, \gamma \cdot O_2$
and they are spacelike there.
A sufficient condition is given in \cite[Theorem A.5]{MT18} (the Bisognano--Wichmann property for wedges),
however, it is not immediate to check it in the examples we construct.
Instead, the following is easier to check and gives immediately the desired extension.

\begin{proposition}\label{pr:2dcylinder}
 Let $(\cA, U, \Omega)$ be a conformal net on $\bbR^{1+1}$.
 Assume furthermore that $U(R_{2\pi}\times R_{-2\pi}) = \bb1$ (in $\rP\rU(\cH)$), that is,
 the spacelike $2\pi$-rotation is trivial.
 Then $(\cA, U, \Omega)$ extends to a conformal net on $\cE$.
\end{proposition}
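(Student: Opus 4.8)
The plan is to follow the strategy of the proof of Proposition \ref{pr:1dcircle}, now in two dimensions. First I would extend $\cA$ from the diamond $D_0 = (-\pi,\pi)\times(-\pi,\pi)$ to the whole universal cover $\bbR\times\bbR$ by covariance, as sketched before the statement: given a bounded region $O \subset \bbR\times\bbR$, one chooses $\gamma \in \overline{\diff(S^1)}\times\overline{\diff(S^1)}$ with $\gamma\cdot O \subset D_0$ and sets $\cA(O) := \Ad U(\gamma)^{-1}(\cA(\gamma\cdot O))$, using the local diffeomorphism covariance and decomposing $\gamma$ into a product of transformations close to the identity so that $U(\gamma)$ makes sense and the definition is independent of the chosen $\gamma$. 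The key observation is then that the hypothesis $U(R_{2\pi}\times R_{-2\pi}) = \bb1$ in $\rP\rU(\cH)$ forces this extension to be invariant under the deck transformation group $\mathfrak R$.

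Concretely, the spacelike $2\pi$-rotation $R_{2\pi}\times R_{-2\pi}$ acts on the universal cover $\bbR\times\bbR$ as a generator of $\mathfrak R$ (explicitly $(a_-,a_+)\mapsto(a_-+2\pi,a_+-2\pi)$), and covariance gives $\cA((R_{2\pi}\times R_{-2\pi})\cdot O) = \Ad U(R_{2\pi}\times R_{-2\pi})(\cA(O))$. Since $U(R_{2\pi}\times R_{-2\pi}) = \bb1$ in $\rP\rU(\cH)$, its adjoint action is trivial, so $\cA((R_{2\pi}\times R_{-2\pi})\cdot O) = \cA(O)$, and iterating shows that the net on $\bbR\times\bbR$ is $\mathfrak R$-periodic. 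Hence it factors through the quotient and defines a net on $\cE = \bbR^{1+1}/\mathfrak R$. Moreover, because $U(\mathfrak R) = \bb1$ in $\rP\rU(\cH)$ and $\mathfrak R$ is central, the projective representation $U$ descends to a projective representation of $\scC = \overline{\diff(S^1)}\times\overline{\diff(S^1)}/\mathfrak R$, which supplies the covariance of the net on $\cE$ along the natural $\scC$-action.

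It then remains to verify the locality condition on $\cE$, isotony, positivity of energy and the vacuum properties being inherited from $\bbR^{1+1}$ (for positivity one invokes the equivalence of rotation- and translation-positivity, \cite[Lemma 3.1]{Weiner06}, on each lightray factor, so that the conformal Hamiltonian $L_0^{\rL}+L_0^{\rR}$ is positive, exactly as in the proof of Proposition \ref{pr:1dcircle}). For locality, suppose $O_1, O_2 \subset \cE$ admit a cylinder translation $\gamma$ with $\gamma\cdot O_1, \gamma\cdot O_2 \subset D_0$ and spacelike separated there; by $\scC$-covariance, $\cA(O_i) = \Ad U(\gamma)^{-1}(\cA(\gamma\cdot O_i))$, and the two-dimensional locality of the original net yields $[\cA(\gamma\cdot O_1),\cA(\gamma\cdot O_2)] = 0$, a relation preserved under the automorphism $\Ad U(\gamma)^{-1}$. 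I expect the main obstacle to be precisely the care needed in the covariant extension and this locality check: since diffeomorphism covariance holds only locally, one must realize $\gamma$ as a product of transformations close to the identity, chain the covariance relations, and check along the way that $\cA(O)$ is independent of the chosen $\gamma$. Once this bookkeeping is in place, the remaining axioms, as in the one-dimensional case, follow easily.
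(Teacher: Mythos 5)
Your proposal is correct and follows essentially the same route as the paper's own proof: the triviality of $U(R_{2\pi}\times R_{-2\pi})$ makes the covariantly extended net on $\bbR\times\bbR$ periodic under $\mathfrak R$, so it descends to $\cE$ together with the representation (which factors through $\scC$), and locality on $\cE$ is transported back to $D_0$ by covariance. The paper states this more tersely but with the identical mechanism, so your additional bookkeeping (well-definedness of the covariant extension, descent of $U$) is just a more explicit rendering of the same argument.
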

\begin{proof}
 This is parallel to Proposition \ref{pr:1dcircle}.
 
 As $U(R_{2\pi}\times R_{-2\pi}) = \bb1$, the representation $U$
 factors through the group $\scC$ (see Section \ref{cylinder}).
 Furthermore, as $U(R_{2\pi}\times R_{-2\pi}) = \bb1$,
 for any region $O$, $\cA(O)$ and $\cA(R_{2\pi}\times R_{-2\pi}\cdot O)$ coincide.
 Therefore, we can identify any point $x$ on $\bbR\times \bbR$ and $R_{2\pi}\times R_{-2\pi}\cdot x$,
 and obtain the Einstein cylinder $\cE$.
 Covariance follows by definition, and locality in the sense above follows from
 covariance and locality in $D_0$.
\end{proof}
In the situation of Proposition \ref{pr:2dcylinder},
our net $(\cA, U, \Omega)$ is equivalent to a local conformal net in the sense of \cite[Section 2]{KL04-2}.

Rehren introduced the maximal chiral nets of a two-dimensional conformal net $(\cA, U, \Omega)$ that extends to
$\cE$ \cite{Rehren00}: let $I_\rL \times I_\rR$ be a diamond in $\bbR^{1+1}$.
Define $\cA^{\text{max}}_\rL(I_\rL) = \cA(I_\rL \times I_\rR) \cap U(\iota \times \overline{\diff(S^1)})'$,
where $\iota$ is the unit element of $\overline{\diff(S^1)}$.
Then $\cA^{\text{max}}_\rL$ is \textit{a priori} a conformal net on $\bbR$ defined on the Hilbert subspace
$\cH_\rL = \overline{\cA^{\text{max}}_\rL\Omega}$, and
satisfies the condition of Proposition \ref{pr:1dcircle}, therefore, it extends to $S^1$.
Similarly, one can define $\cA^{\text{max}}_\rR$ on $\cH_\rR$.
The original full net $\cA$ contains the tensor product $\cA^{\text{max}}_\rL\otimes \cA^{\text{max}}_\rR$
on the subspace $\cH_\rL \otimes \cH_\rR \subset \cH$ (more precisely, there is a natural injective homomorphism
from $\cA^{\text{max}}_\rL\otimes \cA^{\text{max}}_\rR$ into $\cA$).
In this sense, a generic two-dimensional conformal net $\cA$ is an extension of
the tensor product net $\cA^{\text{max}}_\rL\otimes \cA^{\text{max}}_\rR$.

\section{One-dimensional gluing from trivial total braiding}\label{net-trivialtotalbraiding}

In this Section, from a family of conformal nets on $S^1$ satisfying certain conditions, we construct \emph{local} extensions of their tensor products.
Let us start with a \emph{finite} collection conformal nets $\{(\cA_\kappa, U_\kappa, \Omega_\kappa)\}_{\kappa \in K}$ on $S^1$ labelled by $\kappa \in K$, $|K|<\infty$.
We fix an interval $I \in \cI$.
We assume that each $\cA_\kappa$ admits a (not necessarily finite) collection $\Delta_\kappa$ of mutually inequivalent and commuting automorphisms localized in $I$, containing the trivial automorphism $\id_\kappa$, and whose fusion rules (up to unitary equivalence) are isomorphic to the \emph{same} abelian group $G$, as in Section \ref{charged}.
This group $G$ should be either a finitely generated abelian group, or we assume that we can choose elements in $\Delta_\kappa$ and charged fields $\psi_\kappa^h$ as in Section \ref{charged}, fulfilling the conclusions of Proposition \ref{commutingfieldsZZn} (for each $\kappa \in K$).
Denote also by $\kappa : G \to \Delta_\kappa$ a bijection (not necessarily an isomorphism, as each $\Delta_\kappa$ need not be closed under composition and inverses), such that $\kappa(\iota) = \id_\kappa$, for each $\kappa\in K$. Here $\kappa$ is used as an index as well as a map, with a slight abuse of notations, hence $\kappa(g)$ is an automorphism of $\cA_\kappa$.
Furthermore, we assume that, for all $g_1, g_2, g \in G$,
\begin{itemize}
\item $\prod_{\kappa} \epsilon^\pm_{\kappa(g_1), \kappa(g_2)} = 1$,
\item $\prod_{\kappa} \epsilon^\pm_{\kappa(g_1)^{-1}, \kappa(g_2)} = 1$,
\item $\sum_\kappa D_{\kappa, \kappa(g)} \in \bbZ$,
\end{itemize}
where the choice of $\pm$ above is common for all $\kappa$, and we denoted by the same symbol $\epsilon^{\pm}_{\kappa(g_1), \kappa(g_2)}$ the phase multiple of $\bb 1$ (the trivial intertwiner between $\kappa(g_1) \kappa(g_2)$ and itself) associated with the braiding of $\kappa(g_1)$ and $\kappa(g_2)$. This makes sense by the commutativity assumption on $\Delta_\kappa$, namely $\kappa(g_1) \kappa(g_2) = \kappa(g_2) \kappa(g_1)$.

Now we construct a conformal net $\cA_K$ on $S^1$ as follows.
\begin{itemize}
 \item The Hilbert space of our net is $\cH_K := \bigoplus_{g \in G} \bigotimes_{\kappa \in K} \cH_\kappa^{\kappa(g)}$.
 On this space, we let any operator of the form $\bigotimes_\kappa x_\kappa \in \bigotimes_\kappa \cA_\kappa(I)$
 act\footnote{To clarify the notation once more, $\kappa$ in $x_\kappa$ and $\Delta_\kappa$ works as an index to distinguish
 different tensor components, while $\kappa$ in $\kappa(g)$ indicates the choice of automorphism in $\Delta_\kappa$ labelled by $g\in G$. 
 Therefore, $\kappa(g)(x_\kappa)$ is a bounded operator on $\cH_\kappa^g$.
 In general, $\kappa$ works as an index except for $\kappa(g)$. \label{fn:kappaindex}}
 as $\bigoplus_{g \in G} \bigotimes_\kappa \kappa(g)(x_\kappa)$.
 Denote this representation of the tensor product net $\bigotimes_\kappa \cA_\kappa$ by $K$.
 \item We also consider an auxiliary tensor product space
 $\hat \cH_K := \bigotimes_{\kappa \in K} \bigoplus_{g \in G} \cH_\kappa^{\kappa(g)} \supset \cH_K$.
 The representation of the element $\bigotimes_\kappa x_\kappa$ on this space
 is denoted by $\hat K(\bigotimes_\kappa x_\kappa)$.
 We have $\hat K(\bigotimes_\kappa x_\kappa)|_{\cH_K} = K(\bigotimes_\kappa x_\kappa)$.
 \item The vacuum vector of the net $\cA_K$ will be $\Omega_K := \bigotimes_\kappa \Omega_\kappa \in \bigotimes_\kappa \cH_{\kappa}^{\kappa(\iota)}$, belonging to the $\iota$-th component of $\cH_K$.
 \item The covariance is given by $U_K(\gamma) := \bigoplus_{g \in G} \bigotimes_\kappa U_\kappa^{\kappa(g)}(\gamma)$,
 which is a unitary multiplier representation of $\overline{\diff(S^1)}$.
 By the condition that $\sum_\kappa D_{\kappa, \kappa(g)} \in \bbZ$, $U_K$ satisfies the assumptions of Proposition \ref{pr:1dcircle}
 and is a
 unitary multiplier representation of $\overline{\diff(S^1)}$.
 Thus by ignoring the phase, it is a unitary projective representation of $\overline{\diff(S^1)}$.
 This also extends naturally to $\hat U_K$ on $\hat \cH_K$.
 \item For each element $g \in G$, we introduce the charged field operator $\psi_K^{g}$ as follows.
 Our Hilbert space $\cH_K = \bigoplus_{g \in G} \bigotimes_\kappa \cH_\kappa^{\kappa(g)}$ 
 is a ``diagonal'' subspace of the auxiliary Hilbert space $\hat \cH_K = \bigotimes_\kappa \hat\cH_\kappa = \bigotimes_\kappa \bigoplus_{g\in G} \cH_\kappa^{\kappa(g)}$
 in a natural way, where recall that $\hat\cH_\kappa$ was defined and used in Section \ref{charged}. 
 Let
 \begin{align*}
  \psi_K^g := \bigotimes_\kappa \psi_\kappa^g,
 \end{align*}
 where $\psi_\kappa^g$ are charged fields acting as in Section \ref{charged} on $\hat \cH_\kappa$,
 hence $\psi_K^g$ acts on $\bigotimes_\kappa \hat\cH_\kappa$
 but preserves $\cH_K$.
Now, for the $\psi_\kappa^g$ we make the choice of the $V^{g,h}$ leading to the conclusions of Proposition \ref{commutingfieldsZZn} and comments thereafter. Namely, we choose charged fields such that $\psi_\kappa^g \psi_\kappa^h = \psi_\kappa^h \psi_\kappa^g$ and $(\psi_\kappa^g)^* \psi_\kappa^h = \psi_\kappa^h (\psi_\kappa^g)^*$ for every $g,h \in G$.

 \item From the charged intertwiner property \eqref{eq:chargedintertwiner}, it follows that 
 \begin{align}\label{eq:chargedfieldchiral}
  \psi_K^g \hat K(\bigotimes_\kappa x_\kappa)
  &= \hat K\left(\bigotimes_\kappa \kappa(g)(x_\kappa)\right) \psi_K^g, \\
  (\psi_K^g)^* \hat K(\bigotimes_\kappa x_\kappa)
  &= \hat K\left(\bigotimes_\kappa \kappa(g)^{-1}(x_\kappa)\right) (\psi_K^g)^*.
 \end{align}

 \item The local algebras are given as follows.
 For the interval $I$ fixed above, we set 
 $$\cA_K(I) := \hat K\left(\bigotimes_\kappa \cA_\kappa(I)\right) \vee \{\psi_K^g, (\psi_K^g)^*\}_{g\in G}.$$
 For any other interval $\gamma \cdot I \subset \bbR$ given by the action of a diffeomorphism $\gamma$,
 we set $\cA_K(\gamma \cdot I) := \Ad U_K(\gamma)(\cA_K(I))$.
 We will show below that this net is well-defined, diffeomorphism covariant, local, and it extends to $S^1$.
\end{itemize}

\begin{theorem}\label{th:1dext}
 Let $\{(\cA_\kappa, U_\kappa, \Omega_\kappa)\}_{\kappa \in K}$ as above and
 assume that $\{\Delta_\kappa\}_{\kappa \in K}$, their braiding and $D_{\kappa, \kappa(g)}$ satisfy
 the conditions stated at the beginning of this section. 
 Then the net $(\cA_K, U_K, \Omega_K)$ satisfies (1dCN\ref{1dcn:isotony})--(1dCN\ref{1dcn:vacuum})
 and the condition of Proposition \ref{pr:1dcircle}, hence provides a conformal net on $S^1$, extending the chiral tensor product net $\bigotimes_\kappa \cA_\kappa$.
\end{theorem}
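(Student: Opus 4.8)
The plan is to verify the axioms (1dCN\ref{1dcn:isotony})--(1dCN\ref{1dcn:vacuum}) for $(\cA_K, U_K, \Omega_K)$ as a net on $\bbR$ and then invoke Proposition \ref{pr:1dcircle}. I would begin with the structural data not involving disjointness. First, $U_K = \bigoplus_{g}\bigotimes_\kappa U_\kappa^{\kappa(g)}$ is a unitary multiplier representation of $\overline{\diff(S^1)}$, since each $U_\kappa^{\kappa(g)}$ carries the same cocycle as $U_\kappa$, so the factors over $\kappa$ combine to a single cocycle independent of $g$ and the direct sum over $g$ is consistent. On the rotation subgroup $U_\kappa^{\kappa(g)}(R_{2\pi})$ is the scalar $e^{2\pi i D_{\kappa,\kappa(g)}}$, because the spectrum of $L_0^{\kappa(g)}$ lies in $D_{\kappa,\kappa(g)} + (\bbN\cup\{0\})$; hence on the $g$-component $U_K(R_{2\pi})$ acts as $e^{2\pi i\sum_\kappa D_{\kappa,\kappa(g)}}\bb1 = \bb1$ by the hypothesis $\sum_\kappa D_{\kappa,\kappa(g)}\in\bbZ$. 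This both verifies the hypothesis of Proposition \ref{pr:1dcircle} and shows $U_K$ factors through $\diff(S^1)$ as a projective representation. Positivity of energy (1dCN\ref{1dcn:positiveenergy}) follows since each $U_\kappa^{\kappa(g)}$ has positive energy: on each component $L_0^K = \sum_\kappa L_0^{\kappa(g)} \ge \sum_\kappa D_{\kappa,\kappa(g)}\ge 0$, and by the equivalence of rotation- and translation-positivity (\cite[Theorem 3.8]{Weiner06}) the translation generator is positive. The M\"obius invariance of $\Omega_K = \bigotimes_\kappa\Omega_\kappa$ is immediate on the $\iota$-component, where $U_K|_{\overline{\Mob}} = \bigotimes_\kappa U_\kappa|_{\overline{\Mob}}$; uniqueness reduces to the observation that a $\overline{\Mob}$-invariant vector must lie in the $\iota$-component (for $g\neq\iota$ every factor $\kappa(g)$ is a nontrivial sector, so $\bigotimes_\kappa U_\kappa^{\kappa(g)}|_{\overline{\Mob}}$ carries no invariant vector) and there coincides with $\bigotimes_\kappa\Omega_\kappa$.

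Next I would treat covariance (1dCN\ref{1dcn:diff}), which also yields well-definedness of $\cA_K(\gamma\cdot I)$. Tensoring \eqref{eq:covariancecharged} over $\kappa$ gives $\Ad \hat U_K(\gamma)(\psi_K^g) = \hat K\big(\bigotimes_\kappa z_{\kappa(g)}(\gamma)\big)\psi_K^g$ (restricting to $\cH_K$), while $\Ad \hat U_K(\gamma)$ sends $\hat K(\bigotimes_\kappa x_\kappa)$ to $\hat K(\bigotimes_\kappa \Ad U_\kappa(\gamma)(x_\kappa))$. If $\gamma$ fixes $I$ setwise, then $z_{\kappa(g)}(\gamma)$ is a charge transporter between $\kappa(g)$ and $\kappa(g)^\gamma$, both localized in $I$, hence lies in $\cA_\kappa(I)$ by Haag duality; so $\Ad U_K(\gamma)$ maps the generators of $\cA_K(I)$ into $\cA_K(I)$, and applying this to $\gamma^{-1}$ gives $\Ad U_K(\gamma)(\cA_K(I)) = \cA_K(I)$, making the prescription $\cA_K(\gamma\cdot I) := \Ad U_K(\gamma)(\cA_K(I))$ well-defined. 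If instead $\supp\gamma$ is disjoint from $I$, then $z_{\kappa(g)}(\gamma) = \bb1$ and $\Ad U_\kappa(\gamma)$ fixes $\cA_\kappa(I)$, giving the triviality clause of (1dCN\ref{1dcn:diff}). Isotony (1dCN\ref{1dcn:isotony}) uses the same mechanism: for $I_1\subset I_2$ with $\gamma_i\cdot I = I_i$, observables respect isotony of each $\cA_\kappa$, and the charged field of $\cA_K(I_1)$ differs from that of $\cA_K(I_2)$ by $\hat K\big(\bigotimes_\kappa z_{\kappa(g)}(\gamma_1)z_{\kappa(g)}(\gamma_2)^*\big)$, a product of charge transporters between automorphisms localized inside $I_2$, hence an element of $\hat K(\bigotimes_\kappa \cA_\kappa(I_2))\subset\cA_K(I_2)$ by Haag duality.

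The heart of the argument, and the step I expect to be the main obstacle, is locality (1dCN\ref{1dcn:locality}). For disjoint $I_1, I_2$, the observable--observable commutation is inherited from locality of each $\cA_\kappa$ through the representation $\hat K$, and the observable--charged commutation is the relative locality of the charged fields: a field localized in $I_2$ commutes with $\hat K(\bigotimes_\kappa\cA_\kappa(I_2'))$, which contains the observables in $I_1$, since $\kappa(g)$ acts trivially on $\cA_\kappa(I_2')$ by the intertwiner property \eqref{eq:chargedintertwiner}. The essential point is the charged--charged commutation of $\psi_K^{g_1}$ localized in $I_1$ and $\psi_K^{g_2}$ localized in $I_2$. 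In a single factor, exchanging two charged fields localized in disjoint intervals produces the DHR statistics operator, which for the pointed category is the scalar $\epsilon^{\pm}_{\kappa(g_1),\kappa(g_2)}$, the sign fixed by whether $I_1$ lies to the left or right of $I_2$; the within-interval commutation $\psi_\kappa^{g_1}\psi_\kappa^{g_2} = \psi_\kappa^{g_2}\psi_\kappa^{g_1}$ of Proposition \ref{commutingfieldsZZn} is precisely what reduces the disjoint-interval exchange to this pure phase. Taking the tensor product over $\kappa$ multiplies these phases, and the hypothesis $\prod_\kappa\epsilon^\pm_{\kappa(g_1),\kappa(g_2)} = 1$ forces $\psi_K^{g_1}$ and $\psi_K^{g_2}$ to commute; the conjugate hypothesis $\prod_\kappa\epsilon^\pm_{\kappa(g_1)^{-1},\kappa(g_2)} = 1$ handles the exchange of $(\psi_K^{g_1})^*$ with $\psi_K^{g_2}$ (and, by adjoints, of $(\psi_K^{g_1})^*$ with $(\psi_K^{g_2})^*$). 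Since all generators of $\cA_K(I_1)$ and $\cA_K(I_2)$ then commute, so do the generated von Neumann algebras. Carefully tracking the braiding phase in the single-net computation---its dependence on the localizations and the role of the covariance cocycles $z_{\kappa(g)}(\gamma)$ in the transported fields---is the delicate part, and where I would spend the most effort.

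Finally, for the vacuum and Reeh--Schlieder property (1dCN\ref{1dcn:vacuum}) I would show $\overline{\cA_K(I)\Omega_K} = \cH_K$ componentwise. Using \eqref{eq:chargedfieldchiral}, $\hat K(\bigotimes_\kappa y_\kappa)\psi_K^g\Omega_K = \psi_K^g\,\hat K\big(\bigotimes_\kappa\kappa(g)^{-1}(y_\kappa)\big)\Omega_K$; as $y_\kappa$ ranges over $\cA_\kappa(I)$, the vectors $\hat K(\bigotimes_\kappa\kappa(g)^{-1}(y_\kappa))\Omega_K$ are dense in the $\iota$-component by Reeh--Schlieder for each $\cA_\kappa$, and $\psi_K^g$ maps the $\iota$-component unitarily onto another component, these components exhausting $\cH_K$ as $g$ runs over $G$. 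Hence $\cA_K(I)\Omega_K$ is dense. Assembling the above, $(\cA_K, U_K, \Omega_K)$ is a conformal net on $\bbR$ satisfying the hypothesis of Proposition \ref{pr:1dcircle}, which then delivers the extension to $S^1$ and completes the proof.
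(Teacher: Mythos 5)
Your proposal is correct and follows essentially the same route as the paper: the same cocycle identity $\Ad U_K(\gamma)(\psi_K^g) = \hat K\bigl(\bigotimes_\kappa z_{\kappa(g)}(\gamma)\bigr)\psi_K^g$ underlies well-definedness and covariance, locality reduces exactly as you describe to $z_{\kappa(g_2)}(\gamma)^*\kappa(g_1)(z_{\kappa(g_2)}(\gamma)) = \epsilon^{\pm}_{\kappa(g_1),\kappa(g_2)}$ combined with the commutation from Proposition \ref{commutingfieldsZZn} and the hypotheses $\prod_\kappa \epsilon^{\pm}_{\kappa(g_1),\kappa(g_2)} = 1$, $\prod_\kappa \epsilon^{\pm}_{\kappa(g_1)^{-1},\kappa(g_2)} = 1$, cyclicity is obtained by shifting components with $\psi_K^g$, and the passage to $S^1$ uses $\sum_\kappa D_{\kappa,\kappa(g)}\in\bbZ$ via Proposition \ref{pr:1dcircle}. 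The only differences are cosmetic or supplementary (you verify isotony and vacuum uniqueness explicitly, and use Haag duality directly where the paper decomposes $\gamma$ into compactly supported diffeomorphisms), so this matches the paper's proof.
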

\begin{proof}
 Although the calculations will be similar to those of \cite[Section 6]{MTW18},
 the setting is different because we construct a chiral net on $S^1$, instead of two-dimensional on $\cE$. Let us present the proofs to exhibit why this works for tensor products of $|K|$ factors.
 
 Let us first compute on the auxiliary Hilbert space $\bigotimes_\kappa \hat\cH_\kappa$ with $\hat U_K(\gamma) := \bigotimes_\kappa \hat U_\kappa(\gamma)$, where $\hat U_\kappa(\gamma) = \bigoplus_{h\in G} U_\kappa^{\kappa(h)}(\gamma)$ was defined and used in Section \ref{charged}. Using the covariance property of charged fields \eqref{eq:covariancecharged} (\ie, the formula for the adjoint action of $\hat U_\kappa(\gamma)$ on $\psi_\kappa^g$ acting on each $\hat\cH_\kappa$), we get
 \begin{align*}
  \Ad \hat U_K(\gamma)(\psi_K^g) &= \hat K \left(\bigotimes_\kappa z_{\kappa(g)}(\gamma)\right)\psi_K^g, \\
  \Ad \hat U_K(\gamma)((\psi_K^g)^*) &= \hat K \left(\bigotimes_\kappa z_{\kappa(g)^{-1}}(\gamma)\right)(\psi_K^g)^*.
 \end{align*} 
 Furthermore, we also have $\psi_K^g\psi_K^h = \psi_K^h \psi_K^g$, $(\psi_K^g)^*\psi_K^h = \psi_K^h (\psi_K^g)^*$
 by the commutation condition we imposed on the charged fields $\psi_\kappa^g$, $\psi_\kappa^h$, and $(\psi_\kappa^g)^*$.
 
 We have to make sure that $\cA_K(\gamma \cdot I)$ is well-defined, that is, the definition does not depend on the choice of $\gamma$.
 This question reduces to whether $\Ad U_K(\gamma)(\cA_K(I)) = \cA_K(I)$ if $\gamma \cdot I = I$.
 This is true because we can check the inclusion for $\gamma$ with compact support and generating elements of $\cA_K(I)$:
 \begin{align*}
  \Ad U_K(\gamma)\left( K\left(\bigotimes_\kappa x_\kappa\right)\right) &=  K\left(\bigotimes_\kappa \Ad U_\kappa(\gamma)(x)\right) \in \cA_K(I), \\
  \Ad \hat U_K(\gamma)(\psi_K^g) &= \hat K\left(\bigotimes_\kappa z_{\kappa(g)}(\gamma)\right)\psi_K^g \in \cA_K(I), \\
  \Ad \hat U_K(\gamma)((\psi_K^g)^*) &= \hat K \left(\bigotimes_\kappa z_{\kappa(g)^{-1}}(\gamma)\right)(\psi_K^g)^* \in \cA_K(I).
 \end{align*}
 because $z_{\kappa(g)}(\gamma) \in \cA_\kappa(I)$. As a general $\gamma$ can be written as a product of diffeomorphisms with compact supports,
 this gives the inclusion $\Ad U_K(\gamma)(\cA_K(I)) \subset \cA_K(I)$, and the converse inclusion is obtained by applying this to $\gamma^{-1}$.

 With this well-definedness, the first part of covariance (1dCN\ref{1dcn:diff}) follows by definition.
 Concerning the second part, if $\supp \gamma$ is disjoint from $I$, then $U_K(\gamma)$ commutes
 both with $\hat K\left(\bigotimes_\kappa \cA_\kappa(I)\right)$ by covariance of $\cA_\kappa$.
 Furthermore, $U_K(\gamma)$ commutes with $\psi_K^g$ because componentwise $\psi_K^g$ is a product of
 the shift and an element in $\bigotimes_\kappa \cA_\kappa(I)$,
 while $U_K(\gamma) \in \hat K\left(\bigotimes_\kappa \cA_\kappa(I')\right)$ and this commutes with the shift
 because the $\kappa(g)$ are localized in $I$.
 
 Positivity of energy (1dCN\ref{1dcn:positiveenergy}) follows because $U_K$ is a direct sum of
 positive-energy representations. The vacuum $\Omega_K$ is invariant under $U_K|_{\Mob}$ because it is the
 tensor product of the vacuum vectors of $\cA_\kappa$. It is cyclic for $\cA_K(I_1)$, where $I_1 \subset I'$,
 because $\cA_K(I_1)\Omega_K$ spans $\bigotimes_\kappa \cH_\kappa$ due to the cyclicity of the vacua of $\cA_\kappa$,
 and the whole $\cH_K$, since $\cA_K(I_1)$ contains the shifts $\hat K(z_g(\gamma))\psi_K^g$
 (up to a unitary on the left). From the assumption $\sum_\kappa D_{\kappa, \kappa(g)} \in \bbZ$ for each $g\in G$, we can apply 
 Proposition \ref{pr:1dcircle}. 
 
 As for locality, we take $\gamma$ such that $\gamma \cdot I$ is disjoint from $I$.
 We have to show that generating elements $\hat K(\bigotimes_\kappa x_\kappa)$, $\psi^{g_1}_K$ of $\cA_K(I)$ and
 $\Ad U_K(\gamma)\left(\hat K(\bigotimes_\kappa y_\kappa)\right)$, $\Ad U_K(\gamma)\left(\psi^{g_2}_K\right)$ of $\cA_K(\gamma \cdot I)$ commute.
 This is easy except the one involving $\psi^{g_1}_K$ and $\Ad U_K(\gamma)(\psi^{g_2}_K)$.
 As we have seen, $\Ad U_K(\gamma)(\psi_K^g) = \hat K\left(\bigotimes_\kappa z_{\kappa(g)}(\gamma)\right) \psi_K^g$.
 Therefore, to compute the commutator,
 \begin{align*}
  \psi_K^{g_1} \Ad U_K(\gamma)(\psi_K^{g_2}) &= \psi_K^{g_1} \hat K\left(\bigotimes_\kappa z_{\kappa(g_2)}(\gamma)\right)\psi_K^{g_2} \\
  &= \hat K\left(\bigotimes_\kappa \kappa(g_1)(z_{\kappa(g_2)}(\gamma))\right)\psi_K^{g_1}\psi_K^{g_2},
  \end{align*}
and
  \begin{align*}
  \Ad U_K(\gamma)(\psi_K^{g_2}) \psi_K^{g_1} &= \hat K\left(\bigotimes_\kappa z_{\kappa(g_2)}(\gamma)\right)\psi_K^{g_2} \psi_K^{g_1}.
  \end{align*}
Now observe that $z_{\kappa(g_2)}(\gamma)^* \kappa(g_1)(z_{\kappa}(g_2)(\gamma)) = \epsilon^{\pm}_{\kappa(g_1),\kappa(g_2)}$ (the DHR braiding, a scalar in the case of automorphisms $\kappa(g_1)$ and $\kappa(g_2)$) where the $\pm$ sign depends only on whether $\gamma$ moves $I$ to the left or to the right,
and the choice of $\pm$ is common for all $\kappa$).
Hence the two left hand sides above are equal if and only if $\prod_\kappa \epsilon^{\pm}_{\kappa(g_1),\kappa(g_2)} = 1$ and $\psi_K^{g_1}\psi_K^{g_2} = \psi_K^{g_2}\psi_K^{g_1}$ for every $g_1, g_2 \in G$.
The commutation between $(\psi_K^{g_1})^*$ and $\Ad U_K(\gamma)(\psi_K^{g_2})$
follows too by replacing $\kappa(g_1)$ by $\kappa(g_1)^{-1}$ and using the condition $\prod_\kappa \epsilon^{\pm}_{\kappa(g_1)^{-1},\kappa(g_2)} = 1$ and $(\psi_K^{g_1})^*\psi_K^{g_2} = \psi_K^{g_2}(\psi_K^{g_1})^*$.
Locality for general intervals follows from the previous paragraph and covariance.
\end{proof}

If all $\cA_\kappa$ are \textit{strongly additive}, that is,
$\cA_\kappa(I_3) = \cA_\kappa(I_1)\vee \cA_\kappa(I_2)$
where $I_1, I_2$ is obtained by removing one point from $I_3$,
then $\cA_K$ is strongly additive, too. This can be seen by taking $I_1 = I$,
then $\cA_K(I_3)$ is generated by the same $\psi_K^g \in \cA_K(I)$ and
$K(\bigotimes_\kappa x_\kappa)$, $x_\kappa \in \cA_\kappa(I_3) = \cA_\kappa(I) \vee \cA_\kappa(I_2)$.

This construction can be carried out even if there is only one index $\kappa$,
if the condition $\epsilon^{\pm}_{\kappa(g_1),\kappa(g_2)} = 1$ is satisfied.
When applied to the $\rU(1)$-current net, this gives the extensions considered in \cite{BMT88}, \cf Section \ref{U(1)current}, where $G=\bbZ$, the map $\kappa$ is a group isomorphism, and $\kappa(g_1)$, $\kappa(g_2)$ are powers of the same fixed automorphism $\kappa(h)$ having $\epsilon^{\pm}_{\kappa(h),\kappa(h)} = 1$ (bosonic automorphism).

\section{Two-dimensional conformal field theory arising from braiding-cancelling map}\label{2dnet}
With an idea similar to that of Section \ref{net-trivialtotalbraiding}, we construct conformal nets on $\bbR^{1+1}$.
Here we start with left and right chiral components, take their tensor product as a two-dimensional conformal net,
then find extensions of it. The problem of constructing extensions on $\bbR^{1+1}$ by adding \lq\lq charged fields" 
associated with suitable subgroups of automorphisms on top of a tensor product theory has also been tackled in \cite[Section 5]{Rehren90-1}, 
see in particular \cite[Proposition 5.5]{Rehren90-1}, within the framework of the reduced field bundle \cite{FRS89}, \cite{FRS92}.

\subsection{Extensions with pointed tensor category}

Let us start with a pair of conformal nets $\cA_\rL, \cA_\rR$ on $S^1$.
We fix an interval $I \subset S^1$.
As in Section \ref{net-trivialtotalbraiding}, we assume that $\cA_\rL$ and $\cA_\rR$ admit, respectively, a family $\Delta_\rL, \Delta_\rR$ (finite or infinite) of mutually inequivalent and commuting automorphisms localized in $I$, containing the trivial automorphism $\id_\rL, \id_\rR$, and whose fusion rules (up to unitary equivalence) are isomorphic to the same abelian group $G$, \cf Section \ref{charged}. This group $G$ should be either finitely generated, or we assume that we can choose elements in $\Delta_\rL, \Delta_\rR$ and charged fields $\psi_\rL^g, \psi_\rR^g$ with the commuting property as in the conclusions of Proposition \ref{commutingfieldsZZn} (for $\kappa = \rL, \rR$) and comments thereafter. In this section, we denote by $\rL, \rR$ (instead of $\kappa$) the bijections from $G$ to $\Delta_\rL, \Delta_\rR$, respectively, corresponding to the choice of automorphisms, and by $D_{\rL(g)}$, $D_{\rR(g)}$, where $g\in G$, the conformal dimensions of the associated automorphisms. 

 We assume that

\begin{itemize}
\item $\Delta_\rL$ and $\Delta_\rR$ contain the trivial automorphisms $\id_\rL, \id_\rR$ of $\cA_\rL$ and $\cA_\rR$, corresponding to the identity element $\iota$ of $G$ via $\rL(\iota) = \id_\rL$, $\rR(\iota) = \id_\rR$.
\item For every $g_1, g_2 \in G$, it holds that $\overline{\epsilon^{\pm}_{\rL(g_1), \rL(g_2)}} = \epsilon^{\mp}_{\rR(g_1), \rR(g_2)}$ and $
\overline{\epsilon^{\pm}_{\rL(g_1), \rL(g_2)^{-1}}} = \epsilon^{\mp}_{\rR(g_1), \rR(g_2)^{-1}}$ (braiding cancellation),
and $D_{\rL(g)} - D_{\rR(g)}\in\mathbb{Z}$ for every $g \in G$. 

\end{itemize}

We denote by the same symbol $\epsilon^{\pm}_{\rL(g_1), \rL(g_2)}$ the phase multiple of $\bb 1$ (the trivial intertwiner between $\rL(g_1) \rL(g_2) = \rL(g_2) \rL(g_1)$ and itself) associated with the braiding of $\rL(g_1)$ and $\rL(g_2)$. Similarly for $\rR(g_1)$ and $\rR(g_2)$.

\begin{remark}
Let $\cA_\rL = \cA_\rR$. If $G$ is cyclic, then the braiding cancellation is automatic if we take $\Delta_\rL = \Delta_\rR$
and $\rL(g) = \rR(g)^{-1}$ for every $g \in G$. Indeed, in general, for commuting automorphisms, it holds that 
$\overline{\epsilon^{\pm}_{\rL(g_1), \rL(g_2)}} = \epsilon^{\mp}_{\rL(g_2)^{-1}, \rL(g_1)^{-1}}$.
Now, for automorphisms $\rL(g_1)$, $\rL(g_2)$ that are powers of the same $\rL(h)$, namely $\rL(g_1) = \rL(h)^{n}$, $\rL(g_2) = \rL(h)^{m}$ for $n,m\in\bbZ$, it holds $\epsilon^{\pm}_{\rL(g_1), \rL(g_2)} = (\epsilon^{\pm}_{\rL(h), \rL(h)})^{nm} = \epsilon^{\pm}_{\rL(g_2), \rL(g_1)}$ (namely, the $nm$-th power of the statistical phase of $\rL(h)$). Hence $g_1$ and $g_2$ can be exchanged and we have braiding cancellation.
\end{remark}

With this input, we construct a conformal net $\widetilde\cA$ on $\bbR^{1+1}$ as follows.
\begin{itemize}
\item The full Hilbert space is $\widetilde\cH := \bigoplus_{g \in G} \cH_\rL^{\rL(g)}\otimes \cH_\rR^{\rR(g)}$.
On $\widetilde\cH$, any operator of the form $x_\rL\otimes x_\rR \in \cA_\rL(I_\rL)\otimes \cA_\rR(I_\rR)$
acts as $\bigoplus_{g \in G} \rL(g)(x_\rL)\otimes \rR(g)(x_\rR)$
(see the footnote \ref{fn:kappaindex} for this notation).
Denote this representation by $\widetilde\tau$.
\item The vacuum vector of $\widetilde \cA$ is $\Omega_\rL\otimes \Omega_\rR \in \cH_\rL^{\rL(\iota)}\otimes \cH_\rR^{\rR(\iota)}$.
\item The covariance is given by
$\widetilde U(\gamma_\rL\times \gamma_\rR) := \bigoplus_{g \in G} U_\rL^{\rL(g)}(\gamma_\rL)\otimes U_\rR^{\rR(g)}(\gamma_\rR)$,
which is a unitary projective representation of $\overline{\diff(S^1)}\times \overline{\diff(S^1)}$.
By the assumption that $D_{\rL(g)} - D_{\rR(g)}\in\mathbb{Z}$, $\widetilde U$ satisfies the condition of Proposition \ref{pr:2dcylinder}.
\item Note that $\widetilde \cH$ can be seen as a subspace
of $\hat \cH_\rL \otimes \hat \cH_\rR := \left(\bigoplus_{g \in G} \cH_\rL^{\rL(g)}\right)\otimes \left(\bigoplus_{g \in G} \cH_\rR^{\rR(g)}\right)$
in a natural way.
We denote $\hat \rL := \bigoplus_{g \in G} \rL(g), \hat \rR := \bigoplus_{g \in G} \rR(g)$. 
\item Let $\psi_\rL^g, \psi_\rR^g$ be as in Section \ref{charged}, with the $V^{g,h}$ chosen as in Proposition \ref{commutingfieldsZZn}.
For each $g\in G$, we introduce the charge operator $\widetilde \psi^g$ as follows
\begin{align}\label{eq:combinegandg}
\widetilde \psi^g := \psi_\rL^g \otimes \psi_\rR^g \in \cB(\hat \cH_\rL) \otimes \cB(\hat \cH_\rR).
\end{align}
It is then clear that $\widetilde \psi^g$ preserves $\widetilde \cH$.
 
 From this definition and by \eqref{eq:chargedintertwiner}, it follows that
 \begin{align}\label{eq:commof}
  \widetilde\psi^g \hat \rL(x_\rL) \otimes \hat \rR(x_\rR)
  = \hat \rL(\rL(g)(x_\rL)) \otimes \hat \rR(\rR(g)(x_\rR)) \widetilde\psi^g,
 \end{align}
 just as in \eqref{eq:chargedfieldchiral}.

 \item The local algebras of $\widetilde\cA$ are given as follows:
 For the diamond $I \times I$, we set $\widetilde\cA(I\times I) := \widetilde\tau(\cA_\rL(I)\otimes \cA_\rR(I)) \vee \{\widetilde\psi^g, (\widetilde\psi^g)^*\}_{g \in G}$.
 For any other diamond $\gamma_\rL \cdot I \times \gamma_\rR \cdot I$,
 we set $\widetilde\cA(\gamma_\rL \cdot I \times \gamma_\rR \cdot I) := \Ad \widetilde U(\gamma_\rL \times \gamma_\rR)(\widetilde\cA(I\times I))$.
\end{itemize}

\begin{theorem}\label{th:2dext}
 The net $(\widetilde\cA, \widetilde U, \widetilde\Omega)$ satisfies (2dCN\ref{2dcn:isotony})--(2dCN\ref{2dcn:vacuum})
 and the condition of Proposition \ref{pr:2dcylinder}, hence provides a conformal net on $\cE$, extending the two-dimensional conformal net $\cA_\rL \otimes \cA_\rR$.
\end{theorem}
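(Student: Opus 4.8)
The plan is to follow the proof of Theorem \ref{th:1dext} almost verbatim, taking advantage of the fact that every datum of $\widetilde\cA$ is built as a tensor product of a left and a right chiral ingredient, so that all intertwiner and covariance identities factorize through the two lightray directions. The genuinely two-dimensional input enters only in three places: the joint spectrum condition in (2dCN\ref{2dcn:positiveenergy}); the triviality of the spacelike $2\pi$-rotation, needed to invoke Proposition \ref{pr:2dcylinder}; and the locality axiom (2dCN\ref{2dcn:locality}), where the two chiral braidings must cancel against each other. The basic computational tool is the tensor product of the one-dimensional covariance formula \eqref{eq:covariancecharged} for $\rL$ and $\rR$, which on $\widetilde\cH$ reads
\[
\Ad \widetilde U(\gamma_\rL \times \gamma_\rR)(\widetilde\psi^g) = \widetilde\tau\bigl(z_{\rL(g)}(\gamma_\rL) \otimes z_{\rR(g)}(\gamma_\rR)\bigr)\,\widetilde\psi^g,
\]
together with the intertwiner relation \eqref{eq:commof} and the commutativity $\widetilde\psi^{g_1}\widetilde\psi^{g_2} = \widetilde\psi^{g_2}\widetilde\psi^{g_1}$, which holds because each chiral factor commutes by Proposition \ref{commutingfieldsZZn}.

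First I would dispose of the structural axioms. Well-definedness of $\widetilde\cA(\gamma_\rL \cdot I \times \gamma_\rR \cdot I)$ reduces, as in the one-dimensional case, to $\Ad \widetilde U(\gamma_\rL \times \gamma_\rR)(\widetilde\cA(I \times I)) = \widetilde\cA(I \times I)$ whenever $\gamma_\rL \cdot I = \gamma_\rR \cdot I = I$; on generators this follows from covariance of $\cA_\rL, \cA_\rR$ and from the covariance identity above, since $z_{\rL(g)}(\gamma_\rL) \in \cA_\rL(I)$ and $z_{\rR(g)}(\gamma_\rR) \in \cA_\rR(I)$ for compactly supported $\gamma_\rL, \gamma_\rR$. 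Isotony (2dCN\ref{2dcn:isotony}) and the first half of covariance (2dCN\ref{2dcn:diff}) are then immediate, and the second half follows exactly as in Theorem \ref{th:1dext}, each $\widetilde\psi^g$ being componentwise a shift composed with an element of $\cA_\rL(I) \otimes \cA_\rR(I)$. Positivity of energy holds because $\widetilde U$ is a direct sum of tensor products of positive-energy representations of the two lightray translation groups, so the joint translation spectrum sits in $\{(a_-,a_+): a_\pm \geq 0\}$, which is $\overline{V_+}$ in lightcone coordinates. The vacuum $\widetilde\Omega = \Omega_\rL \otimes \Omega_\rR$ is $\overline{\Mob}$-invariant, and Reeh--Schlieder (2dCN\ref{2dcn:vacuum}) follows because $\widetilde\tau(\cA_\rL(I) \otimes \cA_\rR(I))\widetilde\Omega$ is dense in the $\iota$-summand (by Reeh--Schlieder for each chiral net) while the $\widetilde\psi^g$ move cyclically among the summands. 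Finally, on $\cH_\rL^{\rL(g)} \otimes \cH_\rR^{\rR(g)}$ the operator $\widetilde U(R_{2\pi} \times R_{-2\pi})$ acts as the scalar $e^{2\pi i(D_{\rL(g)} - D_{\rR(g)})}$, which is $1$ by the dimension hypothesis $D_{\rL(g)} - D_{\rR(g)} \in \bbZ$; hence $\widetilde U(R_{2\pi} \times R_{-2\pi}) = \bb1$ and Proposition \ref{pr:2dcylinder} provides the extension to $\cE$.

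The core of the argument, and the step I expect to be the main obstacle, is locality. Two diamonds $I \times I$ and $\gamma_\rL \cdot I \times \gamma_\rR \cdot I$ are spacelike separated precisely when their projections are disjoint in each lightray coordinate but with \emph{opposite} orderings: if $\gamma_\rL$ carries $I$ to the left of $I$, then spacelike separation forces $\gamma_\rR$ to carry $I$ to the right, and conversely. The only commutator that is not manifestly trivial is that of $\widetilde\psi^{g_1}$ with $\Ad \widetilde U(\gamma_\rL \times \gamma_\rR)(\widetilde\psi^{g_2})$. Using the covariance identity and \eqref{eq:commof}, this commutator factorizes into a left contribution $z_{\rL(g_2)}(\gamma_\rL)^* \rL(g_1)(z_{\rL(g_2)}(\gamma_\rL)) = \epsilon^{\pm}_{\rL(g_1), \rL(g_2)}$ and a right contribution $z_{\rR(g_2)}(\gamma_\rR)^* \rR(g_1)(z_{\rR(g_2)}(\gamma_\rR)) = \epsilon^{\mp}_{\rR(g_1), \rR(g_2)}$, where the opposite signs $\pm$ and $\mp$ are exactly the opposite orderings produced by spacelike separation. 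Thus, granted $\widetilde\psi^{g_1}\widetilde\psi^{g_2} = \widetilde\psi^{g_2}\widetilde\psi^{g_1}$, the two fields commute iff the total phase $\epsilon^{\pm}_{\rL(g_1), \rL(g_2)}\,\epsilon^{\mp}_{\rR(g_1), \rR(g_2)}$ equals $1$; by the braiding cancellation hypothesis $\overline{\epsilon^{\pm}_{\rL(g_1), \rL(g_2)}} = \epsilon^{\mp}_{\rR(g_1), \rR(g_2)}$ this phase is $\lvert \epsilon^{\pm}_{\rL(g_1), \rL(g_2)}\rvert^2 = 1$. The commutator involving $(\widetilde\psi^{g_1})^*$ is treated identically, replacing $\rL(g_1)$ by $\rL(g_1)^{-1}$ and using the second cancellation relation, while commutation of the remaining, observable generators is automatic; locality for arbitrary spacelike separated diamonds then follows by covariance. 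The delicate point throughout is the sign bookkeeping: one must check that spacelike separation genuinely yields the opposite orientations $\pm$ and $\mp$ in the two lightrays, which is what makes the cancellation condition --- rather than a separate triviality condition on each chiral braiding --- the right hypothesis.
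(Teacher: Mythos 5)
Your proposal is correct and takes essentially the same approach as the paper: the paper likewise verifies well-definedness, covariance, positivity of energy, the vacuum property and the spacelike-rotation triviality (via $D_{\rL(g)}-D_{\rR(g)}\in\bbZ$) by the same computations as in the chiral case of Theorem \ref{th:1dext} (deferring the details to \cite[Section 6]{MTW18}), and reduces locality to exactly the point you isolate, namely that spacelike separation produces opposite lightray orderings so that the total phase is $\epsilon^{\pm}_{\rL(g_1),\rL(g_2)}\,\epsilon^{\mp}_{\rR(g_1),\rR(g_2)}=1$ by braiding cancellation, with the second cancellation condition handling the commutator involving $(\widetilde\psi^{g_1})^*$. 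The only difference is presentational: you write out the computations that the paper outsources to its references.
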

\begin{proof}
 Most of the properties, well-definedness of $\widetilde\cA$, (2dCN\ref{2dcn:isotony})--(2dCN\ref{2dcn:vacuum}) except for (2dCN\ref{2dcn:locality})
 and the condition of Proposition \ref{pr:2dcylinder} can be verified as in \cite[Section 6]{MTW18},
 using the assumption $D_{\rL(g)} - D_{\rR(g)}\in\mathbb{Z}$.
 As for locality, in the proof of \cite[Section 6]{MTW18},
 it is only important that $\overline{\epsilon^{\pm}_{\rL(g_1), \rL(g_2)}} = \epsilon^{\mp}_{\rR(g_1), \rR(g_2)}$
 as scalars, which we assume.
 The commutation between $(\widetilde\psi^{g_1})^*$ and $\Ad U(\gamma_\rL\times\gamma_\rR)(\widetilde\psi^{g_2})$ follows
 from $\overline{\epsilon^{\pm}_{\rL(g_1)^{-1}, \rL(g_2)}} = \epsilon^{\mp}_{\rR(g_1)^{-1}, \rR(g_2)}$.
 Therefore, the proof of locality (2dCN\ref{2dcn:locality}) works as well.
\end{proof}

\subsection{Comparison with Longo--Rehren extensions}\label{LRsection}

In this section, we review the so-called Longo--Rehren extensions introduced in \cite[Proposition 4.10]{LR95}, see also \cite{Masuda00}, \cite[Appendix A]{KLM01}, \cite[Section 3.3]{BKLR15} and references therein, and we interpret them as \lq\lq generalized shift constructions". Later, we shall relate them to the extensions presented in the previous sections. 

Let $\cN$ be an infinite factor (\eg, $\cN = \cA_\kappa(I)$ a local algebra of a conformal net on $S^1$). 
Let $\cC$ be a \textbf{unitary fusion category} (not necessarily braided, for the moment) realized as a full subcategory of $\mathrm{End}(\cN)$, the set of normal injective unital *-endomorphisms of $\cN$. 
For the preliminaries on unitary fusion categories (realized, without loss of generality, as endomorphisms of von Neumann algebras) we refer, \eg, to \cite{BKLR15}, \cite{EGNO15}.
We use the notation $\Hom_\cC(\rho,\sigma)$ for the spaces of intertwiners $V\in\cN$ between $\rho$ and $\sigma$ objects in $\cC$, namely $V \rho(x) = \sigma(x) V$ for every $x\in\cN$, the arrows in our subcategory of endomorphism.

For every unitary equivalence class of irreducible objects in $\cC$, choose one representative $\rho_i$, $i=0,\ldots, n$, with $\rho_0 = \id$, and denote $\Delta := \{\rho_0, \ldots, \rho_n\}$. Note that, if $\cC, \cD \subset \mathrm{End}(\cN)$ are two unitary fusion categories, then the Deligne tensor product $\cC \boxtimes \cD$ \cite[Definition 1.1.15]{BaKi01} can be realized in $\mathrm{End}(\cN \bar\otimes \cN)$, where $\cN \bar\otimes \cN$ is the spatial tensor product\footnote{In this section, we denote by $\bar \otimes$ the spatial tensor product of operators, in order to distinguish it from the categorical tensor product functor $\otimes$, which we shall employ below on intertwiners.} von Neumann algebra. Next, let $\cJ : \cN \to \cN^\cJ$ be an antilinear isomorphism of von Neumann algebras, where $\cN^\cJ := \cJ(\cN)$, \eg, the natural involution $\cJ : x \mapsto x^*$, where $\cN^\cJ \cong \cN^\textrm{opp}$. Consider the unitary fusion category $\cC^\cJ \subset \mathrm{End}(\cN^\cJ)$ with objects $\rho^\cJ := \cJ \circ \rho \circ \cJ^{-1} \in \mathrm{End}(\cN^\cJ)$, where $\rho$ is an object in $\cC$, and with arrows $t^\cJ := \cJ(t)$, where $t$ is an arrow in $\cC$.

The \textbf{Longo--Rehren extension} of the spatial tensor product von Neumann algebra $\cN \bar\otimes \cN^\cJ$ (an irreducible finite index unital inclusion of factors $\cN \bar\otimes \cN^\cJ \subset \cM$) is specified by the irreducible Q-system $(\Theta_{\LR}, W_{\LR}, X_{\LR})$ in $\cC\boxtimes \cC^\cJ \subset\textrm{End}(\cN \bar\otimes \cN^\cJ)$ defined as follows. Let
$$\Theta_{\LR} := \bigoplus_{i=0,\ldots,n} \rho_i \bar \otimes \rho_i^\cJ\in \mathrm{End}(\cN \bar \otimes \cN^\cJ).$$
The direct sum is defined by choosing a family of $n+1$ isometries $T_{i,i}\in\cN \bar \otimes \cN^\cJ$, $i=0,\ldots,n$, \lq\lq mutually orthogonal" in the sense that $T_{i,i}^*T_{i',i'} = \delta_{i,i'} \bb 1$, and \lq\lq complete" in the sense that $\sum_{i=0,\ldots,n} T_{i,i}T_{i,i}^* = \bb 1$, and letting $\Theta_{\LR}(x) := \sum_{i=0,\ldots,n} T_{i,i} \rho_i \bar \otimes \rho_i^\cJ (x) T_{i,i}^*$ for every $x\in\cN\bar \otimes \cN^\cJ$. Hence, by definition, $T_{i,i} \in \Hom_{\cC\boxtimes \cC^\cJ}(\rho_i \bar \otimes \rho_i^\cJ, \Theta_{\LR})$ for every $i=0,\ldots,n$. Note that the endomorphism $\Theta_{\LR}$
is not a ``simple tensor'' in $\cC\boxtimes \cC^\cJ$, unless $n=0$.

Let $W_{\LR} := T_{0,0} \in \Hom_{\cC\boxtimes \cC^\cJ} (\id \bar \otimes \id, \Theta_{\LR})$ be the unit of the Q-system and let
$$X_{\LR} := \sum_{i,j,k = 0,\ldots, n} \sqrt{\frac{d(\rho_i) d(\rho_j)}{d(\rho_k)}} (T_{i,i} \otimes T_{j,j}) (\sum_{V} V \bar \otimes V^\cJ) T_{k,k}^* \in \Hom_{\cC\boxtimes \cC^\cJ} (\Theta_{\LR}, \Theta_{\LR}^2)$$ 
be the comultiplication of the Q-system, where $V \in \cN$ runs in a chosen orthonormal basis of $N^k_{i,j} (= \text{the multiplicity of $\rho_k$ in $\rho_i \rho_j$}$, possibly $0$) isometries of $\Hom_\cC (\rho_k, \rho_i \rho_j)$ (whose dimension indeed equals $N^k_{i,j}$) for every fixed $i,j,k = 0,\ldots, n$, with respect to the inner product $V^* V' = \delta_{V,V'} 1$. The definition of $X_{\LR}$ is \emph{independent} of this choice because $\cJ$ is antilinear. Note that $T_{i,i} \otimes T_{j,j}$ denotes the tensor product of arrows in the category of endomorphisms, 
and that both $X_{\LR}$ and $W_{\LR}$ belong to $\cN \bar \otimes \cN^\cJ$. The normalizations read $W_{\LR}^* W_{\LR} = \bb 1$ and $X_{\LR}^* X_{\LR} = (\sum_i d(\rho_i)^2) \bb 1$. Being $(\Theta_{\LR}, W_{\LR}, X_{\LR})$ a Q-system, see \cite[Proposition 4.10]{LR95} for the proof and \cf \cite[Definition 3.8]{BKLR15}, \cite[Proposition 3.19]{BKLR15}, then by \cite{Longo94} $\Theta_{\LR} \in \mathrm{End}(\cN \bar \otimes \cN^\cJ)$ is a dual canonical endomorphism of an irreducible finite index unital extension $\cN \bar\otimes \cN^\cJ \subset \cM$ realized on the Hilbert space $\cH \bar \otimes \cH^\cJ$ if $\cN$ and $\cN^\cJ$ are respectively realized on $\cH$ and $\cH^\cJ$. The beginning of the Jones/canonical tunnel, see \cite[Section 2.5]{LR95}, reads 
$$\cM_{-2} \subset \cM_{-1} \subset \cN \bar \otimes \cN^\cJ \subset \cM,$$
where $\cM_{-2} := \Theta_{\LR}(\cN \bar \otimes \cN^\cJ)$ is the image of $\Theta_{\LR}$, and $\cM_{-1} := \langle \Theta_{\LR}(\cN \bar \otimes \cN^\cJ), X_{\LR}\rangle$ is the von Neumann algebra generated together with $X_{\LR}$. The two subfactors $\cM_{-2} \subset \cM_{-1}$ and $\cN \bar \otimes \cN^\cJ \subset \cM$ are spatially isomorphic.
The Q-system $(\Theta_{\LR}, W_{\LR}, X_{\LR})$ also specifies a (unique, by irreducibility, normal faithful) conditional expectation $E$ from $\cM$ onto $\cN\bar\otimes\cN^\cJ$ with Jones projection $e \in \cM_1$ (the Jones extension of $\cM$ with respect to $E$), or equivalently $E_{-2}(\cdot) := \Theta_{\LR}(W_{\LR}^* \cdot W_{\LR})$ from $\cM_{-1}$ onto $\cM_{-2}$ with Jones projection $e_{-2} := W_{\LR}W_{\LR}^*$. 

\begin{remark}
Choosing a state $\omega$ on $\cN \bar \otimes \cN^\cJ$ (\eg, the vacuum state in the QFT context) and working in the GNS Hilbert space of $(\cM, \omega \circ E)$ yields the more common description in QFT of the extension $\cN\bar\otimes\cN^\cJ \subset \cM$ with a vacuum vector $\Omega$ (cyclic and separating for $\cM$ and inducing an $E$-invariant state) such that $e = [\cN\bar\otimes\cN^\cJ \Omega]$,
\cf the proof of \cite[Theorem 4.9]{LR95}.
\end{remark}

Let $\psi_{j,j} := \Theta_{\LR}(T_{j,j}^*) X_{\LR}\in\cM_{-1}$, for every $j=0,\ldots,n$. 
Each $\psi_{j,j}$ has the \lq\lq charged field intertwiner" property on $\cM_{-2} \subset \cM_{-1}$ in the sense of Doplicher--Roberts \cite{DR72}, \ie, 
$$\psi_{j,j} \Theta_{\LR}(x) = \Theta_{\LR}(\rho_j \bar \otimes \rho_j^\cJ(x)) \psi_{j,j}, \quad x \in \cN \bar \otimes \cN^\cJ.$$ 
We refer to the $\psi_{j,j}$ as the \textbf{charged fields} of the Longo--Rehren Q-system $(\Theta_{\LR}, W_{\LR}, X_{\LR})$. Such charged fields can be defined for every other Q-system of endomorphisms.

As $X_{\LR} = \sum_{j=0,\ldots,n} \Theta_{\LR}(T_{j,j}) \psi_{j,j}$, because $\sum_{j=0,\ldots,n} T_{j,j} T_{j,j}^* = \bb 1$, it holds
$$\cM_{-1} = \langle \cM_{-2}, \psi_{0,0}, \ldots, \psi_{n,n}\rangle.$$
Note that $\psi_{0,0} = \Theta_{\LR}(T_{0,0}^*) X_{\LR} = \Theta_{\LR}(W_{\LR}^*) X_{\LR} = \bb 1$.

Setting $\cH_{j,j} := T_{j,j}T_{j,j}^* \cH \bar \otimes \cH^\cJ$, for every $j=0,\ldots,n$, yields an orthogonal decomposition 
\begin{align}\label{eq:HilbLRext}
\cH \bar \otimes \cH^\cJ = \bigoplus_{j=0,\ldots,n} \cH_{j,j}.
\end{align}
Note that $\cH_{0,0}$ is the range of the Jones projection $e_{-2} = T_{0,0}T_{0,0}^* = W_{\LR}W_{\LR}^*$.

Moreover, by definition of $X_{\LR}$ and $T_{i,i} \otimes T_{j,j} = \Theta_{\LR}(T_{j,j}) T_{i,i}$, it follows
$$\psi_{j,j} = \sum_{i,k = 0,\ldots, n} \sqrt{\frac{d(\rho_i) d(\rho_j)}{d(\rho_k)}} T_{i,i} (\sum_{V} V \bar \otimes V^\cJ) T_{k,k}^*,$$
where the first sum runs over all $i,k$ such that $\Hom(\rho_k,\rho_i \rho_j) \neq \langle 0\rangle$ for fixed $j$, and the second sum runs over $V$ in the chosen orthonormal basis of isometries in $\Hom_\cC (\rho_k, \rho_i \rho_j)$. Note that, while $\cM_{-2}$ acts diagonally on $\cH \bar \otimes \cH^\cJ$ by the very definition of $\Theta_{\LR}$, the charged fields $\psi_{j,j}$'s mix different components.

\begin{remark}
In the special case of \textbf{pointed} unitary fusion categories $\cC$ (in particular, $d(\rho_i) = d(\rho_j) = d(\rho_k) = 1$ and the fusion ring is a finite group), then all vector spaces $\Hom_\cC (\rho_k, \rho_i \rho_j)$ have either dimension 1 or 0. In this case each $\psi_{j,j}$ operates as a \lq\lq right $j^{-1}$ shift operator" on the grading of $\cH \bar \otimes \cH^\cJ$, mapping each subspace $\cH_{k,k}$, $k=0,\ldots,n$, to the subspace $\cH_{i,i}$ such that $[\rho_k] = [\rho_i \rho_j]$, \ie, $[\rho_k \rho_j^{-1}] = [\rho_i]$. Moreover, for every $j,j' = 0,\ldots,n$, 
\begin{align*}
\psi_{j,j} \psi_{j',j'} = \sum_{i,k = 0,\ldots, n} T_{i,i} (V \bar \otimes V^\cJ) T_{k,k}^* \sum_{i',k' = 0,\ldots, n} T_{i',i'} (V' \bar \otimes V'^\cJ) T_{k',k'}^* 
\end{align*} 
where $i,k,i',k'$ are such that $\dim(\Hom_\cC(\rho_k,\rho_i \rho_j))=1$ and $\dim(\Hom_\cC(\rho_{k'},\rho_{i'} \rho_{j'}))=1$. Note that, as $\Delta$ is fixed,  $V,V'$ are both unique up to a phase factor, and $\cJ$ is antilinear, hence the irrelevance of their choice is immediately evident in the pointed case. If we could choose $\Delta$ to be a group (\ie, closed under multiplication, as $\rho_0 = \id$), then using $T_{k,k}^* T_{i',i'} = \delta_{k,i'} \bb 1$ and $VV' \in \Hom(\rho_{k'}, \rho_i \rho_j \rho_{j'}) = \Hom(\rho_{k'}, \rho_i \rho_h)$ for some $\rho_h\in\Delta$ such that $\rho_j \rho_{j'} = \rho_h$, then it would follow that $\psi_{j,j} \psi_{j',j'} = \psi_{h,h}$. If we chose the inverse (which is a conjugate in the case of automorphisms) of some $\rho_j$ to be some $\rho_l \in \Delta$, together with $\rho_0 = \id$, then $\psi_{j, j}^* = \psi_{l, l}$. 
\end{remark}

If $\cC$ is in addition \textbf{braided}, not necessarily pointed, and $\cC^\cJ$ is endowed with the braiding induced by $\cJ$, then the Longo--Rehren Q-system is \emph{commutative}. See the discussion following the proof of \cite[Proposition 4.10]{LR95} and \cf \cite[Proposition 4.21]{BKLR15}. Let $\epsilon^{\pm}_{\rho_i,\rho_j}$ be the braiding in $\cC$ and its opposite, then the commutativity condition reads $\epsilon^{\pm}_{\Theta_{\LR}, \Theta_{\LR}} X_{\LR} = X_{\LR}$, \ie, for every $i,j = 0,\ldots, n$,
\begin{align}\label{eq:commutLRchargedfields}
\epsilon^{\pm}_{\rho_i,\rho_j} \bar\otimes\, {\epsilon^{\pm, \cJ}_{\rho_i,\rho_j}}\, \psi_{i,i} \psi_{j,j} = \psi_{j,j} \psi_{i,i},
\end{align}
where $\epsilon^{\pm}_{\rho_i,\rho_j} \bar\otimes\, {\epsilon^{\pm, \cJ}_{\rho_i,\rho_j}}$ is the braiding between $\rho_i \bar \otimes \rho_i^\cJ$ and $\rho_j \bar \otimes \rho_j^\cJ$ in $\cC \boxtimes \cC^\cJ$, the irreducible summands of $\Theta_{\LR}$.

\begin{remark}
If $\cC$ is braided and pointed, then each braiding $\epsilon^{\pm}_{\rho_i,\rho_j}$ is a phase multiple of any fixed unitary intertwiner $U \in \Hom_\cC(\rho_i \rho_j,\rho_j \rho_i)$ (both $\rho_i \rho_j$ and $\rho_j \rho_i$ are automorphisms hence irreducible). Thus $\epsilon^{\pm}_{\rho_i,\rho_i} \bar\otimes\, {\epsilon^{\pm, \cJ}_{\rho_i,\rho_i}} = \bb 1 \bar \otimes \bb 1$ for $i=j$, by taking $U = \bb 1$ and by the antilinearity of $\cJ$. Equivalently, the statistical phase of each $\rho_i \bar \otimes \rho_i^\cJ$ is $+1$ (all bosonic automorphisms in $\cC \boxtimes \cC^\cJ$) being equal to the (phase associated with the) self-braiding in the case of automorphisms. If, in addition, $\rho_i \rho_j = \rho_j \rho_i$ for every $i \neq j$ as automorphisms of $\cN$ (\eg, by spacelike localization in the case of DHR automorphisms and $\cN = \cA_\kappa(I)$, or if $\rho_i$ and $\rho_j$ are different powers of the same automorphism as for cyclic groups), then one can take $U = \bb 1$ as well, and $\epsilon^{\pm}_{\rho_i,\rho_j} \bar\otimes\, {\epsilon^{\pm, \cJ}_{\rho_i,\rho_j}} = \bb 1 \bar \otimes \bb 1$ for every $i, j = 0,\ldots,n$. In the latter case of commuting automorphisms, the commutativity of the Longo--Rehren Q-system for braided pointed fusion categories reads
$$\psi_{i,i} \psi_{j,j} = \psi_{j,j} \psi_{i,i}.$$
\end{remark}

If we take $\cC$ to be the representation category of a (completely) rational conformal net on $\bbR$ or $S^1$, then $\cC$ is necessarily unitary fusion (finitely many sectors) and braided (also modular, see, \eg, \cite{BaKi01}), equipped with the DHR braiding, by a result of \cite{KLM01}. Then the Longo--Rehren Q-system provides an irreducible finite index \emph{local} (by the commutativity property \eqref{eq:commutLRchargedfields}) and \lq\lq diagonal" extension $\tilde \cA_{\LR}$ of the tensor product conformal net $\cA_\kappa \bar \otimes \cA_\kappa^\cJ$ on $\bbR^{1+1}$ (with equal left and right chiral components), by choosing $\cN \bar \otimes \cN^\cJ = \cA_\kappa(I) \bar \otimes \cA_\kappa(I)^\cJ$, where $I$ is an interval on $\bbR$ symmetric around the origin, and $\cJ$ the modular conjugation with respect to $\cA_\kappa(\bbR_+), \Omega_\kappa$ (with this choice, $\rho_i^\cJ \cong \bar \rho_i$, hence in particular $\rho_i^\cJ \cong \rho_i^{-1}$ in the pointed case).
See \cite[Theorem 4.9]{LR95}, \cf \cite[Theorem 6.8]{DG18}, for the general construction of extensions from arbitrary Q-systems in the representation category of a local net, and the comments after the proof of \cite[Proposition 4.10]{LR95} for the Longo--Rehren Q-system.

\section{Conformal Wightman fields from charged primary fields}

\subsection{Conformal Wightman axioms}

Here we show that to some of the extensions we discussed in Sections \ref{net-trivialtotalbraiding} and \ref{2dnet}
we can associate conformal Wightman fields, as we define below.
It is natural to expect that such Wightman fields on $\bbR^{1+1}$ also extend to $\cE$ as conformal nets do,
and hence we formulate Wightman fields on $\cE$.

For simplicity, let us start with chiral fields on $S^1$.
Let us see $S^1 \subset \bbC$ and the counterclockwise direction as the lightlike direction
when $S^1$ is seen as the one-point compactification of $\bbR$ as in Section \ref{cylinder}.

A conformal Wightman field on $S^1$ with \textbf{conformal dimension} $D$
is a quantum field that transforms as the tensor field $f\frac{d}{d\theta}^{\otimes D-1}$ under diffeomorphisms
as (1dW\ref{1dw:diff}) below,
and such a field is called a \textbf{primary field}:
For $D \in \bbN, \gamma \in \diff(S^1)$ and $f \in C^\infty(S^1)$, we set (\cf \cite{RTT22})
\begin{align*}\label{eq:beta}
X_\gamma(e^{i\theta}) &:= -i \frac{d}{d \theta} \log(\gamma(e^{i \theta})), \\
(\beta_D(\gamma)f)(z) &:= (X_\gamma(\gamma^{-1}(z)))^{D-1} f(\gamma^{-1}(z)).
\end{align*}
Note that $\gamma$ is orientation-preserving, hence $X$ is strictly positive
and $(X_\gamma(\gamma^{-1}(z)))^{D-1}$ is bounded below by some positive number for a fixed $\gamma$.

A \textbf{conformal Wightman field theory on $S^1$} is a family of operator-valued distributions $\{\phi_{\kappa,j}\}$
($\kappa$ is a fixed label, while $j$ indexes the family)
on $S^1$,
closed under the conjugate ($\phi_{\kappa,j}^\dagger$ is also in the family),
with corresponding conformal dimensions $\{D_{\kappa,j}\}$ defined on a common invariant dense domain $\scD \subset \cH_\kappa$,
a unitary projective representation $U_\kappa$ of $\diff(S^1)$ and a vector $\Omega_\kappa \in \cH_\kappa$ such that
\begin{enumerate}[{(1dW}1{)}]
 \item \textbf{Locality:} for $f, g \in C^\infty(S^1)$ with disjoint supports, $[\phi_{\kappa, j_1}(f), \phi_{\kappa, j_2}(g)] = 0$ on $\scD$. \label{1dw:locality}
 \item \textbf{Diffeomorphism covariance:} \label{1dw:diff} $U_\kappa(\gamma)$ preserves $\scD$ and it holds that
 \[
  \Ad U_\kappa(\gamma)(\phi_{\kappa,j}(f)) = \phi_{\kappa,j}(\beta_{D_{\kappa,j}}(\gamma)f), \qquad \text{ for } \gamma \in\diff(S^1).
 \]
 \item\textbf{Positivity of energy:}
 the spectrum of rotations of $U_\kappa$ is contained in $\bbN \cup \{0\}$. \label{1dw:positiveenergy}
 \item \textbf{Vacuum and the Reeh-Schlieder property: }there exists a unique (up to a phase)
 vector $\Omega_\kappa \in\cH_\kappa$ such that $U_\kappa(\gamma)\Omega_\kappa=\Omega_\kappa$ for $\gamma\in \Mob$
 and vectors of the form $\phi_{\kappa, j_1}(f_1)\cdots \phi_{\kappa, j_\ell}(f_\ell)\Omega_\kappa$,
 where $f_1, \cdots, f_\ell \in C^\infty(S^1)$, are total in $\cH_\kappa$. \label{1dw:vacuum}
 \setcounter{1dW}{\value{enumi}}
\end{enumerate}

As fields $\phi_{\kappa,j}$ are operator-valued distributions, they can be smeared with
functions $\mathfrak{e}_n(\theta) = e^{in\theta}$ to give their Fourier components $\phi_{\kappa, j,n} = \phi_{\kappa,j}(\mathfrak{e}_n)$.
Let $L_{\kappa,0}$ be the generator of $U_\kappa(R_t)$, where $R_t \in \Mob$ are rotations.
We assume that $\scD = C^\infty(L_{\kappa,0}) := \bigcap_{\ell\in \bbN}\dom(L_{\kappa,0}^\ell)$.
Furthermore, 
\begin{enumerate}[{(1dW}1{)}]
 \setcounter{enumi}{\value{1dW}}
 \item \label{1dw:peb}
 \textbf{Polynomial energy bounds:} there are $r_j, p_j, C > 0$ such that,  for $\Psi \in C^\infty(L_{\kappa, 0})$,
 \[
 \|\phi_{\kappa, j,n}\Psi\| \le C(1+|n|)^{r_j} \|(L_{\kappa, 0} + \bb1)^{p_j}\Psi\|
 \]
 \setcounter{1dW}{\value{enumi}}
\end{enumerate}

According to \cite[Section 6]{CKLW18}, polynomial energy bounds allow one to define the smeared fields.
For $f \in C^\infty(S^1)$ whose Fourier components are $f_n = \frac1{2\pi}\int_{-\pi}^{\pi} f(e^{-in\theta})d\theta$, we define
\begin{align*}
 \phi_{\kappa, j}(f) := \sum_{n \in \bbZ} f_n \phi_{\kappa, j,n}.
\end{align*}
This is convergent on any vector $\Psi \in C^\infty(L_{\kappa,0})$ and $\phi_{\kappa, j}(f)$ preserves the domain $C^\infty(L_{\kappa,0})$.

We know from (the proofs of) \cite[Proposition 6.4, Theorem 8.3]{CKLW18} the following:
\begin{lemma}
 Let $\{\phi_{\kappa,j}\}$ satisfy polynomial energy bounds with $p_j\le1$.
 Then it holds that $\Ad U(\gamma)(\phi_{\kappa,j}(f)) = \phi_{\kappa,j}(\beta_{D_{\kappa,j}}(\gamma)(f))$ and
 $\phi_{\kappa,j_1}(f)$ and $\phi_{\kappa, j_2}(g)$ commute strongly for $f,g$ with disjoint supports.
\end{lemma}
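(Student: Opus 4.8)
The plan is to verify that the hypotheses of \cite[Proposition~6.4]{CKLW18} and \cite[Theorem~8.3]{CKLW18} are met in our setting, so that their proofs apply essentially verbatim; since the two assertions are of a rather different nature, I would treat them in turn. For the covariance identity the key observation is that it already holds at the level of the unsmeared (distributional) field by axiom (1dW\ref{1dw:diff}), so that the only task is to transport it through the definition $\phi_{\kappa,j}(f) = \sum_n f_n \phi_{\kappa,j,n}$. First I would recall from the discussion preceding the lemma that, under the polynomial energy bounds, this series converges strongly on every $\Psi \in C^\infty(L_{\kappa,0})$ and that $\phi_{\kappa,j}(f)$ preserves $\scD = C^\infty(L_{\kappa,0})$. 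Since $U_\kappa(\gamma)$ also preserves $\scD$ by (1dW\ref{1dw:diff}), one may apply $\Ad U_\kappa(\gamma)$ term by term; the Fourier coefficients of $\beta_{D_{\kappa,j}}(\gamma)f$ depend continuously on those of $f$ in the topology controlled by the energy bounds, so passing the adjoint action inside the sum yields $\Ad U_\kappa(\gamma)(\phi_{\kappa,j}(f)) = \phi_{\kappa,j}(\beta_{D_{\kappa,j}}(\gamma)f)$. This step is essentially bookkeeping once the domain invariances are in place.

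The substantive part is the strong commutativity, for which the first preparatory step is to establish essential self-adjointness. Using that the family is closed under conjugates, I would reduce to a Hermitian field and a real test function, so that $\phi_{\kappa,j}(f)$ is symmetric on $\scD$. Here the hypothesis $p_j \le 1$ is crucial: since $L_{\kappa,0}\ge 0$ it gives the genuinely linear bound $\|\phi_{\kappa,j,n}\Psi\| \le C(1+|n|)^{r_j}\|(L_{\kappa,0}+\bb1)\Psi\|$, while differentiating the rotation covariance shows that $[L_{\kappa,0}, \phi_{\kappa,j,n}]$ is a scalar multiple of $\phi_{\kappa,j,n}$ linear in $n$, so that $[L_{\kappa,0}, \phi_{\kappa,j}(f)]$ is again of the same form and obeys the same type of bound. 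These are exactly the hypotheses of a Nelson-type commutator theorem, which delivers essential self-adjointness of $\phi_{\kappa,j}(f)$ on the core $C^\infty(L_{\kappa,0})$ together with the invariance of the form domain $\dom((L_{\kappa,0}+\bb1)^{1/2})$ under the unitary group $e^{is\overline{\phi_{\kappa,j}(f)}}$, with a norm estimate of exponential type in $s$.

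With this in hand, strong commutativity of $\overline{\phi_{\kappa,j_1}(f)}$ and $\overline{\phi_{\kappa,j_2}(g)}$ for disjointly supported $f,g$ follows from the strong-commutativity argument of Driessler and Fr\"ohlich used in \cite[Theorem~8.3]{CKLW18}: one knows the weak commutation $[\phi_{\kappa,j_1}(f), \phi_{\kappa,j_2}(g)] = 0$ on $C^\infty(L_{\kappa,0})$ from (1dW\ref{1dw:locality}), and the controlled invariance of $\dom((L_{\kappa,0}+\bb1)^{1/2})$ under each unitary group allows one to propagate this into the vanishing of the commutator of the resolvents, hence of the spectral projections. I expect this last step to be the main obstacle, since weak commutation on a common core does \emph{not} by itself imply strong commutation of the closures, as Nelson's classical counterexample shows; it is precisely the linear energy bounds, through the good behaviour of the form domain under $e^{is\overline{\phi_{\kappa,j}(f)}}$, that rule out such pathologies and let the formal commutation be integrated up to the unitary level.
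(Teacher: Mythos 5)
The paper offers no argument of its own for this lemma: it is stated as a direct import of (the proofs of) \cite[Proposition 6.4, Theorem 8.3]{CKLW18}, so your plan of verifying their hypotheses and following their proofs is the same strategy at the top level. Before the main point, one remark on your first part: as you read it, the covariance identity is not bookkeeping but circular, since axiom (1dW\ref{1dw:diff}) is already a statement about the \emph{smeared} fields, so under your reading there is nothing to prove at all. The actual content behind the citation is that covariance is \emph{derived}, for fields given by their Fourier modes, from the primarity relations $[L_{\kappa,m},\phi_{\kappa,j,n}]=((D_{\kappa,j}-1)m-n)\phi_{\kappa,j,m+n}$ via the differential-equation argument and the simplicity of $\diff(S^1)$, exactly as the paper later does for charged fields in Lemma \ref{lm:chargeddiff}; your sketch does not engage with this.

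The genuine gap is in your essential self-adjointness step. What you verify are the operator-norm bounds $\|\phi_{\kappa,j}(f)\Psi\|\le C\|(L_{\kappa,0}+\bb1)\Psi\|$ and $\|[L_{\kappa,0},\phi_{\kappa,j}(f)]\Psi\|=\|\phi_{\kappa,j}(f')\Psi\|\le C'\|(L_{\kappa,0}+\bb1)\Psi\|$, and you assert that these are ``exactly the hypotheses of a Nelson-type commutator theorem'', from which you moreover extract invariance of $\dom((L_{\kappa,0}+\bb1)^{1/2})$ under $e^{is\overline{\phi_{\kappa,j}(f)}}$ with exponential estimates. Neither claim holds as stated. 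Nelson's commutator theorem \cite[Theorem X.37]{RSII} requires a \emph{quadratic-form} bound $|\<\Psi,[L_{\kappa,0},\phi_{\kappa,j}(f)]\Psi\>|\le c\,\<\Psi,(L_{\kappa,0}+\bb1)\Psi\>$, which is strictly stronger than the operator-norm bound you have and does not follow from it tautologically; the paper's Appendix dwells on precisely this point (with a counterexample) and for that very reason works with bounds on the \emph{higher} commutators instead. (For a symmetric commutator the form bound can in fact be recovered, e.g.\ by a Heinz--Kato interpolation argument or by a direct mode-expansion estimate, but that is an additional step you would have to supply.) Moreover, even granting the form bound, Nelson's theorem yields only essential self-adjointness; the form-domain invariance with exponential growth that your final propagation step relies on is the conclusion of the Driessler--Fr\"ohlich theorem \cite{DF77}, which carries its own domain subtleties, cf.\ \cite[Appendix C]{Tanimoto16-1}. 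So the chain of implications breaks exactly at the point on which the whole lemma turns. The cleanest repair is already in the paper: your own observation that the commutator ``is again of the same form'' iterates, $\delta^k(\phi_{\kappa,j}(f))=\pm\phi_{\kappa,j}(f^{(k)})$ being a smeared field with a linear energy bound for every $k$; hence the hypotheses of Theorem \ref{th:locality-doublecomm} (linear bounds on the commutators up to order three, plus weak commutation on the common domain from (1dW\ref{1dw:locality})) are satisfied, and that theorem gives essential self-adjointness and strong commutativity in one stroke, with no need for form bounds or form-domain invariance.
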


A general two-dimensional conformal field is not chiral and depends on both lightlike variables.
There are fields called primary fields and they are distinguished by the conformal dimensions $(D_\rL, D_\rR)$,
where $D_\rL, D_\rR > 0$ and can be non-integer. We set $\beta_{D_\rL, D_\rR}(\gamma) = \beta_{D_\rL}(\gamma_\rL)\beta_{D_\rR}(\gamma_\rR)$
for $\gamma = \gamma_\rL \times \gamma_\rR \in \overline{\diff(S^1)}\times\overline{\diff(S^1)}$.
This can be extended to $\scC$ if $D_\rL - D_\rR \in \bbZ$, which we always assume (cf. Section 4.1).
Let $H$ be the generator of $\{R_t \times R_t: t \in \bbR\}$ in $U$,
which plays a similar role to that of $L_0$ in chiral fields.

A \textbf{conformal Wightman field theory on $\cE$} is a family of operator-valued distributions $\{\widetilde\psi_j\}$ on $\cE$,
closed under conjugation, with corresponding conformal dimensions $\{(D_{\rL,j}, D_{\rR, j})\}$ defined on a common invariant domain $\scD \subset \cH$
consisting of vectors of the form $\widetilde\psi_{j_1}(f_1)\cdots \widetilde\psi_{j_n}(f_n)\Omega$,
a unitary projective representation $U$ of $\scC$ and a vector $\Omega \in \cH$ such that
\begin{enumerate}[{(2dW}1{)}]
 \item \label{2dw:locality}\textbf{Locality:} for $f, g \in C^\infty(\cE)$ with spacelike separated supports, $[\widetilde\psi_{j_1}(f), \widetilde\psi_{j_2}(g)] = 0$.
 \item \label{2dw:poincare}\textbf{Diffeomorphism covariance:} $U(\gamma)$ preserves $\scD$ and it holds that
 \[
  \Ad U(\gamma)(\widetilde\psi_j(f)) = \widetilde\psi_j(\beta_{D_{\rL, j}, D_{\rR, j}}(\gamma)f), \qquad \text{ for } \gamma\in\scC
 \]
 \item \label{2dw:positiveenergy}\textbf{Positivity of energy:}
 the joint spectrum of the translation subgroup of $\bbR^{1+1}$ in $U$ is contained in the closed forward light cone
 $\overline {V_+}=\{(a_0,a_1)\in\bbR^{1+1}: a_0^2-a_1^2\geq0, a_0\geq0\}$.
 \item \label{2dw:vacuum} \textbf{Vacuum and the Reeh-Schlieder property: }there exists a unique (up to a phase)
 vector $\Omega \in\cH$ such that $U(\gamma)\Omega=\Omega$ for $\gamma\in \overline{\Mob}\times\overline{\Mob}/\mathfrak R$
 and $\cH$ is spanned by vectors of the form $\widetilde\psi_{j_1}(f_1)\cdots \widetilde\psi_{j_k}(f_k)\Omega$.
 \item \label{2dw:peb}
 \textbf{Polynomial energy bounds:}
 it holds that $\scD = C^\infty(H)$ and there are $p_j > 0$ such that, for $\Psi \in C^\infty(H)$,
 \[
 \|\psi_j(f)\Psi\| \le C_f \|H^{p_j}\Psi\|,
 \]
 for some $C_f>0$ depending only on $f$.
 \setcounter{2dW}{\value{enumi}}
\end{enumerate}

As is well-known, linear energy bounds ($p_j = 1$) assure that the conformal Wightman fields commute strongly.
\begin{lemma}
 Let $\{\widetilde\psi_j\}$ satisfy the bound $\|\widetilde\psi_j(f)\Psi\| \le C_{f,j}\|H\Psi\|$.
 Then $\widetilde\psi_{j_1}(f)$ and $\widetilde\psi_{j_2}(g)$ commute strongly for $f,g$ with spacelike supports.
\end{lemma}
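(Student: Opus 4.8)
The proof runs parallel to that of the preceding chiral lemma and to the corresponding statements in \cite{CKLW18}, with the conformal Hamiltonian $H$ on $\cE$ (which is $\geq 0$ by (2dW\ref{2dw:positiveenergy})) playing the role of $L_{\kappa,0}$. The plan is to upgrade the weak commutativity provided by the locality axiom (2dW\ref{2dw:locality}) on the common core $\scD = C^\infty(H)$ to commutativity of the associated self-adjoint operators together with their spectral projections (strong commutativity). First I would reduce to real test functions: since $\{\widetilde\psi_j\}$ is closed under conjugation, an arbitrary $f$ splits into real and imaginary parts and the corresponding field can be symmetrized, so it suffices to treat the symmetric operators $A := \overline{\widetilde\psi_{j_1}(f)}$ and $B := \overline{\widetilde\psi_{j_2}(g)}$ attached to real $f, g$ with spacelike separated supports. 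On $\scD$ these are symmetric and, by (2dW\ref{2dw:locality}), satisfy $[A, B] = 0$.

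The second step is essential self-adjointness of $A$ (and of $B$) on $C^\infty(H)$. Differentiating the covariance relation (2dW\ref{2dw:poincare}) along the one-parameter rotation group $\{R_t \times R_t\}$ generated by $H$ shows that $[H, \widetilde\psi_j(f)]$ equals, up to a constant, a smeared field $\widetilde\psi_j(\dot f)$, where $\dot f$ is the image of $f$ under the infinitesimal action of $\beta_{D_{\rL,j}, D_{\rR,j}}$ and is again smooth on $\cE$. Hence, by the linear energy bound in (2dW\ref{2dw:peb}) applied to both $f$ and $\dot f$, the operators $A$ and $[H, A]$ are dominated by $H + \bb1$. A commutator theorem of Glimm--Jaffe--Nelson type (exactly as used in \cite{CKLW18}, after replacing $H$ by $H + \bb1 \geq \bb1$) then gives that $A$ is essentially self-adjoint on every core for $H$, in particular on $C^\infty(H)$.

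Third, I would control the energy along the unitary flow generated by $A$. By a Driessler--Fröhlich-type argument (as carried out in \cite{CKLW18}), the linear bounds on $A$ and on $[H, A]$ imply that $e^{itA}$ preserves $\dom(H)$ with energy growing at most exponentially in $|t|$; since the linear bound $\|B\Psi\| \leq C\|H\Psi\|$ extends $B$ to all of $\dom(H)$, the vector $e^{itA}\Psi$ remains in $\dom(B)$. This is precisely what the \emph{linear} energy bound buys, and it is what makes the next step legitimate. Using weak commutativity on $\scD$ together with this domain invariance, the function $t \mapsto \langle \Phi, e^{-itA} B\, e^{itA}\Psi\rangle$ is differentiable with vanishing derivative, so $B$ commutes with each $e^{itA}$; running the same argument once more in the $B$-variable yields $[e^{isB}, e^{itA}] = 0$, i.e.\ $A$ and $B$ commute strongly, which is the claim.

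The main obstacle is the third step: justifying rigorously that the flow $e^{itA}$ preserves a domain on which $B$ acts and on which $H$ stays controlled, so that the weak commutator identity can be integrated to a strong one. Conceptually this should be expected, since spacelike separation on $\cE$ forces the supports to be disjoint in \emph{each} chiral lightray variable (with opposite orderings in the two variables), so that one anticipates a reduction to the chiral estimate of \cite{CKLW18}; but carrying this out directly on $\cE$, where the grading shift entangles the two chiral factors and $H$ only dominates the bulk fields, is where the analytic work concentrates.
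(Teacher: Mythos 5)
You have the right ingredients in view (weak locality from (2dW\ref{2dw:locality}), the fact that commuting with $H$ produces another smeared field, a commutator theorem), but there are two genuine gaps, and both are exactly the points the paper's proof is built to avoid. The first is your second step: you claim essential self-adjointness of $A=\overline{\widetilde\psi_{j_1}(f)}$ from operator-norm bounds on $A$ and $[H,A]$ alone, via a ``Glimm--Jaffe--Nelson type'' theorem. This is insufficient, and the Appendix of the paper says so explicitly: Nelson's commutator theorem \cite[Theorem X.37]{RSII} requires the commutator bound as a \emph{quadratic form}, $|\<\Psi,[H,A]\Psi\>|\le C\<\Psi,H\Psi\>$, and an operator bound $\|[H,A]\Psi\|\le C\|(H+\bb1)\Psi\|$ does not imply the form bound in general (see the footnote there with the $2\times2$ counterexample). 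The commutator theorem that works with operator bounds, namely the first assertion of Theorem \ref{th:locality-doublecomm}, needs in addition a bound on the \emph{second} commutator $\delta^2(A)=-[H,[H,A]]$.

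The second gap is your third step, which you yourself flag as the main obstacle and do not carry out: domain invariance of $\dom(H)$ under $e^{itA}$ with controlled energy growth (Driessler--Fr\"ohlich \cite{DF77}) and integration of the weak commutator. The paper never does this; its proof is much shorter. Since $H$ generates the rotations $R_t\times R_t$ of $\cE$, one has $[H,\widetilde\psi_j(f)]=i\widetilde\psi_j(f')$ as in \cite[(3.7)]{CDVIT21}, and — this is the step you stop short of — one \emph{iterates} this identity: $\delta^k(\widetilde\psi_j(f))=i^k\,\widetilde\psi_j(f^{(k)})$ for $k=1,2,3$ is again a smeared field, hence again satisfies a linear energy bound $\|\widetilde\psi_j(f^{(k)})\Psi\|\le C_{f^{(k)},j}\|H\Psi\|$, because the hypothesis holds for \emph{every} test function with a constant depending on the function. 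With linear bounds on $\delta^k(A)$ and $\delta^k(B)$ for $k\le 3$ and the weak commutativity $AB=BA$ on the common domain, the second assertion of Theorem \ref{th:locality-doublecomm} (a packaging of \cite[Theorems 19.4.3, 19.4.4]{GJ87}) gives essential self-adjointness \emph{and} strong commutativity of the closures in one stroke; no unitary flow, no exponential energy estimate, and no reduction to chiral variables is needed, so your closing worry about the grading shift entangling the two chiral factors simply does not arise.
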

\begin{proof}
 As $H$ is the generator of the one-parameter group $R_t\times R_t$ in $\overline{\Mob}\times\overline{\Mob}$,
 we have $[H, \widetilde\psi_j(f)] = i\widetilde\psi_j(f')$ as in \cite[(3.7)]{CDVIT21},
 where $f'$ is the derivative of $f$ with respect to translations $R_t\times R_t$ of the cylinder $\cE$.
 Furthermore, we have $\|[H, \widetilde\psi_j(f)]\Psi\| = \|\widetilde\psi_j(f')\Psi\| \le C_{f',j}\|H\Psi\|$.
 By applying commutator with $H$, $\|\delta^k(\widetilde\psi_j(f))\Psi\| = \|\widetilde\psi_j(f^{(k)})\Psi\| \le C_{f^{(k)},j}\|H\Psi\|$, $k=2,3$,
 where $\delta(A) = i[H, A]$.
 Therefore, $\widetilde\psi_1(f)$ and $\widetilde\psi_2(g)$ satisfy the assumption of Theorem \ref{th:locality-doublecomm} with $H$ as the reference operator
 and they strongly commute.
\end{proof}

\subsection{Formal series of operators}\label{formal}

Let $A_s$ be a family of operators parametrized by $s \in \bbR$.
By a formal series we mean a symbol of the form $\sum_s A_s z^s$,
where the summation actually has no meaning.
We refer to $A_s$ as \textbf{Fourier components} of a formal series.
One can consider sums and scalar multiples of such formal series:
$\sum_s A_s z^s + \sum_s B_s z^s = \sum_s (A_s+B_s) z^s$,
$c\sum_s A_s z^s = \sum_s cA_s z^s$.
The product of two formal series is not always defined, but in some cases
there is a natural product. We define the product of $\sum_s A_s z^s$
and $\sum_s B_s z^s$ as the formal series $\sum_s C_s z^s$ with 
the coefficient $C_s = \sum_{t\in \bbR} A_{s-t}B_t$,
whenever these sums make sense.

Similarly, we consider formal series in two or more variables.
If we have two formal series in two different variables $z,w$,
the product $\sum_s A_s z^s \sum_t B_t w^t = \sum_{s,t} A_s B_t z^s w^t$
makes always sense. A product of such formal series 
in two variables is defined similarly as above.

A typical case we use in this paper is when the vector spaces are graded as $V^{j} = \bigoplus_{t \in \bbN + h_j} V^j_t$,
where $j=1,2$, $h_j \in \bbR$ and $A_s$ are operators $V^1_t\to V^2_{t-s}$ and
they are nonzero only for $s \in \bbZ + D$ for some $D\in \bbR$ and $t-s \in \bbN + h_2$.
We also consider the situation where $V = \bigoplus_j V^j$
and $\sum_s A_s$ is a formal series on $V$, that has the form for each pair $V^{j_1}, V^{j_2}$ as above
(and $D_{j_1, j_2}$ may depend on $j_1, j_2$).

Let $\sum_s A_s z^s, \sum_s B_s z^s$ such two formal series with $[A_{s_1}, B_{s_2}] = 0$
for all $s_1, s_2 \in \bbR$. In the situation of the previous paragraph, we can define the product
$\sum_s A_s z^s \sum_t B_s z^s = \sum_{s}z^s\sum_{t} A_{s-t}B_t$
(this is a special case of normal product, see \cite[Section 2.1]{CTW23}):
indeed, for each vector $\Psi \in V$ the sum over $t$ is finite because
either $A_{s-t}\Psi$ or $B_t\Psi$ vanishes for large $|t|$. Therefore, the sum $\sum_{t} A_{s-t}B_t$
defines an operator on $V$ and this is a formal series on $V$.

\subsection{Charged primary fields}\label{charged-fields}

We continue studying a single chiral component, but we omit $\kappa$ from certain parameters that do not appear later. We assume that a conformal net $(\cA_\kappa, U_\kappa, \Omega_\kappa)$ on $S^1$ is generated by conformal Wightman fields $\{\phi_{\kappa,j}\}$ which commute strongly when smeared with test functions with disjoint supports. 

Unitary operators $e^{i\phi_{\kappa,j}(f)}$ with compactly supported $f$ are represented in a representation $\rho$ as a unitary operator, and the question arises whether the operator $\rho(\phi_{\kappa,j}(f))$ makes sense and whether it is the generator of $\rho(e^{i\phi_{\kappa,j}(f)})$. This property is called \textit{strong integrability}, and veryfing it in examples is a non-trivial problem (although it holds in interesting cases, like the U(1)-current \cite{CWX, Gui20-2}). Furthermore, we can also consider localized unbounded intertwiners between representations and there is the problem whether the commutation relations between them hold strongly (\textit{strong intertwining property} and \textit{strong braiding}) \cite{Gui20-2},
see also \cite{Tener19Representation, Tener19Fusion}.
These properties should be useful in comparing the extension of nets and Wightman fields, see Remark \ref{rm:strong}.

Let $\kappa(g)$ be an automorphism of the conformal net $\cA_\kappa$ for $g \in G$ as in Section \ref{chiral}.
We consider the Fourier components of chiral fields and assume that they have corresponding operators $\phi_{\kappa,j,n}^{\kappa(g)}$
on the representation spaces $\cH_{\kappa}^{\kappa(g),\fin}$, where $\cV^\fin$ denotes the linear span of eigenspaces
$\cV_t = \ker(L_{\kappa,0} - t)$ of $L_{\kappa,0}$,
where $L_{\kappa,0}$ is represented on $\cV$ as an unbounded operator.
We denote the algebraic direct sum of the $\phi^{\kappa(g)}_{\kappa,j,n}$ over $g\in G$
by $\hat\phi_{\kappa,j,n}$ defined on the (algebraic) linear span $\bigoplus_{g \in G}^\text{alg} \cH_{\kappa}^{\kappa(g),\fin}$
in $\bigoplus_{g \in G} \cH_{\kappa}^{\kappa(g),\fin}$.
Similarly, we denote $\hat L_{\kappa,n} = \bigoplus_{g\in G} L_{\kappa,n}^{\kappa(g)}$,
where $L_{\kappa, n}^{\kappa(g)}$ is a representation of the Virasoro algebra on $\cH_{\kappa}^{\kappa(g), \fin}$.

Let $(\cA_\kappa, U_\kappa, \Omega_\kappa)$ be a conformal net on $S^1$ generated by fields $\{\phi_{\kappa,j}\}$ with conformal dimensions $\{D_j\}$
as above and a collection $\Delta_\kappa$ of automorphisms to which there is a bijective map $\kappa$ from a group $G$.
A \textbf{charged primary field} $\{\psi_\kappa^h(\xi,s)\}_{s \in \bbR}$ is
a family of operators
(labelled by $h \in G$ and $\xi$ in some index set\footnote{Examples where
we need all these labels are the WZW models associated with a simply laced, simply connected, simple, connected compact group $G$ (different from the group above) at level $1$.
The sectors are parametrized by the center $Z(G)$,
which can be identified with $\Lambda_W/\Lambda_R$,
the weight lattice quotiented by the root lattice \cite[Lemma I.2.1.1]{Toledano-LaredoThesis}.
The primary fields are of the form $\Phi_\mu(z)$, where $\mu$ is a minimal weight of $G$ \cite[Section V.5.3]{Toledano-LaredoThesis}.
In this example, $h$ is $\mu + \Lambda_R \in \Lambda_W/\Lambda_R$, while $\mu$ is one of minimal weights,
playing the role of $\xi$ in our notations. See \cite[Theorem V.5.3.2]{Toledano-LaredoThesis}
for relative locality of these fields.}
$\Xi$ depending on $\kappa$ and $h$), acting on the domain $\bigoplus_{g \in G}^\text{alg} \cH_{\kappa}^{\kappa(g),\fin}$
with the associated formal series $\psi_\kappa^h(\xi,z) = \sum_{s \in \bbR} \psi_\kappa^h(\xi,s)z^{-s-D_h}$, where $D_h > 0$ is the conformal dimension and $z$ is the formal variable, such that
\begin{itemize}
 \item $\psi_\kappa^h(\xi, s)$ maps $\cH_\kappa^{\kappa(g),\fin}$ to $\cH_\kappa^{\kappa(hg),\fin}$,
 and there is $\ell_{h,g} \in \bbR$ such that $\psi_\kappa^h(\xi, s) \neq 0$ only for $s \in \ell_{h,g} + \bbZ$.
 \item (primarity) $[\hat L_{\kappa,m}, \psi_\kappa^h(\xi, s)] = ((D_h-1)m - s)\psi_\kappa^h(\xi, m+s)$.
 \item (relative locality) $[\hat \phi_{\kappa,j,m}, \psi_\kappa^h(\xi, s)] = \sum_{\xi'} X_{j,\xi'}^\xi\psi_\kappa^h(\xi',m+s)$, where $X_{j,\xi'}^\xi \in \bbC$.
 \item As formal series, the braiding relation holds:
 \begin{align*}
 (1-\textstyle{\frac zw})^{\alpha(h_1, h_2)}w^{\alpha(h_1, h_2)}\psi_\kappa^{h_1}(\xi_1, w)\psi_\kappa^{h_2}(\xi_2, z)
 = (1-\textstyle{\frac wz})^{\alpha(h_1, h_2)}z^{\alpha(h_1, h_2)}\psi_\kappa^{h_2}(\xi_1, z)\psi_\kappa^{h_1}(\xi_2, w)  
 \end{align*}
 where $\alpha(h_1, h_2) \in \bbR$. Recall that
 $(1-u)^\beta = \sum_{n \ge 0} \binom{\beta}{n}(-u)^n$ as a formal series, 
 where $\binom{\beta}{n} = \frac{\beta(\beta-1)\cdots(\beta-n+1)}{n!}$.
 \item We assume that the family of charged primary fields is closed under conjugation,
 in the sense that for given $h, \xi$ there are $\bar h, \bar \xi$ such that
 $\psi^h_\kappa(\xi, z)^* = \psi^{\bar h}_\kappa(\bar \xi, z)$, with the convention $z^* = z^{-1}$ in the formal series expansion.
\end{itemize}

We further assume polynomial energy bounds:
\begin{itemize}
 \item for each $h$, there are $r_\xi, p_\xi, C_\xi \ge 0$ such that $\|\psi_\kappa^h(\xi,s)\Psi\| \le C_\xi(1+|s|)^{r_\xi} \|(L_{\kappa,0} + \bb1)^{p_\xi}\Psi\|$
 for $\Psi \in C^\infty(L_{\kappa,0})$.
\end{itemize}
This allows us to define smeared charged primary fields:
For $f \in C^\infty(S^1\setminus \{-1\})$, we put $f_s = \frac1{2\pi}\int_{-\pi}^\pi f(e^{i\theta})e^{-is\theta}d\theta$
and
\begin{align*}
 \psi_\kappa^{h}(\xi,f) := \sum_{s \in \bbR} f_s \psi_\kappa^h(\xi,s),
\end{align*}
and this defines operators on $C^\infty(\hat L_{\kappa,0})$
(the sum makes sense because $\psi_\kappa^h(\xi,s) = 0$ except for countable $s$ on each $\cH_{\kappa, t}^{\kappa(g)}$).

 Let\footnote{We use the same notation for the braiding of the charged fields and the braiding of sectors in Section \ref{chiral}.
 We plan to show that they are indeed equal under assumptions such as strong braiding \cite{Gui20-2} in a separate publication.}
 $\epsilon^+_{\kappa(h_1), \kappa(h_2)} := \lim_{\Im \zeta > 0, \zeta\to -1}\zeta^{\alpha(h_1, h_2)}$,
 where $\Im \zeta$ is the imaginary part of $\zeta$, $w^{\alpha(h_1, h_2)} = e^{\alpha(h_1, h_2)\log w}$ and
 we take the branch of $\log w$ on $\bbC \setminus (-\infty,0]$ such that $\log w \in \bbR$ on $(0,\infty)$.
\begin{lemma}\label{lm:braiding}
 Let $f, g \in C^\infty(S^1\setminus \{-1\})$ such that
 $\arg \supp f < \arg \supp g$, that is, they are in the counterclockwise order on $S^1\setminus \{-1\}$.
Then it holds that $\psi_\kappa^{h_1}(\xi_1, f)\psi_\kappa^{h_2}(\xi_2, g) = \epsilon^+_{\kappa(h_1), \kappa(h_2)}\psi_\kappa^{h_2}(\xi_2, g)\psi_\kappa^{h_1}(\xi_1, f)$. 
\end{lemma}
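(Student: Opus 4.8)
The plan is to prove the identity on matrix elements between finite-energy vectors and then extend it to all of $C^\infty(\hat L_{\kappa,0})$ using the polynomial energy bounds. So fix $\Phi,\Psi\in\bigoplus_{g\in G}^{\text{alg}}\cH_\kappa^{\kappa(g),\fin}$ and write $\alpha:=\alpha(h_1,h_2)$. Because each Fourier component $\psi_\kappa^h(\xi,s)$ shifts the $\hat L_{\kappa,0}$-grading by $-s$ and is nonzero only for $s$ in a single coset of $\bbZ$, inserting a resolution of the identity by finite-energy eigenvectors shows that $\langle\Phi,\psi_\kappa^{h_1}(\xi_1,w)\psi_\kappa^{h_2}(\xi_2,z)\Psi\rangle$ is a Laurent series in $z,w$ with spectrum bounded below which, by radial ordering, converges for $|z|<|w|$ to a function analytic there; the energy bounds furnish the polynomial growth of its coefficients needed for convergence in the open annulus and for the existence of boundary values on $\{|z|=|w|=1\}$ after smearing. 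Symmetrically, the reversed matrix element $\langle\Phi,\psi_\kappa^{h_2}(\xi_2,z)\psi_\kappa^{h_1}(\xi_1,w)\Psi\rangle$ converges for $|w|<|z|$.

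Next I would read the assumed braiding relation for the formal series as an identity of these two analytic functions. Multiplying the first by $(1-z/w)^{\alpha}w^{\alpha}$ and the second by $(1-w/z)^{\alpha}z^{\alpha}$ yields, by hypothesis, the same formal series; interpreted on the finite-energy matrix elements, where the series converge, this says that the two operator orderings are boundary values, from the two radial orderings, of one and the same function, so passing from one order to the other is the analytic continuation of $z/w$ from inside to outside the unit circle. On the circle itself, with $w=e^{i\phi}$ and $z=e^{i\theta}$, the boundary values of $(1-z/w)^{\alpha}$ and $(1-w/z)^{\alpha}$ are the principal-branch values $(1-e^{i(\theta-\phi)})^{\alpha}$ and $(1-e^{-i(\theta-\phi)})^{\alpha}$, which are unambiguous as long as $z\neq w$.

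The decisive point is the geometry imposed by $\arg\supp f<\arg\supp g$: it forces $\delta:=\theta-\phi$ to range over a compact subset of $(0,2\pi)$, so that $z\neq w$ throughout and the ratio of the full prefactors is
\begin{align*}
\frac{(1-w/z)^{\alpha}z^{\alpha}}{(1-z/w)^{\alpha}w^{\alpha}}
=\Big(\tfrac{1-w/z}{1-z/w}\Big)^{\alpha}(z/w)^{\alpha}
=\big(-e^{-i\delta}\big)^{\alpha}e^{i\alpha\delta}=e^{i\pi\alpha},
\end{align*}
the $\delta$-dependence cancelling precisely because of the $w^{\alpha},z^{\alpha}$ factors. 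Here $-e^{-i\delta}=e^{i(\pi-\delta)}\to-1$ through the upper half-plane as $\delta\to0^+$, which is exactly the branch fixed by $\epsilon^+_{\kappa(h_1),\kappa(h_2)}=\lim_{\Im\zeta>0,\,\zeta\to-1}\zeta^{\alpha}$, so the emergent scalar is $\epsilon^+_{\kappa(h_1),\kappa(h_2)}$ and is independent of the supports. Integrating the resulting boundary-value identity against $f$ in $w$ and $g$ in $z$ (the weight factors $w^{D_{h_1}},z^{D_{h_2}}$ being absorbed into the test functions as in the definition of the smeared fields) gives $\langle\Phi,\psi_\kappa^{h_1}(\xi_1,f)\psi_\kappa^{h_2}(\xi_2,g)\Psi\rangle=\epsilon^+_{\kappa(h_1),\kappa(h_2)}\langle\Phi,\psi_\kappa^{h_2}(\xi_2,g)\psi_\kappa^{h_1}(\xi_1,f)\Psi\rangle$ for all finite-energy $\Phi,\Psi$.

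Finally, the polynomial energy bounds, together with the rapid decay of the Fourier coefficients $f_s,g_s$ of the smooth test functions, guarantee that both smeared products are well-defined and continuous operators on $C^\infty(\hat L_{\kappa,0})$, on which the finite-energy vectors are dense and invariant; hence the matrix-element identity upgrades to the asserted operator identity. The main obstacle is the middle step: rigorously justifying the interchange of the smearing summations with the boundary-value limit/analytic continuation, and, above all, pinning down the branch so that the phase produced is $\epsilon^+$ rather than its conjugate. Both hinge on translating the support-ordering hypothesis into a definite side of the branch cut (the approach $\zeta\to-1$ with $\Im\zeta>0$ above) and on the energy bounds providing enough uniform control to deform contours and exchange the infinite sums with the limits.
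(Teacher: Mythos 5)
Your proposal is correct and takes essentially the same route as the paper's proof: matrix elements between finite-energy vectors, radial-ordering convergence supplied by the polynomial energy bounds (the paper's $a<1<b$ regularization of the two orderings), interpretation of the assumed formal-series braiding relation as an identity of distributions on the two arcs separated by the supports, and identification of the quotient of the prefactors $(1-\tfrac{w}{z})^{\alpha}z^{\alpha}$ and $(1-\tfrac{z}{w})^{\alpha}w^{\alpha}$ as the constant phase fixed by the support ordering. The interchange-of-limits step you flag as the main obstacle is precisely what the paper settles with its $\varphi_i^{a,b}$ regularization argument, and your explicit branch computation $(-e^{-i\delta})^{\alpha}e^{i\alpha\delta}=e^{i\pi\alpha}$ coincides with the paper's assertion that this quotient equals $\lim_{\Im\zeta>0,\,\zeta\to-1}\zeta^{\alpha(h_1,h_2)}=\epsilon^+_{\kappa(h_1),\kappa(h_2)}$.
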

\begin{proof}
 Let $z_0 \in S^1, z_0 \neq -1$ such that $\arg \supp f < \arg z_0 < \arg\supp g$,
 and $I_+$ be the interval on $S^1$ from $z_0$ to $-1$ (counterclockwise)
 while $I_-$ be the interval on $S^1$ from $-1$ to $z_0$, hence we have
 $\supp f \subset I_-, \supp g \subset I_+$  (see Figure \ref{fig:braiding}),
 and put $f_s = \frac1{2\pi}\int_{-\pi}^\pi f(e^{i\theta})e^{-is\theta}d\theta$,
 $g_t = \frac1{2\pi}\int_{-\pi}^\pi g(e^{i\theta})e^{-it\theta}d\theta$.
 As $f,g$ are smooth, these coefficients are rapidly decaying.

 The product $\psi_\kappa^{h_1}(\xi_1, w)\psi_\kappa^{h_2}(\xi_2, z)$ is a formal series
 in $w, z$. For finite-energy vectors $\Psi_1, \Psi_2$ in some $\cH_{\kappa}^{\kappa(h_3),\fin}, \cH_\kappa^{\kappa(h_4),\fin}$,
 the scalar product
 $\varphi_1(w,z) := \<\Psi_1, \psi_\kappa^{h_1}(\xi_1, w)\psi_\kappa^{h_2}(\xi_2, z)\Psi_2\>$ can be considered as a formal series in $w,z$ with
 coefficients in $\bbC$, and by the first assumption on charged primary fields, it is just a countable sum. 
 Furthermore, these coefficients of $w^s z^t$ vanish for $s$ large and $t$ small,
 because each $\cH_{\kappa}^{\kappa(h_j),\fin}$ is a positive-energy representation of the Virasoro algebra.
 By polynomial energy bounds, it can be seen as a distribution in $w,z$,
 which we denote again by $\varphi_1(w,z)$.
 Then it holds that $\varphi_1(f,g) = \sum \varphi_1(s,t)f_s g_t$,
 where $\varphi_1(s,t)$ is the coefficient of the formal series $\varphi_1(w,z)$ of $w^s z^t$
 (with a slight abuse of notation).

 Let $a<1$ and $b>1$. 
 We introduce
 $\varphi_1^{a,b}(z,w) := \<\Psi_1, \psi_\kappa^{h_1}(\xi_1, bw)\psi_\kappa^{h_2}(\xi_2, az)\Psi_2\>$,
 then it has only finitely many terms with negative powers in $z$ and those with positive powers in $w$,
 therefore, it is a convergent series for $|z| = |w| = 1$ again by the polynomial energy bounds
 and the choice $a<1, b>1$.
 Now it can be seen as a function on $I_-\times I_+$, thus defines a distribution and $\varphi_1^{a,b}(f,g) = \sum \varphi^{a,b}_1(s,t)f_s g_t$,
 in the sense above. Furthermore, as $a \to 1, b \to 1$, this converges to $\varphi_1(f,g)$.
 
 The formal series $(1-\textstyle{\frac zw})^{\alpha(h_1, h_2)}$
 has only negative powers in $w$ and positive powers in $z$, therefore, the product
 \begin{align*}
  \<\Psi_1, (1-\textstyle{\frac zw})^{\alpha(h_1, h_2)}\psi_\kappa^{h_1}(\xi_1, w)\psi_\kappa^{h_2}(\xi_2, z)\Psi_2\>
 \end{align*}
 makes sense again as a formal series (see the remark on formal series above).
 Again by polynomial energy bounds, these coefficients grow only polynomially in $s,t$.
 Therefore, it can be seen as a distribution with two variables $w \in I_-, z \in I_+$,
 which we denote by $\varphi_2(w,z)$. Define also $\varphi^{a,b}_2(w,z) = \varphi_2(bw,az)$.
 As $a\to 1, b \to 1$, each coefficients converge, and $\varphi^{a,b}_2(f,g)$ converge
 as distributions to $\varphi_2(f,g)$ by polynomial energy bounds.

 Note that the expansion $(1-u)^\beta = \sum_{n \ge 0} \binom{\beta}{n}(-u)^n$
 converges for $|u| < 1$, and can be analytically continued to $\bbC \setminus [1,\infty)$.
 This expansion coincides with the definition $(1-u)^\beta = e^{\beta \log(1-u)}$
 as an analytic function in $u \in \bbC \setminus [1,\infty)$,
 where $\log w$ is defined on $\bbC \setminus (-\infty,0]$ and $\log w \in \bbR$ when $w \in (0,\infty)$
 (the last condition fixes a branch of $\log w$ uniquely).
 We use the same continuation for $\log w$ when $w^{\beta} = e^{\beta \log w}$ is considered as a function.
 If we consider $(1-\frac w z)^{\alpha(h_1, h_2)}$ as a function in $w,z$,
 we can multiply it to the above distribution $\varphi_1$ and obtain
 a distribution in $I_- \times I_+$, which we denote by $\varphi_3(w,z)$.
 We also introduce $\varphi_3^{a,b}(w,z) = \varphi_3(bw,az)$.
 Then $\varphi_3^{a,b}(f,g)$ converges to $\varphi_3(f,g)$ as $a\to 1, b \to 1$
 because both $\varphi_1^{a,b}(w,z)$ and $(1-\frac{bw}{az})^{\alpha(h_1, h_2)}$
 converge as distributions and smooth functions on $I_-\times I_+$, respectively.
 
 This shows that $\varphi_2(f,g) = \varphi_3(f,g)$.
 That is, $(1-\textstyle{\frac zw})^{\alpha(h_1, h_2)}\psi_\kappa^{h_1}(\xi_1, w)\psi_\kappa^{h_2}(\xi_2, z)$
 can be seen as the operator-valued distribution multiplied by
 a smooth function $(1-\textstyle{\frac zw})^{\alpha(h_1, h_2)}$ on $I_-\times I_+$.
 
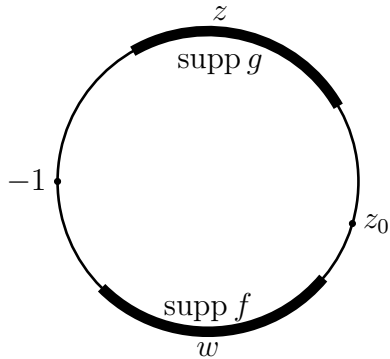
\begin{figure}[ht]
\centering
\begin{tikzpicture}[scale = 0.8]
\clip(-4,-3) rectangle (4,3);
\draw [line width=1pt] (0,0) circle (2.5);

\draw (-2.5,0) node[anchor=east] {$-1$};
\draw (2.4,-0.7) node[anchor=west] {$z_0$};
\draw (0,-2.5) node[anchor=south] {$\supp f$};
\draw (0,-2.5) node[anchor=north] {$w$};
\draw (0.2,2.3) node[anchor=north] {$\supp g$};
\draw (0.2,2.5) node[anchor=south] {$z$};
\draw [fill](-2.5,0) circle (0.05);
\draw [fill](2.4,-0.7) circle (0.05);

\draw [line width=4pt] (2.165063509, 1.25) arc (30:120:2.5) ;
\draw [line width=4pt] (-1.767766953, -1.767766953) arc (-135:-40:2.5) ;

\end{tikzpicture}\caption{The circle, the point of infinity ($-1$) and two functions with disjoint supports.}
\label{fig:braiding}
\end{figure}

 Similarly, $(1-\textstyle{\frac wz})^{\alpha(h_1, h_2)}\psi_\kappa^{h_2}(\xi_1, z)\psi_\kappa^{h_1}(\xi_2, w)$
 can be seen as an operator-valued distribution.
 Therefore, by the braiding relation of the primary fields, we have the equality
 \begin{align*}
 (1-\textstyle{\frac zw})^{\alpha(h_1, h_2)}w^{\alpha(h_1, h_2)}\psi_\kappa^{h_1}(\xi_1, w)\psi_\kappa^{h_2}(\xi_2, z)
 = (1-\textstyle{\frac wz})^{\alpha(h_1, h_2)}z^{\alpha(h_1, h_2)}\psi_\kappa^{h_2}(\xi_1, z)\psi_\kappa^{h_1}(\xi_2, w)  
 \end{align*}
 in the sense of operator-valued distributions, under the restriction on the supports of $f$ and $g$.
 Seen as functions, the quotient of $(1-\textstyle{\frac wz})^{\alpha(h_1, h_2)}z^{\alpha(h_1, h_2)}$
 by $(1-\textstyle{\frac zw})^{\alpha(h_1, h_2)}w^{\alpha(h_1, h_2)}$
 is exactly $\lim_{\Im\zeta > 0, \zeta \to -1}\zeta^{\alpha(h_1, h_2)}$.
 Therefore, by dividing the equation by the former factor, we obtain
 \begin{align*}
 \psi_\kappa^{h_1}(\xi_1, w)\psi_\kappa^{h_2}(\xi_2, z)
 = \epsilon^+_{\kappa(h_1), \kappa(h_2)}\psi_\kappa^{h_2}(\xi_1, z)\psi_\kappa^{h_1}(\xi_2, w)  
 \end{align*}
 as desired, as operator-valued distributions.
\end{proof}

The following is essentially due to \cite[Section 6.3]{CKL08},
where the case $D=\frac12$ is treated, see also \cite[Proposition 6.4]{CKLW18}, \cite{Toledano-Laredo99-1}.
\begin{lemma}\label{lm:chargeddiff}
 Assume that a charged primary field $\psi_\kappa^h$ has conformal dimension $D$ and satisfies polynomial energy bounds as above
 (we omit the dependence on $\xi$).
 Then it is diffeomorphism covariant, that is,
 with $\hat U_\kappa$ the projective unitary representation of $\overline{\diff(S^1)}$, 
 $\Ad \hat U_\kappa(\gamma)(\psi_\kappa^h(f)) = \psi_\kappa^h(\beta_D(\gamma)(f))$.
\end{lemma}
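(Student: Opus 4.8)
The plan is to integrate the infinitesimal covariance encoded in the primarity relation. Since $\Ad\hat U_\kappa$ and $\beta_D$ are both genuine group homomorphisms (the projective cocycle of $\hat U_\kappa$ cancels in the adjoint action), and $\overline{\diff(S^1)}$ is generated by one-parameter subgroups $\gamma_t = \exp(tX)$ associated with smooth vector fields $X$, it suffices to prove the identity for each such $\gamma_t$ and then compose. I would prove it in the weak form: for all $\Phi, \Psi \in C^\infty(\hat L_{\kappa,0})$, $\langle \Phi, \Ad\hat U_\kappa(\gamma_t)(\psi_\kappa^h(f))\Psi\rangle = \langle\Phi, \psi_\kappa^h(\beta_D(\gamma_t)f)\Psi\rangle$, which makes sense because both $\psi_\kappa^h(\cdot)$ and $\hat U_\kappa(\gamma_t)$ preserve $C^\infty(\hat L_{\kappa,0})$.

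First I would establish the infinitesimal identity. Writing $T(X) = \sum_m \hat X_m \hat L_{\kappa,m}$ for the self-adjoint generator of $t\mapsto\hat U_\kappa(\gamma_t)$ (with $\hat X_m$ the Fourier coefficients of $X$), the primarity relation $[\hat L_{\kappa,m}, \psi_\kappa^h(s)] = ((D-1)m - s)\psi_\kappa^h(m+s)$, after smearing and reindexing, gives $[\hat L_{\kappa,m}, \psi_\kappa^h(f)] = \psi_\kappa^h(g_m)$ where $g_m$ has Fourier coefficients $(g_m)_s = (Dm - s)f_{s-m}$; in physical space $g_m(e^{i\theta}) = e^{im\theta}\big[(D-1)m\,f(e^{i\theta}) + i\tfrac{d}{d\theta}f(e^{i\theta})\big]$. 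Comparing with $\frac{d}{dt}\beta_D(\gamma_t)f\big|_{t=0} = (D-1)v' f - v f'$ for the flow $\theta\mapsto\theta + tv(\theta)+O(t^2)$ shows that, up to the normalization fixing the vector field of $\hat L_{\kappa,m}$, summing over $m$ yields the key identity $i[T(X), \psi_\kappa^h(f)] = \psi_\kappa^h(\delta_X^D f)$, where $\delta_X^D f := \frac{d}{dt}\beta_D(\gamma_t)f\big|_{t=0}$ is the infinitesimal generator of $\beta_D$ along $X$, as forms on $C^\infty(\hat L_{\kappa,0})$.

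To integrate, I would set $B(t) := \psi_\kappa^h(\beta_D(\gamma_t)f)$ and use the group law $\frac{d}{dt}\beta_D(\gamma_t)f = \delta_X^D\,\beta_D(\gamma_t)f$ together with the infinitesimal identity applied to $h_t := \beta_D(\gamma_t)f$ to obtain $\frac{d}{dt}B(t) = \psi_\kappa^h(\delta_X^D h_t) = i[T(X), B(t)]$ on $C^\infty(\hat L_{\kappa,0})$. The conjugated family $G(t) := \Ad\hat U_\kappa(\gamma_{-t})(B(t))$ then has vanishing derivative by the product rule, $\frac{d}{dt}G(t) = \Ad\hat U_\kappa(\gamma_{-t})\big(\frac{d}{dt}B(t) - i[T(X), B(t)]\big) = 0$, so $G(t) = G(0) = \psi_\kappa^h(f)$, which is exactly $B(t) = \Ad\hat U_\kappa(\gamma_t)(\psi_\kappa^h(f))$, the desired covariance for $\gamma_t$. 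Composing over a product of one-parameter subgroups gives the statement for arbitrary $\gamma\in\overline{\diff(S^1)}$.

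The main obstacle is analytic rather than algebraic: one must justify that the formal manipulations above hold as genuine identities of quadratic forms on the core $C^\infty(\hat L_{\kappa,0})$. Concretely, the polynomial energy bounds on $\psi_\kappa^h$ (together with the energy bounds satisfied by the stress-energy tensor) are what guarantee the convergence of the infinite commutator series defining $[T(X), \psi_\kappa^h(f)]$, the termwise differentiability in $t$ of $t\mapsto\langle\Phi, B(t)\Psi\rangle$ (since the Fourier coefficients of $\beta_D(\gamma_t)f$ depend smoothly on $t$ and decay rapidly, uniformly on compact $t$-intervals), and the validity of the product rule used for $G(t)$, which relies on $\hat U_\kappa(\gamma_t)$ preserving $C^\infty(\hat L_{\kappa,0})$ and $T(X)$ being essentially self-adjoint there. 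A secondary point, as in the fermionic case of \cite[Section 6.3]{CKL08}, is that for non-integer $D$ the test functions live on $S^1\setminus\{-1\}$ and carry branch cuts; I would handle this by working on the cover $\overline{\diff(S^1)}$ and, when composing one-parameter subgroups, choosing the factors so that the supports of the intermediate $\beta_D(\cdot)f$ stay away from the point at infinity.
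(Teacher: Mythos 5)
Your proposal follows essentially the same route as the paper's proof: the smeared form of primarity, $i[\hat L_\kappa(f_1),\psi_\kappa^h(f_2)] = \psi_\kappa^h(Df_1'f_2 - f_1f_2')$, is integrated along one-parameter groups $\exp(tf_1)$ by an ODE-uniqueness argument (the paper compares two vector-valued solutions $\Psi_1(t),\Psi_2(t)$ of $\Psi'(t)=i\hat L_\kappa(f_1)\Psi(t)$ with the same initial condition, which is the same device as your derivative-free $G(t)$), with the polynomial energy bounds providing the domain control, exactly as you indicate. The only place you are lighter than the paper is the assertion that $\overline{\diff(S^1)}$ is generated by one-parameter subgroups: this is not elementary, since the exponential map of $\diff(S^1)$ is famously not locally surjective, and the paper obtains it from Thurston's algebraic simplicity of $\diff(S^1)$ (the subgroup generated by one-parameter groups is normal, hence all of $\diff(S^1)$), supplemented by a separate direct check of the covariance for the central element $R_{2\pi}$ of the cover.
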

\begin{proof}

 Let us sketch the arguments of \cite[Section 6.3]{CKL08}.
 Note that $\hat L_{\kappa,n}$ is a representation of the Virasoro algebra on the dense domain $\hat \cH_\kappa^\fin$ in the Hilbert space $\hat \cH_\kappa$.
 By polynomial energy bounds they extend to $C^\infty(\hat L_{\kappa,0})$.
 Furthermore, on this domain we have the commutation relations
 \begin{align*}
  i[\hat L_\kappa(f_1), \psi_\kappa^h(f_2)] = \psi_\kappa^h(Df_1'f_2 - f_1f_2').
 \end{align*}
 From this, the following relation follows:
 \begin{align*}
  \frac{d}{dt}\psi_\kappa^h(\beta_D(\exp(tf_1))f_2)|_{t=0} = i[\hat L_\kappa(f_1), \psi_\kappa^h(f_2)].
 \end{align*}
 Again by energy bounds for $\hat L_{\kappa, n}$, it holds that
 $U_\kappa(\exp(tf))$ preserves the domain $C^\infty(L_{\kappa,0})$.
 
 For $\Psi \in C^\infty(\hat L_{\kappa,0})$, we can consider two vector-valued functions of $t\in \bbR$
 \begin{align*}
  \Psi_1(t) &= \psi_\kappa^h(\beta_D(\exp(tf_1)f_2)\hat U_\kappa(\exp(tf_1))\Psi, \\
  \Psi_2(t) &= \hat U_\kappa(\exp(tf_1))\psi_\kappa^h(f_2)\Psi, 
 \end{align*}
 and they both satisfy the differential equation $\Psi'(t) = i\hat L_\kappa(f)\Psi(t)$.
 As they satisfy the same initial condition $\Psi_1(0) = \Psi_2(0) = \psi_\kappa^h(f_2)\Psi$,
 they must coincide. Therefore, we have $\Ad \hat U_\kappa(\exp(tf_1))(\psi(f_2)) = \psi_\kappa^h(\beta_D(\exp(tf_1))f_2)$.
 
 Recall that $\diff(S^1)$ is algebraically simple \cite{Thurston74}
 while the subgroup generated by one-parameter groups is a normal subgroup of $\diff(S^1)$.
 Therefore, they must coincide.
 This implies that any element in $\overline{\diff(S^1)}$ is a product of elements
 in some one-parameter group and an element in the center $\{R_{2\pi n}\}$.

 Now the relation holds for all $\gamma \in \overline{\diff(S^1)}$
 which is a product of elements in one-parameter groups, and it is straightforward
 to verify it for $R_{2\pi}$, therefore, we have the desired covariance.
\end{proof}

\subsection{One-dimensional Wightman fields arising from trivial total braiding}\label{1dwightman}

Let us first assume that there are two conformal nets $\cA_1, \cA_2$ on $S^1$
and charged primary fields $\psi_1, \psi_2$ of these nets
(here we omit the dependence on $h \in G$ and $\xi \in \Xi$).

\begin{lemma}\label{lm:normalproduct-bounds}
 Let $\psi_1(z)\otimes \bb1, \bb1\otimes\psi_2(z)$ be charged primary fields satisfying polynomial energy bounds
 with respect to an operator $\hat L_0\otimes \bb1 + \bb1\otimes \hat L_0$.
 Then $\psi_1(z)\otimes\psi_2(z)$ satisfies polynomial energy bounds.
\end{lemma}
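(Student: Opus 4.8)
The plan is to pass to Fourier components and exploit the positive-energy grading set up in Section~\ref{formal}. Write $\psi_1(z)=\sum_s\psi_1(s)z^{-s-D_1}$ and $\psi_2(z)=\sum_t\psi_2(t)z^{-t-D_2}$, and abbreviate the reference operator as $\bfL:=\hat L_0\otimes\bb1+\bb1\otimes\hat L_0$. Since $\psi_1(z)\otimes\bb1$ and $\bb1\otimes\psi_2(z)$ act on different tensor factors they commute, so the normal product of Section~\ref{formal} reads $\psi_1(z)\otimes\psi_2(z)=\sum_u z^{-u-D_1-D_2}(\psi_1\otimes\psi_2)(u)$ with Fourier components given by the convolution
\[
(\psi_1\otimes\psi_2)(u)=\sum_{t}\psi_1(u-t)\otimes\psi_2(t).
\]
The goal is to produce constants $C,r,p$ with $\|(\psi_1\otimes\psi_2)(u)\Psi\|\le C(1+|u|)^r\|(\bfL+\bb1)^p\Psi\|$ for all $\Psi\in C^\infty(\bfL)$. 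Throughout I would use the factorwise energy bounds $\|(\psi_1(s)\otimes\bb1)\Phi\|\le C_1(1+|s|)^{r_1}\|((\hat L_0\otimes\bb1)+\bb1)^{p_1}\Phi\|$ and the analogous one for $\bb1\otimes\psi_2(t)$, which hold because each $\psi_i$ is a charged primary field obeying energy bounds on its own representation; note $\hat L_0\ge0$, so each factorwise reference operator is dominated by $\bfL$.

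The first step is to reduce to a homogeneous input. Because the primarity relation with $m=0$ shows that each $\psi_i(s)$ lowers $\hat L_0$ by $s$, the operator $(\psi_1\otimes\psi_2)(u)$ maps the $\bfL$-eigenspace $\cW_\lambda$ into $\cW_{\lambda-u}$. Decomposing $\Psi=\sum_\lambda\Psi_\lambda$ with $\Psi_\lambda\in\cW_\lambda$, the images lie in mutually orthogonal spaces, whence
\[
\|(\psi_1\otimes\psi_2)(u)\Psi\|^2=\sum_\lambda\|(\psi_1\otimes\psi_2)(u)\Psi_\lambda\|^2.
\]
Since also the vectors $(\bfL+\bb1)^p\Psi_\lambda=(\lambda+1)^p\Psi_\lambda$ are orthogonal, it suffices to establish a per-$\lambda$ estimate $\|(\psi_1\otimes\psi_2)(u)\Psi_\lambda\|\le C(1+|u|)^r(\lambda+1)^p\|\Psi_\lambda\|$.

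For fixed $\lambda$ I would next observe that the convolution over $t$ has only finitely many nonzero terms: by positivity of energy, $\psi_2(t)$ annihilates a vector unless its output energy stays above the lowest weight, forcing $t\le\lambda+c$, while $\psi_1(u-t)$ being nonzero forces $t\ge u-\lambda+c'$; hence $t$ ranges over an interval of length $O(\lambda+|u|)$, and on that interval both $|t|$ and $|u-t|$ are $O(\lambda+|u|)$. For each surviving term I would factor $\psi_1(u-t)\otimes\psi_2(t)=(\psi_1(u-t)\otimes\bb1)(\bb1\otimes\psi_2(t))$, apply the bound for $\psi_1\otimes\bb1$, commute $((\hat L_0\otimes\bb1)+\bb1)^{p_1}$ past $\bb1\otimes\psi_2(t)$ (they act on different factors), apply the bound for $\bb1\otimes\psi_2$, and finally use the elementary operator inequality $((\hat L_0\otimes\bb1)+\bb1)^{p_1}((\bb1\otimes\hat L_0)+\bb1)^{p_2}\le(\bfL+\bb1)^{p_1+p_2}$ (from $(a{+}1)^{p_1}(b{+}1)^{p_2}\le(a{+}b{+}1)^{p_1+p_2}$ for $a,b\ge0$) to get
\[
\|(\psi_1(u-t)\otimes\psi_2(t))\Psi_\lambda\|\le C_1C_2(1+|u-t|)^{r_1}(1+|t|)^{r_2}(\lambda+1)^{p_1+p_2}\|\Psi_\lambda\|.
\]
Summing the $O(\lambda+|u|)$ terms and collecting the powers of $(1+|u|)$ into $(1+|u|)^r$ with $r=r_1+r_2+1$ and the powers of $(\lambda+1)$ into $(\lambda+1)^p$ with $p=r_1+r_2+p_1+p_2+1$ gives the per-$\lambda$ bound; combined with the orthogonality identity above this yields the lemma.

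The main obstacle is the third step: converting the formally infinite convolution into a finite sum and extracting explicit polynomial control of both the number of nonzero terms and the individual indices in terms of $\lambda$ and $|u|$. This is precisely where the positive-energy (lowest-weight) structure enters, and once it is pinned down the energy-bound manipulations and the orthogonal summation are routine.
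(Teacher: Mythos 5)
Your proposal is correct and follows essentially the same strategy as the paper's proof: pass to Fourier components, fix an eigenvector of $H=\hat L_0\otimes\bb1+\bb1\otimes\hat L_0$ (orthogonality of the images handles general vectors, exactly as in the paper's last step), use positivity of energy to cut the convolution over $t$ down to $O(\lambda+|u|)$ nonzero terms, bound each term via the individual energy bounds, and collect the polynomial factors. The only cosmetic difference is that you commute the factorwise operator $\bigl((\hat L_0\otimes\bb1)+\bb1\bigr)^{p_1}$ past $\bb1\otimes\psi_2(t)$ exactly (they act on different tensor factors), whereas the paper commutes $(H+\bb1)^{\lceil p_1\rceil}$ through $\bb1\otimes\psi_2(t)$ using the energy-shift relation from primarity; your variant avoids the integer-power rounding but yields the same (non-optimal) polynomial bound.
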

\begin{proof}
 Let us assume polynomial energy bounds:
 there are $r_j, p_j, C > 0$, for $j=1,2$ such that
 \begin{align*}
 \|\psi_j(s)\Psi\| \le C(1+|s|)^{r_j} \|(\hat L_0 + \bb1)^{p_j}\Psi\|
 \end{align*}
 for $\Psi \in C^\infty(\hat L_0)$.
 Note that
 \begin{align*}
  (\hat L_0+\bb1)\psi_2(t-s) &= [\hat L_0, \psi_2(t-s)] + \psi_2(t-s)\hat L_0  + \psi_2(t-s) \\
  &= (s-t + 1)\psi_2(t-s) + \psi_2(t-s)\hat L_0 \\
  &= \psi_2(t-s)(\hat L_0 + (s-t + 1)),
 \end{align*}
 hence $(\hat L_0+\bb1)^q\psi_2(t-s) = \psi_2(t-s)(\hat L_0 + (s-t + 1)\bb1)^q$.
 Then, for a fixed $s$ and $\Psi$ such that $H \Psi = \ell\Psi$ where $H = \hat L_0 \otimes \bb1 + \bb1 \otimes \hat L_0$,
 \begin{align*}
  \|\sum_{t}\psi_1(s-t)\otimes\psi_2(t)\Psi\|
  & \le \sum_{t}C(1+|s-t|)^{r_1} \|(H + \bb1)^{p_1}(\bb1\otimes\psi_2(t))\Psi\| \\
  & \le \sum_{t}C(1+|s-t|)^{r_1} \|(H + \bb1)^{\lceil p_1\rceil}(\bb1\otimes\psi_2(t))\Psi\| \\
  & \le \sum_{t}C(1+|s-t|)^{r_1} \|(\bb1\otimes\psi_2(t))(H + (t+1)\bb1)^{\lceil p_1\rceil}\Psi\| \\
  & = \sum_{t}C(1+|s-t|)^{r_1}(\ell + (t+1))^{\lceil p_1\rceil} \|(\bb1\otimes\psi_2(t))\Psi\| \\
  &\le \sum_{t}C^2(1+|s-t|)^{r_1}(1+|t|)^{r_2}(\ell + (t+1))^{\lceil p_1\rceil} \|(H + \bb1)^{p_2}\Psi\| \\
  &\le C^2(1+|s|)^{2r_1+r_2 + \lceil p_1\rceil} \|(H + \bb1)^{r_1 + r_2 + 2\lceil p_1\rceil+p_2}\Psi\|,
 \end{align*}
 where $\lceil p\rceil$ is the smallest integer larger or equal to $p$,
 in the last line we estimated $1+|s-t| \le 1 + |s| + \ell \le (1+|s|)(1+\ell)$
 because $(\psi_1(s-t)\otimes \bb1)\Psi = 0$ when $t < -s-\ell$ or $(\bb1\otimes\psi_2(t))\Psi = 0$ when $t > \ell$.

 For a general $\Psi = \sum_\ell \Psi_\ell$,
 the $\sum_{t}\psi_1(s-t)\otimes\psi_2(t) \Psi_\ell$ are orthogonal for different $\ell$.
 Therefore, a polynomial energy bound follows for $\psi_1(z)\otimes\psi_2(z)$.
\end{proof}
Clearly, this estimate is not optimal.
It can happen, as we will see with concrete examples, that
a product of two charged primary fields whose Fourier components are bounded $\|\psi(s)\| \le C$
has again bounded Fourier components.

\begin{lemma}\label{lm:normalproduct-covariance}
 Let $\psi_1(z), \psi_2(z)$ be charged primary fields for nets $(\cA_1, U_1, \Omega_1), (\cA_2, U_2, \Omega_2)$
 with polynomial energy bounds.
 Then $\psi_1(z)\otimes \psi_2(z)$ is diffeomorphism covariant
 with respect to $\widetilde U_1 \otimes \widetilde U_2$.
\end{lemma}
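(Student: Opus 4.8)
The plan is to deduce this from Lemma~\ref{lm:chargeddiff} by recognizing $\psi_1(z)\otimes\psi_2(z)$ as a charged primary field of conformal dimension $D_1+D_2$ with respect to the Virasoro generators of $\widetilde U_1\otimes\widetilde U_2$. The representation $\widetilde U_1\otimes\widetilde U_2$ of $\overline{\diff(S^1)}$ is the one generated by the total stress-energy tensor, whose Fourier modes are $\hat L_{1,m}\otimes\bb1+\bb1\otimes\hat L_{2,m}$; its zero mode $H=\hat L_{1,0}\otimes\bb1+\bb1\otimes\hat L_{2,0}$ is exactly the reference operator of Lemma~\ref{lm:normalproduct-bounds}. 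Writing $\psi_1(z)\otimes\psi_2(z)=\sum_s C_s\,z^{-s-(D_1+D_2)}$ with $C_s=\sum_t\psi_1(s-t)\otimes\psi_2(t)$ as in that lemma, one checks directly that $C_s$ maps finite-energy vectors to finite-energy vectors with the correct grading shift, so the product is of the required form.

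The key step is the primarity relation. Using $[\hat L_{1,m},\psi_1(s_1)]=((D_1-1)m-s_1)\psi_1(m+s_1)$ and the analogous identity for $\psi_2$, the Leibniz rule on the tensor product gives, on the finite-energy domain,
\begin{align*}
[\hat L_{1,m}\otimes\bb1+\bb1\otimes\hat L_{2,m}, C_s] = \sum_{s_1+s_2=s}\Big(&((D_1-1)m-s_1)\,\psi_1(m+s_1)\otimes\psi_2(s_2)\\
&+((D_2-1)m-s_2)\,\psi_1(s_1)\otimes\psi_2(m+s_2)\Big).
\end{align*}
Reindexing each summand so that its total mode is $m+s$ (via $t_1=m+s_1,\,t_2=s_2$ in the first term and $t_1=s_1,\,t_2=m+s_2$ in the second), the two contributions combine into
\begin{align*}
\sum_{t_1+t_2=m+s}\big((D_1+D_2)m-(t_1+t_2)\big)\,\psi_1(t_1)\otimes\psi_2(t_2)=((D_1+D_2-1)m-s)\,C_{m+s},
\end{align*}
which is exactly the primarity relation of a charged primary field of conformal dimension $D_1+D_2$.

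With primarity established and polynomial energy bounds supplied by Lemma~\ref{lm:normalproduct-bounds} (with reference operator $H$), the product field satisfies precisely the two hypotheses used in the proof of Lemma~\ref{lm:chargeddiff}, so I would run that proof verbatim (setting $\widetilde U:=\widetilde U_1\otimes\widetilde U_2$). The smeared form of primarity reads $i[\hat L(f_1),(\psi_1\otimes\psi_2)(f_2)]=(\psi_1\otimes\psi_2)\big((D_1+D_2)f_1'f_2-f_1f_2'\big)$, where $\hat L(f):=\hat L_1(f)\otimes\bb1+\bb1\otimes\hat L_2(f)$, and comparing the two $C^\infty(H)$-valued curves $t\mapsto(\psi_1\otimes\psi_2)\big(\beta_{D_1+D_2}(\exp(tf_1))f_2\big)\,\widetilde U(\exp(tf_1))\Psi$ and $t\mapsto\widetilde U(\exp(tf_1))(\psi_1\otimes\psi_2)(f_2)\Psi$, which solve the same first-order equation with the same initial datum, yields covariance along every one-parameter subgroup. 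Simplicity of $\diff(S^1)$ together with the direct check on $R_{2\pi}$ then upgrades this to all of $\overline{\diff(S^1)}$, giving $\Ad(\widetilde U_1\otimes\widetilde U_2)(\gamma)\big((\psi_1\otimes\psi_2)(f)\big)=(\psi_1\otimes\psi_2)\big(\beta_{D_1+D_2}(\gamma)f\big)$.

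The only genuinely delicate point, and the expected main obstacle, is the domain bookkeeping: the commutator identity and the reindexing must be justified on $C^\infty(H)$ and not merely on the algebraic finite-energy span, and one must confirm that $\hat L_{1,m}\otimes\bb1+\bb1\otimes\hat L_{2,m}$ integrates to $\widetilde U_1\otimes\widetilde U_2$ with the summed cocycle. Both are controlled by the polynomial energy bounds of Lemma~\ref{lm:normalproduct-bounds}, which make the relevant sums convergent and the generators essentially self-adjoint on a common core, so no analytic input beyond the preceding lemmas is needed.
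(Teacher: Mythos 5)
Your proposal is correct and follows essentially the same route as the paper: the paper's proof simply invokes Lemma \ref{lm:normalproduct-bounds} to get polynomial energy bounds with respect to $H=\hat L_{1,0}\otimes \bb1 + \bb1 \otimes \hat L_{2,0}$ and then cites Lemma \ref{lm:chargeddiff}. Your explicit Leibniz-rule computation showing that $\psi_1(z)\otimes\psi_2(z)$ satisfies the primarity relation with conformal dimension $D_1+D_2$ fills in, correctly, a step the paper leaves implicit.
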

\begin{proof}
 By Lemma \ref{lm:normalproduct-bounds}, $\psi_1(z)\otimes \psi_2(z)$ satisfies polynomial energy bounds
 with respect to $H=\hat L_{1,0}\otimes \bb1 + \bb1 \otimes \hat L_{2,0}$.
 Then the diffeomorphism covariance follows from Lemma \ref{lm:chargeddiff}.
\end{proof}

As in Section \ref{net-trivialtotalbraiding}, let $\{(\cA_\kappa, U_\kappa, \Omega_\kappa)\}_{\kappa \in K}$ be a finite family of conformal nets on $S^1$ generated by fields $\{\phi_{\kappa, j}\}_{\kappa \in K}$,
with a collection $\Delta_\kappa$ of localized automorphisms in some interval,
with a common (finitely generated) abelian group structure $G$ and a family of bijections $\kappa:G \to \Delta_\kappa$ for every index $\kappa \in K$.
Let $\{\psi_\kappa^h(\xi,z)\}_{\kappa\in K, h \in G, \xi \in \Xi_\kappa}$ be a family of charged primary fields
satisfying the conditions of Section \ref{charged-fields}.

We assume that $\prod_{\kappa} \epsilon^+_{\kappa(h_1), \kappa(h_2)} = 1$ for all pairs $h_1, h_2 \in G$, as in Section \ref{net-trivialtotalbraiding}.
Let us pick $\xi_\kappa$ for each $\kappa$.
Then we consider the formal series $\psi_K^h(\xi_K,z) := \bigotimes_{\kappa} \psi_\kappa^h(\xi_\kappa, z)$
acting on the auxiliary space $\bigotimes_{\kappa} \hat\cH_\kappa$, where $\xi_K=(\xi_1,\cdots, \xi_K)$ is an index to label the formal series for the tensor product of $K$ factors.
This is a (normal) product of two commuting formal series, hence this makes sense as a formal series,
see Section \ref{formal}. It follows from Lemma \ref{lm:normalproduct-bounds} that it makes sense as an operator-valued distribution
if each of the fields satisfies polynomial energy bounds.

\begin{theorem}\label{th:1dfield}
 The fields $\psi_K^h$, $h \in G$, together with chiral fields $\phi_{\kappa, j}$, is a conformal Wightman field theory
 acting on the Hilbert space $\cH_K = \bigoplus_{g \in G} \bigotimes_{\kappa} \cH_\kappa^{\kappa(g)}$.
 If, in addition, $\psi_K^h$ satisfies linear energy bounds with respect to
 $\sum_\kappa \bb1\otimes \cdots \otimes\bb1\otimes \underset{\kappa\text{-th}}{\hat L_0}\otimes \bb1\otimes \cdots \otimes \bb1$ for all $h\in G$,
 then $\{\psi_K^h: h \in G\}$ generates a conformal net on $S^1$.
\end{theorem}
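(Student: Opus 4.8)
The plan is to verify the conformal Wightman axioms (1dW\ref{1dw:locality})--(1dW\ref{1dw:peb}) on $\cH_K = \bigoplus_{g\in G}\bigotimes_\kappa\cH_\kappa^{\kappa(g)}$ for the field content $\{\phi_{\kappa,j}\}\cup\{\psi_K^h\}_{h\in G}$, where the chiral field $\phi_{\kappa,j}$ acts within each summand through the representation $\kappa(g)$ on the $\kappa$-th tensor factor and $\psi_K^h$ shifts the grading $g\mapsto hg$; both preserve $\cH_K$ and the common domain $\scD = C^\infty(\hat L_{K,0})$, where $\hat L_{K,0} = \sum_\kappa\hat L_{\kappa,0}$ is the total rotation generator. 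The key preliminary reduction is the observation that $\psi_K^h = \bigotimes_\kappa\psi_\kappa^h$ is \emph{itself} a charged primary field in the sense of Section \ref{charged-fields}: the Leibniz rule, the primarity of each factor, and the additivity of conformal dimensions combine to give $[\hat L_{K,m},\psi_K^h(s)] = ((D-1)m-s)\psi_K^h(m+s)$ with total conformal dimension $D = \sum_\kappa D_{\kappa,h}$, and multiplying the factorwise braiding relations produces the braiding relation for $\psi_K^h$ with total exponent $\alpha(h_1,h_2) = \sum_\kappa\alpha_\kappa(h_1,h_2)$. Polynomial energy bounds (1dW\ref{1dw:peb}) for $\psi_K^h$ relative to $\hat L_{K,0}$ then follow by iterating Lemma \ref{lm:normalproduct-bounds} over the $|K|$ tensor factors.

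With this reduction, I would dispatch locality (1dW\ref{1dw:locality}) in three cases. The $\phi$--$\phi$ commutator vanishes trivially when the chiral fields sit in different tensor factors, and otherwise reduces to disjoint-support locality of $\cA_\kappa$ in the automorphism representations $\kappa(g)$. The $\phi$--$\psi$ commutator reduces, factor by factor, to the relative-locality axiom $[\hat\phi_{\kappa,j,m},\psi_\kappa^h(\xi,s)] = \sum_{\xi'}X_{j,\xi'}^\xi\psi_\kappa^h(\xi',m+s)$; after smearing, this contact term becomes a multiple of $\psi_\kappa^h(\xi',fg)$ in the $\kappa$-th slot, which vanishes whenever $\supp f\cap\supp g = \emptyset$, since then $fg\equiv 0$. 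The essential case is the $\psi$--$\psi$ commutator, for which I would apply Lemma \ref{lm:braiding} directly to the charged primary field $\psi_K^h$: for $\arg\supp f<\arg\supp g$ this yields
\begin{align*}
\psi_K^{h_1}(f)\psi_K^{h_2}(g) = \Big(\lim_{\Im\zeta>0,\,\zeta\to -1}\zeta^{\alpha}\Big)\psi_K^{h_2}(g)\psi_K^{h_1}(f) = \Big(\prod_\kappa\epsilon^+_{\kappa(h_1),\kappa(h_2)}\Big)\psi_K^{h_2}(g)\psi_K^{h_1}(f),
\end{align*}
because the phase factorizes over the tensor components, and the trivial-total-braiding hypothesis $\prod_\kappa\epsilon^+_{\kappa(h_1),\kappa(h_2)} = 1$ forces the two sides to commute.

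The remaining axioms are then routine. Covariance (1dW\ref{1dw:diff}) for $\psi_K^h$ follows from Lemma \ref{lm:chargeddiff} applied to the charged primary field $\psi_K^h$ (equivalently, from the $|K|$-fold version of Lemma \ref{lm:normalproduct-covariance}), while that of the $\phi_{\kappa,j}$ is inherited from the components; the representation $U_K$ factors through $\diff(S^1)$ by the hypothesis $\sum_\kappa D_{\kappa,\kappa(g)}\in\bbZ$. Positivity of energy (1dW\ref{1dw:positiveenergy}) holds since $U_K$ is a direct sum of positive-energy representations whose $g$-summand has lowest rotation eigenvalue $\sum_\kappa D_{\kappa,\kappa(g)}\in\bbN\cup\{0\}$. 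For the vacuum and Reeh--Schlieder property (1dW\ref{1dw:vacuum}), $\Omega_K = \bigotimes_\kappa\Omega_\kappa$ is $\Mob$-invariant, the chiral fields generate the $\iota$-summand $\bigotimes_\kappa\cH_\kappa$ from $\Omega_K$ by the Reeh--Schlieder property of the components, and $\psi_K^h(f)\Omega_K$ together with the chiral fields exhausts the $h$-summand since $\psi_\kappa^h$ creates the sector $\kappa(h)$ out of the vacuum.

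For the second statement I would assume in addition the linear energy bounds, feed the commutator estimates into the strong-commutativity criterion (Theorem \ref{th:locality-doublecomm}, used exactly as in the two strong-commutativity lemmas above) to get that $\psi_K^h(f)$, $\phi_{\kappa,j}(f)$ and their adjoints commute strongly for disjoint supports, and then let $\cA_K(I)$ be the von Neumann algebra generated by the bounded functions of the self-adjoint and skew-adjoint parts of these fields with $\supp f\subset I$; isotony is immediate, locality is strong commutativity, covariance and positivity of energy come from (1dW\ref{1dw:diff})--(1dW\ref{1dw:positiveenergy}), the vacuum axioms from (1dW\ref{1dw:vacuum}), and since $U_K(R_{2\pi}) = \bb1$ Proposition \ref{pr:1dcircle} extends the net to $S^1$. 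One expects this net to coincide with the net $\cA_K$ of Theorem \ref{th:1dext}, the chiral fields generating $\hat K(\bigotimes_\kappa\cA_\kappa(I))$ and the charge-shifting fields reproducing the generators $\psi_K^g,(\psi_K^g)^*$. The step I expect to be the main obstacle is precisely this second statement: upgrading the weak (commutator) locality of the first part to \emph{strong} commutativity of the non-self-adjoint smeared charged fields, and identifying the generated algebras with the concrete net of Section \ref{net-trivialtotalbraiding} (a comparison that, as flagged in Remark \ref{rm:strong}, involves matching the Wightman and DHR braidings). By contrast, the braiding bookkeeping of the first part is fully controlled once one records that $\psi_K^h$ is a charged primary field with additive conformal dimension and additive braiding exponent, so that Lemma \ref{lm:braiding} applies verbatim.
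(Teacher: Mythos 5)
Your proposal is correct and takes essentially the same route as the paper's proof: the key points in both are that the tensor product field $\psi_K^h$ inherits polynomial energy bounds (Lemma \ref{lm:normalproduct-bounds}) and diffeomorphism covariance (Lemmas \ref{lm:normalproduct-covariance}, \ref{lm:chargeddiff}), that locality of the $\psi$--$\psi$ commutators follows from Lemma \ref{lm:braiding} together with the trivial total braiding $\prod_\kappa\epsilon^+_{\kappa(h_1),\kappa(h_2)}=1$, that the vacuum together with the $\psi_K^h$ and $\phi_{\kappa,j}$ spans exactly $\cH_K$, and that the second claim follows from linear energy bounds via Theorem \ref{th:locality-doublecomm}. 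Your write-up merely fills in details the paper leaves implicit (the additivity of conformal dimension and braiding exponent for the normal product, the smeared form of relative locality), and, like the paper's Remark \ref{rm:strong}, correctly leaves the identification of the generated net with $\cA_K$ of Theorem \ref{th:1dext} as an expectation rather than part of the proof.
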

\begin{proof}
 The formal series $\psi_K^h$ is defined on the Hilbert space $\bigotimes_{\kappa}\hat \cH_\kappa$.
 It generates vectors in  $\cH_K = \bigoplus_{g \in G} \bigotimes_{\kappa} \cH_\kappa^{\kappa(g)}$
 from the vacuum. Then the $\phi_{\kappa,j}(f)$ generate a dense set of vectors in each of these summands.
 On the other hand, no other vector is generated from $\psi_K^h$ and $\phi_{\kappa,j}$ acting on the vacuum.

 Let us prove locality. For two test functions $f_1, f_2$
 with $\arg \supp f_1 < \arg \supp f_2$, we have by Lemma \ref{lm:braiding} that
 \begin{align*}
  \psi_K^{h_1}(f_1)\psi_K^{h_2}(f_2) 
  &=\prod_{\kappa}\epsilon^+_{\kappa(h_1), \kappa(h_2)}
  \psi_K^{h_2}(f_2)\psi_K^{h_1}(f_1) \\
  &=  \psi_K^{h_2}(f_2)\psi_K^{h_1}(f_1),
 \end{align*}
 giving locality as operator-valued distributions.
 The fields $\psi_K^h$ and $\phi_{\kappa,j}$ are relatively local.
 The representation $U_K$ is a tensor product of positive-energy representations,
 therefore it has itself positive energy.
 
 If we assume linear energy bounds, then the fields strongly commute
 and generate a conformal net on $S^1$ by Theorem \ref{th:locality-doublecomm}.
\end{proof}

\begin{remark}\label{rm:strong}
We expect that the conformal net on $S^1$ in Theorem \ref{th:1dfield} generated by the conformal Wightman field is unitarily equivalent to the net extension $\cA_K$ constructed in Theorem \ref{th:1dext}. This, at least for $|K| = 2$, should follow by a similar argument as in the proof of Theorem \ref{th:2dfield}.
More generally, assuming properties that assures that the VOA intertwiners intertwine
DHR sectors (strong integrability, strong intertwining property and strong braiding \cite{CWX, Gui20-2}),
it should be possible to show that the two nets coincide. We will address this question in a future publication.
\end{remark}

\subsection{Two-dimensional Wightman fields through braiding-cancelling map}\label{2dwightman}

From a pair of charged primary fields with the braiding satisfying certain conditions, we can construct
a two-dimensional conformal Wightman field.
This is in particular the case if we take two copies of the same conformal net $\cA_\rL = \cA_\rR$,
with a pointed braided \emph{fusion} subcategory of automorphisms whose fusion ring is isomorphic to a \emph{finite} abelian group $G$,
and we take the tensor product of the left charged fields $\psi_\rL^h$ and $\psi_\rL^{h^{-1}}$, where $h \in G$, similarly to \eqref{eq:combinegandg}, namely $\widetilde \psi^h := \psi_\rL^h \otimes \psi_\rL^{h^{-1}}$ acting on $\cH_\rL \otimes \cH_\rL$.

If we assume that the braidings cancel in the sense\footnote{Here, instead of assuming additionally that
$\epsilon^+_{\rL(g)^{-1}, \rL(h)} = \overline{\epsilon^-_{\rR(g)^{-1}, \rR(h)}}$ as we did in Section \ref{2dnet},
we assume that the set of charged primary fields is closed under conjugate and each of them
satisfies the braiding relation.} that
$\epsilon^+_{\rL(g), \rL(h)} = \overline{\epsilon^-_{\rR(g), \rR(h)}}$,
we can prove locality of the combined Wightman field. We show that this corresponds to a (finite index) Longo--Rehren extension with respect to a pointed braided fusion category, \cite{LR95}, see Section \ref{LRsection}. The basic idea behind these constructions is also present in the two-dimensional quantum field theory context, \eg, in \cite{Rehren97}, see in particular \cite[Section 2.3]{Rehren97}.

Let $\{(\cA_{\rL/\rR}, U_{\rL/\rR}, \Omega_{\rL/\rR})\}$ be a pair of conformal nets on $S^1$ generated by fields $\{\phi_{{\rL/\rR}, j}\}$,
with a collection $\Delta_{\rL/\rR}$ of (mutually commuting inequivalent) localized automorphisms in some interval (and closed under composition and inverses up to unitary equivalence), as in Section \ref{2dnet},
with a common finite abelian group structure $G$ and bijections ${\rL/\rR}:G \to \Delta_{\rL/\rR}$.
Let $\{\psi^h_{\rL/\rR}(\xi,z)\}_{h \in G, \xi \in \Xi_{\rL/\rR}}$ be a family of primary fields
satisfying the conditions of Section \ref{charged-fields}.
Denote $\epsilon^+ := \epsilon^+_{\rL(g), \rL(h)}$ and $\epsilon^- := \epsilon^-_{\rR(g), \rR(h)}$, for short. Then our braiding cancellation assumption reads $\epsilon^+ = \overline{\epsilon^-}$.
For $w_1,w_2 \in S^1 \setminus \{-1\}, \arg w_1 < \arg w_2$, we have by Lemma \ref{lm:braiding} that
\begin{align*}
 \psi^g_{\rL}(\xi_\rL,w_1)\psi^h_{\rL}(\xi_\rL,w_2) &= \epsilon^+\psi^h_{\rL}(\xi_\rL,w_2)\psi^g_{\rL}(\xi_\rL,w_1), \\
 \psi^g_{\rR}(\xi_\rR,w_2)\psi^h_{\rR}(\xi_\rR,w_1) &= \epsilon^- \psi^h_{\rR}(\xi_\rR,w_1)\psi^g_{\rR}(\xi_\rR,w_2) \\
 &= \overline{\epsilon^+}\psi^h_{\rR}(\xi_\rR,w_1)\psi^g_{\rR}(\xi_\rR,w_2).
\end{align*}
For fixed $\xi_\rL$ and $\xi_\rR$, let us introduce a two-dimensional formal power series by
\begin{align*}
 \widetilde\psi^h(w,z) := \psi^h_{\rL}(\xi_\rL,w)\otimes\psi^{h}_{\rR}(\xi_\rR,z).
\end{align*} 

\begin{theorem}\label{th:2dfield}
 The field $\widetilde\psi^h$ is a two-dimensional conformal Wightman field
 on the Hilbert space $\widetilde \cH = \bigoplus_{h \in G} \cH_\rL^{\rL(h)}\otimes \cH_\rR^{\rR(h)}$.
 If the field satisfies a linear energy bound with respect to $\hat L_0\otimes \bb1 + \bb1 \otimes \hat L_0$,
 then it generates a two-dimensional conformal net.
 If $G$ is finite, the net coincides (up to unitary equivalence) with $\widetilde\cA$ constructed in Theorem \ref{th:2dext} in the case $\cA_\rL = \cA_\rR$.
\end{theorem}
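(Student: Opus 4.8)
The plan is to verify the five axioms (2dW\ref{2dw:locality})--(2dW\ref{2dw:peb}) for $\widetilde\psi^h$, then pass to the generated net under linear energy bounds, and finally identify that net with the $\widetilde\cA$ of Theorem \ref{th:2dext}.

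First I would treat locality (2dW\ref{2dw:locality}), the analytic heart of the first claim. Writing $\widetilde\psi^g(w,z) = \psi^g_\rL(\xi_\rL,w)\otimes\psi^g_\rR(\xi_\rR,z)$ and using that two points of $\cE$ are spacelike separated exactly when their two light-cone coordinates are oppositely ordered, I would apply Lemma \ref{lm:braiding} to the left and right tensor factors separately. For spacelike-supported $f,g$ the left factors braid with $\epsilon^+ = \epsilon^+_{\rL(g),\rL(h)}$ while the right factors, occurring in the opposite order, braid with $\epsilon^-_{\rR(g),\rR(h)} = \overline{\epsilon^+}$; their product $\epsilon^+\,\overline{\epsilon^+} = 1$ yields commutativity of $\widetilde\psi^g(f)$ and $\widetilde\psi^h(g)$ as operator-valued distributions. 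Covariance (2dW\ref{2dw:poincare}) follows from Lemma \ref{lm:normalproduct-covariance} and extends from $\overline{\diff(S^1)}\times\overline{\diff(S^1)}$ to $\scC$ since $D_{\rL(h)}-D_{\rR(h)}\in\bbZ$; positivity of energy (2dW\ref{2dw:positiveenergy}) holds because $\widetilde U$ is a direct sum of tensor products of positive-energy representations; the vacuum $\widetilde\Omega=\Omega_\rL\otimes\Omega_\rR$ is invariant and cyclic (2dW\ref{2dw:vacuum}) because the $\widetilde\psi^h$ reach every summand $\cH_\rL^{\rL(h)}\otimes\cH_\rR^{\rR(h)}$ from the vacuum while the chiral fields are cyclic within each summand; and the polynomial bound (2dW\ref{2dw:peb}) is Lemma \ref{lm:normalproduct-bounds}.

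For the net statement, under a linear energy bound the strong-commutativity lemma for such bounds stated earlier gives strong commutativity of the $\widetilde\psi^h(f)$ for spacelike $f$, so Theorem \ref{th:locality-doublecomm} produces a two-dimensional conformal net $\widetilde\cB$ on $\cE$ (invoking Proposition \ref{pr:2dcylinder} for the extension to $\cE$), whose local algebra $\widetilde\cB(O)$ is generated by the bounded functions of the smeared fields localized in $O$ together with the chiral fields.

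The main work, and the principal obstacle, is the identification $\widetilde\cB = \widetilde\cA$ when $G$ is finite. Both nets act on $\widetilde\cH=\bigoplus_h \cH_\rL^{\rL(h)}\otimes\cH_\rR^{\rR(h)}$, share $\widetilde\Omega$ and $\widetilde U$, and contain the diagonal $\widetilde\tau(\cA_\rL\otimes\cA_\rR)$. Conceptually, since $G$ is finite the extension has finite index and, by Section \ref{LRsection}, $\widetilde\cA$ is the Longo--Rehren extension attached to the commutative (by braiding cancellation) Q-system with dual canonical endomorphism $\bigoplus_h \rL(h)\otimes\rR(h)$; so it suffices to show that $\widetilde\cB$ realizes the same Q-system. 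I would do this by two inclusions at the reference diamond $O=I\times I$. For $\widetilde\cB(O)\subseteq\widetilde\cA(O)$ I would show that $\widetilde\psi^h(f)$, for $\supp f\subset O$, is affiliated with $\widetilde\cA(O)$: by the spacelike commutativity of Part 1 and relative locality with the chiral fields, its bounded functions commute with $\widetilde\cA(O)'$, hence lie in $\widetilde\cA(O)$ by locality and duality of $\widetilde\cA$ (which holds for these finite-index diagonal extensions). For the reverse inclusion I would recover the bounded charge transporter $\widetilde\psi^g=\psi^g_\rL\otimes\psi^g_\rR$ from the Wightman data: pointedness forces every intertwiner space to be at most one-dimensional, so the sector shift $g$ is implemented by a charge transporter unique up to a phase in $\widetilde\cA(O)$, and the action of the smeared $\widetilde\psi^h(f)$ on the dense domain realizes exactly this shift, whence $\widetilde\psi^g\in\widetilde\cB(O)$. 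Equality of all local algebras then follows by covariance under the shared $\widetilde U$, and the resulting equivalence is implemented by a unitary fixing $\widetilde\Omega$ and commuting with $\widetilde U$. The genuinely delicate point is this last passage between the unbounded primary field and the bounded charge transporter, which is exactly where the finiteness of $G$ (finite index) and pointedness (phase-unique transporters, commutative Q-system) are used; in the non-pointed or infinite case it would require the strong integrability and strong braiding hypotheses alluded to in Remark \ref{rm:strong}.
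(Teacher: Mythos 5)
Your treatment of the first two claims coincides with the paper's: locality follows by applying Lemma \ref{lm:braiding} to the two tensor factors and cancelling $\epsilon^+\overline{\epsilon^+}=1$, the remaining Wightman axioms are proved as in Theorem \ref{th:1dfield}, and linear energy bounds give strong commutativity via Theorem \ref{th:locality-doublecomm}. The gap is in your identification of the generated net, call it $\widetilde\cB$, with $\widetilde\cA$. Both of your inclusions require a bridge between the \emph{unbounded} smeared fields $\widetilde\psi^h(f)$, defined from formal series, and the \emph{bounded} charge transporters $\widetilde\psi^g=\psi_\rL^g\otimes\psi_\rR^g$ of Section \ref{charged}, which are built from DHR intertwiners $V^{g,h}$ and are, a priori, unrelated operators. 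For $\widetilde\cB(O)\subseteq\widetilde\cA(O)$ you must show that bounded functions of $\widetilde\psi^h(f)$ commute with $\widetilde\cA(O)'$; even granting Haag duality for $\widetilde\cA$ (itself unproven in this generality), the algebra $\widetilde\cA(O')$ is generated in part by the operators $\Ad\widetilde U(\gamma)(\widetilde\psi^g)$, and no result in the paper gives commutation --- weak or strong --- between a smeared formal-series field and these bounded transporters: Lemma \ref{lm:braiding} and Theorem \ref{th:locality-doublecomm} only treat pairs of Wightman fields. For $\widetilde\cA(O)\subseteq\widetilde\cB(O)$ you must show that the isometry extracted from (the closure of) $\widetilde\psi^h(f)$ implements $\rL(h)\otimes\rR(h)$ on the bounded local algebras, i.e.\ the bounded intertwining relation as in \eqref{eq:chargedintertwiner}; finiteness of $G$ and pointedness do not supply this --- they give uniqueness of a transporter up to phase \emph{once} a bounded intertwining relation is known, but they cannot produce that relation from the distributional one. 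This passage is precisely the strong integrability/strong intertwining/strong braiding problem that the paper explicitly leaves open (Remark \ref{rm:strong}, and the footnote in Section \ref{charged-fields} pointing out that the braiding of the formal series is not even shown to equal the DHR braiding of the sectors).

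The paper's proof of the last claim avoids any comparison of field operators. Since $G$ is finite, both $\widetilde\cA$ and $\widetilde\cB$ are finite-index local extensions of $\cA_\rL\otimes\cA_\rR$ whose vacuum Hilbert space decomposes under $\cA_\rL\otimes\cA_\rR$ as $\bigoplus_{h\in G}\cH_\rL^{\rL(h)}\otimes\cH_\rR^{\rR(h)}$, the same decomposition as that of the Longo--Rehren extension \eqref{eq:HilbLRext}. By \cite[Proposition 3.4]{LR95} this decomposition alone determines the dual canonical endomorphism of an extension, so both extensions have dual canonical endomorphism unitarily equivalent to $\Theta_{\LR}$; the uniqueness of the finite-index local extension with this dual canonical endomorphism in the braided pointed case with finite abelian $G$ (2-cohomology vanishing, \cite[Section 4, Example 4.5]{KL04-2}) then yields the unitary equivalence. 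You correctly identified the relevant Q-system for $\widetilde\cA$, but the step you are missing is that the Hilbert space decomposition suffices to pin down the dual canonical endomorphism of $\widetilde\cB$ as well, after which abstract uniqueness finishes the argument; your hands-on route would instead require first proving the strong braiding results that the paper defers to a future publication.
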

\begin{proof}
The field $\widetilde\psi^h$ satisfies two-dimensional locality:
the point $(w_1, z_1)$ is spacelike to $(w_2, z_2)$ if
$\arg w_1 < \arg w_2$ and $\arg z_1 > \arg z_2$ (or the reversed relations, and in this case,
with $\epsilon^+ := \epsilon^+_{\rL(h), \rL(g)}$),
\begin{align*}
 \widetilde\psi^{h}(w_1,z_1)\widetilde\psi^{g}(w_2,z_2)
 &= \psi^{h}_{\rL}(\xi_\rL,w_1)\otimes \psi^{h}_{\rR}(\xi_\rR,z_1)\cdot \psi^{g}_{\rL}(\xi_\rL,w_2)\otimes \psi^{g}_{\rR}(\xi_\rR,z_2) \\
 &= \epsilon^+ \overline{\epsilon^+}\psi^{g}_{\rL}(\xi_\rL,w_2)\otimes \psi^{g}_{\rR}(\xi_\rR,z_2)
    \cdot \psi^{h}_{\rL}(\xi_\rL,w_1)\otimes \psi^{h}_{\rR}(\xi_\rR,z_1) \\
 &= \psi^{g}_{\rL}(\xi_\rL,w_2)\otimes \psi^{g}_{\rR}(\xi_\rR,z_2) \cdot \psi^{h}_{\rL}(\xi_\rL,w_1)\otimes \psi^{h}_{\rR}(\xi_\rR,z_1) \\
 &=  \widetilde\psi^{h_2}(w_2,z_2)\widetilde\psi^{h_1}(w_1,z_1).
\end{align*}
 It is a two-dimensional operator-valued distribution because it is a tensor product
 of two one-dimensional operator-valued distributions.
 Other axioms, positivity of energy, diffeomorphism covariance, the cyclicity of vacuum in $\widetilde\cH$,
 can be proven as in Theorem \ref{th:1dfield}.
 If we assume linear energy bounds, strong commutativity of smeared fields follows.
 
 Let $G$ be finite.
 As the decomposition of the vacuum representation $\widetilde \cH = \bigoplus_{h \in G} \cH_\rL^{\rL(h)}\otimes \cH_\rR^{\rR(h)}$ with respect to the action of $\cA_\rL \otimes \cA_\rR$ is the same as the decomposition of the vacuum representation of the Longo--Rehren extension $\tilde \cA_{\LR}$ \eqref{eq:HilbLRext}, and the former determines the dual canonical endomorphism of the extension we constructed, by \cite[Proposition 3.4]{LR95}, the latter endomorphism must be unitarily equivalent to the Longo--Rehren endomorphism $\Theta_{\LR}$.
 As we are in the braided pointed fusion case with finite abelian group $G$, the uniqueness of the associated Longo--Rehren extension
 (among \textit{finite index} local extensions) provided by \cite[Section 4, Example 4.5]{KL04-2} proves the last claim.
\end{proof}

\section{Examples: the \texorpdfstring{$\rU(1)$}{U(1)}-current}\label{U(1)current}
We construct examples of a conformal net on $S^1$, collections of sectors and charged primary fields
associated with the $\rU(1)$-current. To the best of our knowledge, most of the material of this section is scattered in several places, see, \eg, \cite{BMT88, Toledano-LaredoThesis, TZ12}, or it is known as folklore. We collect some useful facts for the sake of the reader to appreciate the construction of Wightman fields through charged primary fields in the special case of extensions of the U(1)-current,  cf. \cite{Rehren97} for the two-dimensional case. 

Here we use the symbols $\cH, \cA, U, \Omega, L_n, J_n, Y_{\alpha,n}$ for a chiral component,
differently from the previous sections where we used the index $\kappa$.
\subsection{The field and the net}

In literature, several constructions of the $\rU(1)$-current appeared.
One can use the lowest weight representation with the lowest weight $1$ of
$\rP\rS\rL(2,\bbR)$,
and the associated net standard subspaces,
then by second quantization the local algebras of the $\rU(1)$-current net are generated by the Weyl operators, cf. \cite{BGL93}.

The construction of the $\mathrm{U}(1)$-current that we illustrate here relies on the so-called current algebra,
namely the complex Lie algebra generated as $\mathbb{C}$-linear space by the family of symbols $\{J_n:J_n:n\in\mathbb{Z}\}$
and a central element $c$ that verify $[J_m,J_n]=m\delta_{n+m}c$, see \cite{BMT88}.
The vacuum representation of the current algebra is given as a Verma module (with the lowest weight $0$, hence $J_0 \Omega = 0$),
which is the linear span of the symbols
\[
J_{-i_1}\ldots J_{-i_m}\Omega,\quad 0<i_1\leq i_2\leq\ldots\leq i_m\in\mathbb{N}, m\in\mathbb{N} 
\]
where, by abuse of notation, $J_n$ are interpreted as operators with the Lie brackets given for an associative algebra,
$J_k\Omega=0$ for $k\geq0$ and one defines the actions of $J_n$ for $n\in\mathbb{Z}$ by using the commutation relations.
There is a unique inner product satisfying $J_n^* = J_{-n}$ and
\begin{equation}\label{inner_product}
	\langle J_{-i_1}\ldots J_{-i_m}\Omega, J_{-l_1}\ldots J_{-l_s}\Omega\rangle:=\langle \Omega, J_{i_m}\ldots J_{i_1}J_{-l_1}\ldots J_{-l_s}\Omega\rangle,
\end{equation}
and we extend it by linearity. Its completion is isomorphic to
the Bosonic Fock space with the one-particle space spanned by
$\{J_{-n}\Omega : n\in\mathbb{N}\}$, see, \eg, \cite[Proposition 2.1]{KR87}.

Using Sugawara's formula to define new generators $L_n$ as
\[
L_n:=\frac12\sum_{k\in\mathbb{Z}}:J_{k+n}J_k:,\quad n\in\mathbb{Z} 
\]
where $:\cdot:$ is the Wick product, one defines a projective unitary representation of the Virasoro algebra with the central charge $c=1$.
In particular, the \textbf{conformal Hamiltonian} $L_0$ acts as
\[
 L_0J_{-i_1}\ldots J_{-i_m}\Omega=(i_1+\ldots+i_m)J_{-i_1}\ldots J_{-i_m}\Omega,
\]
for $0\leq i_1\leq\ldots\leq i_m\in\mathbb{N}$.
Therefore, $L_0$ extends to a positive self-adjoint operator.
Moreover, as $J_n$ can be regarded as an annihilation operator for $n > 0$, we have the following bound:
for every $n\in\mathbb{Z}$ and every $\Psi$ in the Verma module
\begin{align*}
 \|J_n\Psi\| &\leq(n+1)\|(L_0+1)^{\frac12}\Psi\|, \\
 \|L_n\Psi\| &\leq \sqrt{\frac{13}{12}} (1+|n|^\frac32)\|(L_0+1)\Psi\|.
\end{align*}
From this, the representation $\{L_n\}$ of the Virasoro algebra integrates to a projective unitary representation $U$ of $\diff(S^1)$.

Moreover, the current
\[
 J(f):=\sum_{n\in\bbZ}f_nJ_n,
\]
for a test function $f\in C^{\infty}(S^1)$ is an essentially self-adjoint operator on $\cD(L_0)$.
Therefore, $J(f)$ for $f\in C^{\infty}(S^1)$ is well-defined on the dense Verma module and $J(f)\Psi\in\mathcal{D}(L_0)$.
We have
\[
 \left\|[L_0,J(f)]\Psi\right\|=\|J(f')\Psi\|\leq \alpha_{f'}\|(L_0+1)^\frac12\Psi\|,
\]
so by Theorem \ref{th:locality-doublecomm} is essentially self-adjoint and it can defined on $\mathcal{D}(L_0)$.
In this way, if $f,g\in C^{\infty}(S^1)$, then the commutator $[J(f),J(g)]$ is well-defined on $\mathcal{D}(L_0)$ and
\[
 [J(f),J(g)]\Psi=\int_{S^1}f'(t)g(t)dt\cdot \Psi,\quad\Psi\in\mathcal{D}(L_0).
\]
Moreover, by Theorem \ref{th:locality-doublecomm},
if the support of $f$ and $g$ is disjoint, then $J(f)$ and $J(g)$ commute strongly, \ie, $[e^{itJ(f)},e^{itJ(g)}]=0$. 
One can also show that $\Ad U(\varphi)(e^{iJ(f)})=e^{iJ(f\circ\varphi^{-1})}$.

For any $\lambda \in \bbR$, and denoting the lowest weight vector by $\Omega_\lambda$, one can construct a Verma module
such that the representation $J_n^\lambda$ is of the form explained above, except that $J_0^\lambda \Omega_\lambda = \lambda \Omega_\lambda$.
The same commutation relations and similar estimates hold for $\lambda \in \bbR$.

To every non-empty non-dense open interval $I\subset S^1$, one associates the following von Neumann algebra
\[
 \mathcal{A}_{\rU(1)}(I):=\{e^{iJ(f)}:\mathrm{supp}f\subset I\}''.
\]
The triple $(\cA_{\rU(1)}, U, \Omega)$ is a conformal net on the circle $S^1$.

\subsection{Representations of the \texorpdfstring{$\rU(1)$}{U(1)}-current}
Let us fix an open non-empty non-dense interval $I\subset S^1$ and a function $h\in C^{\infty}(S^1)$
with $\supp h \subset I$. Then, we define a map on the Weyl operators $W(f):=e^{iJ(f)}$ for $f\in C^{\infty}(S^1)$ supported in $I$ in the following way
\[
 \sigma_{h,I}(W(f)):=e^{i\int_{S^1}f(t)h(t)dt}W(f).
\]

The map $\sigma_{h,I}$ can be extended to a representation of the local algebra $\mathcal{A}_{\rU(1)}(I)$ on the Fock space.
Indeed, for every interval $I_1$, $\sigma_{h,I_1}$
is implemented by $\Ad W(H)$, where $H' = h$ on $I_1$ and supported on another interval $\widetilde I_1$.
Therefore, $\sigma_{H,I_1}$ extends to a representation of the local algebra $\mathcal{A}_{\rU(1)}(I_1)$.
If $I_2\subset S^1$ is another interval such that $I_1\subset I_2$, then a function $H$ supported in an interval $\widetilde I_2 \supset \overline I_2$
such that $H'=h$ on $I_2$ and $\sigma_{h,I_1}(W(f)) = \Ad W(H)(W(f))$.
Therefore, one can define a representation $\sigma_h$ of the net $\mathcal{A}_{\rU(1)}$ as the family $\{\sigma_{h,I_1}\}_{I_1}$ for $I_1\subset S^1$ as above,
verifying the compatibility condition $\sigma_{h,I_2}|_{\mathcal{A}_{\rU(1)}(I_1)}=\sigma_{h,I_1}$
whenever $I_1\subset I_2$ are two non-empty non-dense open intervals of $S^1$.

The unitary equivalence class of the representations is determined by the value  $\int_{S^1}h(t)dt$.
In particular, when $\int_{S^1}h(t)dt=0$, then $\sigma_h$ is unitary equivalent to the vacuum representation.
Let $\alpha = \int_{S^1} h(t)dt$. Although the representation $\sigma_{h}$ is defined on the vacuum Hilbert space,
we denote it $\cH_\alpha$ in order to distinguish the representation.
Analogously, the lowest weight vector in $\cH_\alpha$ is denoted by $\Omega_\alpha$.
See \cite{BMT88} for these results.

\subsection{Fusion relations and braiding}

Let $h_1,h_2\in C^{\infty}(S^1)$. Then the maps $\sigma_{h_1}$, $\sigma_{h_2}$ are automorphisms of the $\mathrm{U}(1)$-current net $\mathcal{A}_{\mathrm{U}(1)}$.
Let $I_0\in\mathcal{I}$ and consider the local algebra $\mathcal{A}_{\mathrm{U}(1)}(I_0)$. Then, $\sigma_{h_1, I_0}$, $\sigma_{h_2,I_0}$
are representations of the local algebra $\mathcal{A}_{\mathrm{U}(1)}(I_0)$ such that $\sigma_{h_1, I_0}\circ\sigma_{h_2, I_0}=\sigma_{h_1+h_2, I_0}$.
Since the interval $I_0$ is arbitrary, we have
$$\sigma_{h_1}\circ\sigma_{h_2}=\sigma_{h_1+h_2}.$$

We now compute the braiding for the U(1)-current, which we believe is well-known, but for which we couldn't find any reference. 

Let $I \in \cI$ be fixed.
Consider two functions $h_1, h_2 \in C^\infty(S^1)$ with $\mathrm{supp}(h_1),\mathrm{supp}(h_2)\subset I$
and let $\sigma_{h_1}$ and $\sigma_{h_2}$ be two automorphisms of the net $\mathcal{A}_{\mathrm{U}(1)}$.
Next, let $I_0 \in \cI$ such that $I \subset I_0$
and choose $I_1,I_2 \subset I_0$ in $\mathcal{I}$ such that $I_2$ stays in the future of $I_1$, and $I_1,I_2, I$ are pairwise disjoint.

Let $i=1,2$ be fixed, and let $g_i\in C^{\infty}(S^1)$ such that $\supp g_i\subset I_i$
and $\int_{S^1} g_i(t)dt = \int_{S^1} h_i(t)dt$.
Then there exist $H_i\in C^{\infty}(S^1)$ supported in $I_0$ such that $\mathrm{Ad}W(H_i)\sigma_{h_i}=\sigma_{g_i}$.
Moreover, for the mutual position of $I_1$ and $I_2$, $H_1$ and $H_2$ have to be of the form $H_1'=-h_1$, $H_2'=-h_2$ on $I$,
$H_i'=g_i$ on $I_i$ and a constant elsewhere.
As the conditions on $h_i, g_i$ show, $H_i$ are piecewise constant in the union of intervals $(I\cup I_i)^c\cap I_0$
(see Figure \ref{fig:braiding_net}).
In particular, if the intervals are between $I_0^c$ and $I$ or between $I_0^c$ and $I_c$,
then $H_i$ is defined as $0$, since $H_i$ is continuous.
On the rest, and for the same reason, we set $H_i$ to
$\int_Ih_i(t)dt=\int_{I_i}g_i(t)dt:= \sqrt{2\pi}\alpha_i$, where $\alpha_i$ is the so called \textbf{charge}.  

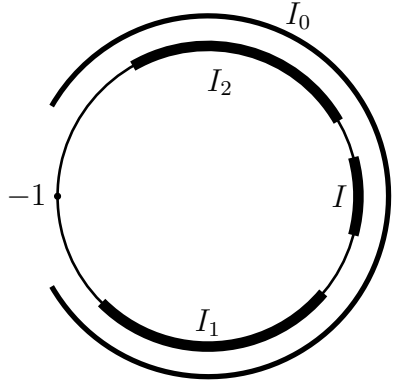
\begin{figure}[ht]
\centering
\begin{tikzpicture}[scale = 0.8]
\clip(-4,-4) rectangle (4,4);
\draw [line width=1pt] (0,0) circle (2.5);

\draw (-2.5,0) node[anchor=east] {$-1$};
\draw (0,-2.5) node[anchor=south] {$I_1$};
\draw (2.5,0) node[anchor=east] {$I$};
\draw (1.5,2.598076211) node[anchor=south] {$I_0$};
\draw (0.2,2.3) node[anchor=north] {$I_2$};
\draw [fill](-2.5,0) circle (0.05);

\draw [line width=4pt] (2.414814566, -0.647047613) arc (-15:15:2.5) ;
\draw [line width=2pt] (-2.598076211, -1.5) arc (-150:150:3) ;

\draw [line width=4pt] (2.165063509, 1.25) arc (30:120:2.5) ;
\draw [line width=4pt] (-1.767766953, -1.767766953) arc (-135:-40:2.5) ;

\end{tikzpicture}\caption{An example configuration for the intervals $I, I_0, I_1$ and $I_0$. The choice of ``future'' and ``past'' is fixed by
considering the point $-1$ as the point of infinity for the lightray,
and choosing $I_0$ that does not contain it.}
\label{fig:braiding_net}
\end{figure}

The braiding is given by the following formula
$$\epsilon(\sigma_{h_1},\sigma_{h_2}) := \epsilon_{\sigma_{h_1},\sigma_{h_2}}=\sigma_{h_2,I_0}(W(H_1)^*)W(H_2)^*W(H_1)\sigma_{h_1,I_0}(W(H_2)).$$
Therefore, we have
\begin{align*}
\epsilon(\sigma_{h_1},\sigma_{h_2})&=\sigma_{h_2,I_0}(W(H_1)^*)W(H_2)^*W(H_1)\sigma_{h_1,I_0}(W(H_2))\\
&=e^{-i\int_{S^1}H_1(t)h_2(t)dt}W(H_1)^*W(H_2)^*W(H_1)e^{i\int_{S^1}h_1(t)H_2(t)dt}W(H_2)\\
&=e^{-i\int_I H_1(t)h_2(t)dt}W(H_1)^*W(H_2)^*W(H_1)e^{i\int_I h_1(t)H_2(t)dt}W(H_2)\\
&=e^{i\int_I \left[h_1(t)H_2(t)-H_1(t)h_2(t)\right]dt}W(H_1)^*W(H_2)^*W(H_1)W(H_2)\\ 
&=e^{i\int_I \left[h_1(t)H_2(t)-H_1(t)h_2(t)\right]dt}e^{-\frac i2\mathrm{Im}\langle H_1,H_2\rangle}W(-H_1-H_2)e^{-\frac i2\mathrm{Im}\langle H_1,H_2\rangle}W(H_1+H_2)\\
&=e^{i\int_I \left[h_1(t)H_2(t)-H_1(t)h_2(t)\right]dt}e^{i\,\mathrm{Im}\langle H_1,H_2\rangle}.
\end{align*}
If one recalls that $\mathrm{Im}\langle H_1,H_2\rangle=\int_{S^1}H_1(t)H'_2(t)dt$, then
\begin{align*}
\int_{S^1}H_1(t)H'_2(t)dt&=\sqrt{2\pi}\alpha_1\int_{I_2}g_2(t)dt+\int_IH_1(t)h_2(t)dt=2\pi\alpha_1\alpha_2+\int_{I\cup I_1}h_1(t)H_2(t)dt.
\end{align*}
Therefore, the braiding is 
\begin{align*}
\epsilon(\sigma_{h_1},\sigma_{h_2})&=e^{i\int_I \left[h_1(t)H_2(t)-H_1(t)h_2(t)\right]dt}e^{i\,\mathrm{Im}\langle H_1,H_2\rangle}\\
&=e^{i\int_I \left[h_1(t)H_2(t)-H_1(t)h_2(t)\right]dt}e^{i2\pi\alpha_1\alpha_2+i\int_{I\cup I_1}h_1(t)H_2(t)dt}\\
&=e^{i\int_I h_1(t)H_2(t)dt}e^{i2\pi\alpha_1\alpha_2}.
\end{align*}
If we take $h_i = \sqrt{2\pi}\alpha_i h$ for a single function $h$ such that $\int_{S^1} h(t)dt = 1$,
then $H_2$ is proportional to a function $H$ on $I$
such that $H' = -h$ on $I$. Therefore, by integration by parts,
$$\int_{I}h_1(t)H_2(t)dt = 2\pi\alpha_1 \alpha_2\int_{I}h(t)H(t)dt=2\pi\alpha_1 \alpha_2\left(\left[-H^2(t)\right]_I-\int_I h(t)H(t)dt\right)$$
implying
$$\int_{I}h_1(t)H_2(t)dt=-\pi\alpha_1\alpha_2.$$
Hence we conclude that in this case
\begin{align*}
\epsilon(\sigma_{\alpha_1 h},\sigma_{\alpha_2 h}) = e^{i\pi\alpha_1 \alpha_2},
\end{align*}
with the configuration above.
In particular, it holds that $\epsilon(\sigma_{\alpha_1 h},\sigma_{\alpha_2 h}) = \epsilon(\sigma_{\alpha_2 h},\sigma_{\alpha_1 h})$.

\subsection{The charged field}

The $\rU(1)$-current net has charged primary fields in the sense of Section \ref{charged-fields}.
We equip $\bbR$ with the scalar product $\<\alpha, \beta\> = \alpha \beta$.
In the sequel, we check that the formal series $Y_\alpha(z) = \sum_{s \in \bbR} Y_{\alpha, s} z^{-s-D}$ are charged primary fields, where
$D = \<\alpha, \alpha\>/2$ and each coefficient $Y_{\alpha, s}$ is a map $\cH_\beta^{\fin} \to \cH_{\beta+\alpha}^{\fin}$
(on each $\cH_\beta^{\fin}, \beta \in \bbR$, only $Y_{\alpha,s}$ with $s \in \bbZ - \<\alpha,\beta\> - D$ are non-zero).
Explicitly, let $c_\alpha$ be the unitary charge shift
operator $\cH_\beta^{\fin} \to \cH_{\beta+\alpha}^{\fin}$ defined
by $c_\alpha J_{-n_1}\cdots J_{-n_k}\Omega_{\beta} = J_{-n_1}\cdots J_{-n_k}\Omega_{\beta + \alpha}$, $n_j > 0$.
Following \cite{TZ12} ($\alpha(n)$ there is identified with $\alpha J_n$ in our notation,
\cf \cite[Chapter V (3.2.1)]{Toledano-LaredoThesis}), we define, as formal series,
\begin{align}
 E^\pm(\alpha, z) 
 &= \exp\left(\mp \sum_{n>0} \frac{\alpha J_{\pm n}}n z^{\mp n}\right), \nonumber \\
 Y_\alpha(z) &= c_\alpha E^-(\alpha,z) E^+(\alpha,z) z^{\alpha J_0}, \label{Eaz} 
\end{align}
where $\alpha J_0 = \<\alpha, \beta\>$ is a scalar on each $\cH_\beta$.
The formal series of the exponential in $E^\pm(\alpha, z)$ is
defined without problems, because, when expanding it into a Taylor series
the coefficients of $z^n$ are finite sums of operators on $\cH_0^{\fin}$.
As for $Y_\alpha(z)$, the coefficients of the product $E^-(\alpha, z) E^+(\alpha, z)$
are infinite sums, but they still make sense as operators. Indeed, for each $n$,
the coefficient $E^+_{\alpha,n}$ of $z^{-n}$ in $E^+(\alpha, z)$ is a linear 
combination of $J_{k_1}\cdots J_{k_j}$
such that $\sum_{m=1}^j k_m = n$, $k_m > 0$, and there are 
only finitely many such combinations.
The series $E^-(\alpha, z)$ has a similar structure.
As any vector $\Psi$ in $\cH_0^{\fin}$ has finite energy $M$ and $J_m$ lowers the energy by $m$,
 $\Psi$ is annihilated by any product $J_{k_1}\cdots J_{k_j}$ if $\sum_m k_m > M$, that is,
 $\Psi$ is annihilated by $E^+_{\alpha,n}$ if $n>M$.
 The coefficient of $z^n$ in $E^-(\alpha,z) E^+(\alpha,z)$ is the 
 sum $\sum_{j\in \bbZ_+} E^-_{\alpha,n-j}E^+_{\alpha,j}$.
 Therefore, on each $\Psi$ only finitely many terms contribute,
 and this sum defines an operator on $\cH_0^{\fin}$.
 Its restriction $Y_\alpha(z)|_{\cH_\beta}$ has the form
 \[
  Y_\alpha(z)|_{\cH_\beta} = \sum_{n \in \bbZ} Y_{\alpha, n - \alpha\beta - D}|_{\cH_\beta} \,z^{-n-D}.
 \]

\paragraph{Braiding}
By \cite[Chapter VI (1.2.2)]{Toledano-LaredoThesis}, it holds that
 \begin{align*}
  	\left(1-\frac{z}w\right)^{-\<\alpha, \beta\>}E^+(\alpha, w)E^-(\beta, z) 
	&= 
	E^-(\beta, z)E^+(\alpha, w) 
 \end{align*}
 where $(1-u)^a = \sum_{n \ge 0}\binom{a}{n}(-u)^n$.
Note that the left-hand side makes sense, because
$(1-\frac{z}w)^{-\<\alpha, \beta\>}$ has positive powers in $z$ and negative powers in $w$,
while $E^-(\beta, z)$ has positive powers in $z$ and $E^+(\alpha, w)$ has negative powers in $w$.
Similarly,
\begin{align*}
 E^-(\alpha, w)E^+(\beta, z) 
 &= \left(1-\frac w{z}\right)^{-\<\alpha, \beta\>}
 E^+(\beta, z)E^-(\alpha, w).
\end{align*}
Therefore, if we introduce the pre-vertex operators
$\underline{Y_{\alpha}}(z) = E^-(\alpha, z) E^+(\alpha, z) z^{\alpha J_0}$
(it is $Y_\alpha(z)$ without $c_\alpha$), and using that $E^\pm(\alpha, z)$ and $E^\pm(\beta, w)$
commute (when $\pm$ coincide), we obtain
\begin{align}
 \left(1-\frac{z}w\right)^{-\<\alpha, \beta\>}\underline{Y_{\alpha}}(w)\underline{Y_{\beta}}(z)
 &=  \left(1-\frac{w}z\right)^{-\<\alpha, \beta\>}
 \underline{Y_{\beta}}(z) \underline{Y_{\alpha}}(w) \, . \label{eq:pre_yy}
\end{align}
Here again, the equality is understood as the equality between the coefficients of
$z^n \zeta^m$ on each fixed vector $\Psi \in \cH_0^{\fin}$.

Now we take $\bigoplus_{\lambda \in \bbR}\cH_\lambda^\fin$ (algebraic direct sum).
Using $c_{\lambda_1} z^{\lambda_2 J_0} = z^{-\<\lambda_1, \lambda_2\> + \lambda_2 J_0}c_{\lambda_1}$,
and the fact that $c_{\lambda_1}$ commute with $E^+(\alpha, z), E^-(\alpha, w)$,
we obtain
\begin{align}
 \left(1-\frac{z}w\right)^{-\<\alpha, \beta\>}w^{-\<\alpha, \beta\>}Y_{\alpha}(w)Y_{\beta}(z)
 &=  \left(1-\frac{w}z\right)^{-\<\alpha, \beta\>}z^{-\<\alpha, \beta\>}
 Y_{\beta}(z) Y_{\alpha}(w) \, . \label{eq:braiding-u1}
\end{align}

\paragraph{Relative locality}
Here again we consider $Y_\alpha(z)$ as a formal series on $\bigoplus_{\lambda \in \bbR} \cH_\lambda^\fin$.
On each of these summands, $L_m$ and $J_m$ act naturally as a representation.
We denote their direct sum on $\bigoplus_{\lambda \in \bbR} \cH_\lambda^\fin$ by $\hat L_m, \hat J_m$,
respectively.

The commutation relation $[\hat J_m, Y_\alpha(z)] = \alpha Y_\alpha(z)z^m$
can be checked easily from the definition of $Y_\alpha(z)$.
This is equivalent to $[\hat J_m, Y_{\alpha, s}] = \alpha Y_{\alpha, m+s}$.

\paragraph{Primarity}
As we will see, the conformal dimension of $Y_\alpha$ should be $D=\frac{\alpha^2}2$. Note that
\begin{align*}
 \partial Y_\alpha(z) 
 &=  z^{-1} Y_\alpha(z)\alpha \hat J_0 
 + \sum_{j<0} z^{-j-1}\alpha Y_\alpha(z)\hat J_j 
 + \sum_{j>0} z^{-j-1}\alpha \hat J_j Y_\alpha(z). 
\end{align*}
On the other hand, $Y_\alpha$ being primary is equivalent to
$[\hat L_m, Y_\alpha(z)] = \partial Y_\alpha(z) z^{m+1} + D(m+1)Y_\alpha(z)$.
Therefore, we need to show
\begin{align}\label{eq:primarygoal}
 &[\hat L_m, Y_\alpha(z)]  \nonumber \\
 &= (z^{-1}  Y_\alpha(z)\alpha \hat J_0 
 + \sum_{j<0} z^{-j-1}\alpha Y_\alpha(z)\hat J_j 
 + \sum_{j>0} z^{-j-1}\alpha \hat J_j Y_\alpha(z)) z^{m+1} 
 + D(m+1) Y_\alpha(z)z^m \, . 
\end{align}
It holds that $\hat J_j c_\alpha = c_\alpha (\hat J_j + \alpha\delta_j)$.
From the Sugawara formula $\hat L_m = \frac12 \sum_k :\hat J_k \hat J_{m-k}:$,
we have $\hat L_m c_\alpha = c_\alpha (\hat L_m + \alpha \hat J_m)$ for $m\neq 0$
and $\hat L_0 c_\alpha = c_\alpha (\hat L_0 + \alpha \hat J_0 + \frac{\alpha^2}2)$. 
 
We want to verify directly that $Y_\alpha(z)$ is primary in the sense above (this should also follow by using VOA modules, see \cite[(36.1)]{CGP21} and references therein, in particular, \cite[Section 5.4]{FHL93}).
To do this, recall the commutation relations
\[
 [\hat L_m, \hat J_j] = -j\hat J_{j+m}.
\]
In the definition of $E^\pm$, the order of product does not matter, hence
\begin{align*}
 E^+(\alpha, z) 
 &= \prod_{j>0} \exp \left(-\tfrac{\alpha \hat J_j}j z^{-j}\right)  \, , 
 \qquad
 E^-(\alpha, z) = \prod_{j<0} \exp \left(-\tfrac{\alpha \hat J_j}j z^{-j}\right) \, . 
\end{align*}
Using the formula $[A, BC] = [A,B]C + B[A,C]$, we calculate
\begin{align}\label{eq:tocompute}
 [\hat L_m, Y_\alpha(z)] &= [\hat L_m, c_\alpha] E^-(\alpha, z)E^+(\alpha, z)z^{\alpha \hat J_0} \\
 &\qquad + c_\alpha [\hat L_m, E^-(\alpha, z)] E^+(\alpha, z)z^{\alpha \hat J_0} +
 c_\alpha E^-(\alpha, z) [\hat L_m, E^+(\alpha, z)]z^{\alpha \hat J_0}\nonumber.
\end{align}

Let us consider the three cases separately. We illustrate our strategy for the simpler case of $m=0$.
\begin{itemize}
\item $m=0$. In this case, \eqref{eq:primarygoal} becomes the following and thus we have to show
\begin{align}
 \label{eq:primaryzero}
 [\hat L_0, Y_\alpha(z)] 
 &= (z^{-1} Y_\alpha(z)\alpha \hat J_0 + \sum_{j>0} z^{-j-1}\alpha \hat J_j Y_\alpha(z)
 + \sum_{j<0} z^{-j-1}\alpha  Y_\alpha(z)\hat J_j) z + D Y_\alpha(z) 
 \nonumber \\
 &= Y_\alpha(z)\alpha \hat J_0 + \sum_{j>0} z^{-j}\alpha \hat J_j Y_\alpha(z)
 + \sum_{j<0} z^{-j}\alpha  Y_\alpha(z)\hat J_j+ D Y_\alpha(z). 
\end{align}
As $[\hat L_0, c_\alpha] = c_\alpha(D + \hat J_0)$,
the first and the last terms in \eqref{eq:primaryzero} are obtained.
By noting $[\hat L_0, \hat J_{j}] = -j\hat J_{j}$, it is straightforward that
\begin{align*}
 \left[\hat L_0, \exp\left(-\tfrac{\alpha \hat J_j}{j} z^{-j}\right)\right]
 = \alpha \hat J_{j} z^{-j}\exp\left(-\tfrac{\alpha \hat J_j}{j} z^{-j}\right) 
\end{align*}
and hence $[\hat L_0, E^\pm(\alpha, z)] = \sum_{\pm j>0} z^{-j}\alpha \hat J_{j}E^\pm(\alpha, z).$
Inserting them in \eqref{eq:tocompute}, they yield the second and the third terms of \eqref{eq:primaryzero}.
 \item $m>0$, odd. In this case, there is no $j$ such that $-2j = m$.	
 Using $[\hat L_m, \hat J_{j}] = -j\hat J_{m+j}$, we have
\begin{align*}
  \left[\hat L_m, \exp\left(-\tfrac{\alpha \hat J_j}{j} z^{-j}\right)\right]
   &= \sum_{k=0}^\infty\tfrac1{k!}\left[\hat L_m, 
   \left(-\tfrac{\alpha \hat J_j}{j} z^{-j}\right)^k\right] \\
   &= \sum_{k=1}^\infty\tfrac1{k!}\tfrac{kj\alpha \hat J_{m+j}}{j} z^{-j}
   \left(-\tfrac{\alpha \hat J_j}{j} z^{-j}\right)^{k-1} \\
   &= \alpha \hat J_{m+j} z^{-j}\sum_{k=1}^\infty \tfrac1{(k-1)!}
   \left(-\tfrac{\alpha \hat J_j}{j} z^{-j}\right)^{k-1} \\  
   &= \alpha \hat J_{m+j} z^{-j}\exp\left(-\tfrac{\alpha \hat J_j}{j} z^{-j}\right) 
 \end{align*}
 and hence $[\hat L_m, E^\pm(\alpha, z)] 
 = \sum_{\pm j>0} z^{-j}\alpha \hat J_{m+j} E^\pm(\alpha, z)$.
  The contribution from the case $j=-m$ in \eqref{eq:tocompute} gives exactly the 
  term $z^{-1}Y_\alpha(z)\alpha \hat J_0z^{m+1}$ in \eqref{eq:primarygoal}.

 If $m+j < -j$, we need to bring the factor $\hat J_{m+j}$ past 
 $\exp\bigl(-\frac{\alpha \hat J_{-m-j}}{-m-j} z^{m+j}\bigr)$.
 Recalling that $[\hat J_j, \hat J_{-j}] = j$, and hence $[\exp (c\hat J_j), \hat J_{-j}] 
 = \sum_{k\ge 0} \frac1{k!}c^k [\hat J_j^k, \hat J_{-j}] = cj\exp(c\hat J_j)$,
 we get a contribution $\alpha^2 z^m Y_\alpha(z)$ for each such $j$.
 There are $\frac{m-1}2$ such cases and it is $(m-1) \frac{\alpha^2}2 z^mY_\alpha(z)$.

 Finally, $[\hat L_m, c_\alpha]E^-(\alpha, z)E^+(\alpha, z)z^{\alpha \hat J_0} = \alpha \hat J_m Y_\alpha(z) = z^{-m-1}\alpha \hat J_m Y_\alpha(z)z^{m+1}$.
 As $m > 0$, we need to bring the factor $\hat J_m$ through $E^-(\alpha, z)$,
 from which there is an additional contribution $\alpha^2 z^mY_\alpha(z)$.
 Altogether, we obtain a contribution of $(m+1) \frac{\alpha^2}2 z^mY_\alpha(z)$
 which is the last term of \eqref{eq:primarygoal}
 and the term corresponding to $j=m$.
 This completes the proof of the case $m>0$ odd.
 
 \item $m>0$, even. In this case, there is $j$ such that $-2j = m$.
 We calculate the commutator $\bigl[\hat L_{-2j}, \exp\bigl(-\frac{\alpha \hat J_j}{j} z^{-j}\bigr)\bigr]$
 slightly modifying the argument above for $m>0$ odd
\begin{align*}
  \left[\hat L_{-2j}, \exp\left(-\tfrac{\alpha \hat J_j}{j} z^{-j}\right)\right]
   &= \sum_{k=0}^\infty\tfrac1{k!}\left[\hat L_{-2j}, 
   \left(-\tfrac{\alpha \hat J_j}{j} z^{-j}\right)^k\right] \\
   &=-\tfrac{\alpha z^{-j}}j[\hat L_{-2j}, J_j]+\sum_{k=2}^\infty\tfrac1{k!}\left[\hat L_{-2j}, \left(-\tfrac{\alpha \hat J_j}{j} z^{-j}\right)^k\right]\\
   &=\alpha z^{-j}J_{-j}+\sum_{k=2}^\infty\tfrac1{k!}\tfrac{(-1)^k\alpha^kz^{-jk}}{j^k}\tfrac{j^2k(k-1)}2J_j^{k-2}\\
   &\quad-\sum_{k=2}^\infty\tfrac1{k!}\tfrac{(-1)^k\alpha^kz^{-jk}}{j^k}kjJ_j^{k-1}J_{-j}\\
   &=\alpha z^{-j}J_{-j}+\tfrac{\alpha^2}2z^{-2j}\sum_{k=2}^\infty\tfrac1{(k-2)!}\tfrac{(-1)^k\alpha^{k-2}z^{-j(k-2)}}{j^{k-2}}J_j^{k-2}\\
   &\quad -\sum_{k=2}^\infty\tfrac1{k!}\tfrac{(-1)^k\alpha^kz^{-jk}}{j^k}kjJ_j^{k-1}J_{-j}\\
   &=\tfrac{\alpha^2}2z^{-2j}\exp\left(-\tfrac{\alpha \hat J_j}{j} z^{-j}\right)+\sum_{k=1}^\infty\tfrac1{(k-1)!}\tfrac{(-1)^{k-1}\alpha^{k-1}z^{-j(k-1)}}{j^{k-1}}J_j^{k-1}J_{-j}\\
   &=\tfrac{\alpha^2}2z^{-2j}\exp\left(-\tfrac{\alpha \hat J_j}{j} z^{-j}\right)+\exp\left(-\tfrac{\alpha \hat J_j}{j} z^{-j}\right)J_{-j}
\end{align*}
 and then moving $\hat J_{j+m} = \hat J_{-j}$, we obtain an additional term of
\[
 \tfrac{\alpha^2}2 z^{-2j} \exp\left(-\tfrac{\alpha \hat J_j}{j} z^{-j}\right) 
 = \tfrac{\alpha^2}2 z^m \exp\left(-\tfrac{\alpha \hat J_j}{j} z^{-j}\right)
\]
and this gives the contribution $\frac{\alpha^2}2 z^m Y_\alpha(z)$.

From $j$ with $m+j < -j$, we get a contribution in \eqref{eq:primarygoal} $\alpha^2 z^m Y_\alpha(z)$ 
for each such $j$ as before, and as $m$ is even there are $\frac{m-2}2$ 
such cases, and together with the contribution 
from the previous paragraph we 
obtain $\frac{m-1}2\alpha^2 z^mY_\alpha(z)$ as before.

 The rest is the same as in the case where $m$ is odd.
 
\item The case $m<0$ is obtained by taking the conjugate
and substituting $\alpha$ by $-\alpha$.
\end{itemize}

\paragraph{Energy bounds}
By \cite[Proposition VI.1.2.1]{Toledano-LaredoThesis},
$\|Y_{\alpha,n}\| \le 1$ if $\<\alpha, \alpha\> \le 1$.
In particular, $Y_\alpha$ satisfies the linear energy bound for such $\alpha$.
It is also shown that for any $\alpha$, $Y_\alpha(z)$ satisfies a polynomial energy bound.

\paragraph{Wightman fields}
Let $G$ be a subgroup of $\bbR$.
Then, for a fixed function $h$ with $\supp h \subset I$,
we can choose a family of automorphisms $\sigma_\alpha$ parametrized by $\alpha \in G$
such that, by omitting the dependence on $h$, $\sigma_\alpha \circ \sigma_\beta = \sigma_{\alpha + \beta}$.
Their braidings satisfy $\epsilon^\pm(\sigma_\alpha, \sigma_\beta) = \epsilon^\pm(\sigma_\beta, \sigma_\alpha) = \epsilon^\pm(\sigma_{-\alpha}, \sigma_{-\beta})$.
Therefore, if we take $\kappa_\rL(\alpha) = \sigma_{\alpha}, \kappa_\rR(\alpha) = \sigma_{-\alpha}$,
the objects satisfy the conditions of Section \ref{2dnet}.
We can also take $\kappa_\rL(\alpha) = \sigma_{\alpha}, \kappa_\rR(\alpha) = \sigma_{\pm \sqrt{\alpha^2 + 2\ell}}$
for any $\ell \in \bbZ$ such that $\alpha^2 + 2\ell > 0$.

Instead, let $G = \bbZ$ and $K = \{1,2,\cdots |K|\}$ be a finite set.
For $j \in K$, we take $\alpha_j$ such that $\sum_{j \in K} \alpha_j^2/2 = 1$,
then $\kappa_j(n) = \sigma_{n\alpha_j}$ satisfy the conditions of Section \ref{net-trivialtotalbraiding},
obtaining extensions of $(\cA_{\rU(1)})^{\otimes j}$ as conformal nets on $S^1$.

On the other hand, we have checked that the charged primary fields $Y_\alpha$ satisfy the conditions of Section \ref{charged-fields}
(without the index $\xi$).
Moreover, if there is $\alpha \in G, |\alpha| \le 1$, then $Y_{\alpha,s}$ are bounded, and hence the two-dimensional Wightman field
as constructed in Section \ref{2dwightman} satisfies a linear energy bound, and generate the conformal net on $\cE$.
If $G$ is a subgroup without such $\alpha$,
the extension given by $G$ can be embedded with a larger net,
where the field satisfy a linear energy bound.
From this, it follows that the fields for $\alpha > 1$ strongly commute when smeared with spacelike separated test functions
by an analogue of \cite[Lemma 3.6]{CTW22}.

\section{Outlook}
This construction should apply also to loop group nets at level $1$ \cite{Wassermann98}, \cite{Toledano-LaredoThesis}.
The bosonic construction of charged primary fields of \cite{Toledano-LaredoThesis} should
give two-dimensional Wightman fields by our construction.

There are a few works on two-dimensional extension of CFT in a language similar to that of vertex operator algebras,
\eg, \cite{HK07FullFieldAlgebras}, \cite{CKM22Gluing}, \cite{Moriwaki23}. We plan to investigate the construction problems of Wightman fields with more generality,
in particular, for loop group nets with higher levels and for Virasoro nets \cite{KL04-2}.
Deforming CFT by a pair of currents \cite{Moriwaki23} should also have a similar realization in the Wightman setting,
and it would be interesting to see whether it has a dynamical meaning, at least in the sense of perturbation theory, cf.\! \cite{CRV22Lorenzian}.

In principle, our fields should be Wick-rotated to an Euclidean theory, that should correspond to
the works above. It would be interesting to understand these Euclidean models in terms of Hilbert spaces
and operators, \cf \cite{FFK89}.


\subsubsection*{Acknowledgements}
We thank
Sebastiano Carpi for interesting discussions on primary fields and for letting us include the proof of Theorem \ref{th:locality-doublecomm},
Yasuyuki Kawahigashi for useful information on 2-cohomology vanishing,
Claudia Pinzari for the reference \cite[(36.1)]{CGP21},
Karl-Henning Rehren for inspiring discussion on braiding,
Slava Rychkov for informing us of the precise statements of \cite{LM75}
and
the Referees for careful reading and useful comments.

M.S.A.\ is a JSPS International Research Fellow and gratefully acknowledges support by
the Grant-in-Aid Kakenhi n.\! 22F21312.
L.G.\ has been supported by the European Union's Horizon 2020 research and innovation programme H2020-MSCA-IF-2017 under Grant Agreement 795151 \emph{Beyond Rationality in Algebraic CFT: mathematical structures and models}.
L.G.\ and Y.T.\ are partially supported by the \emph{MUR Excellence Department Project MatMod@TOV} awarded to the Department of Mathematics, University of Rome Tor Vergata, by the University of Rome Tor Vergata funding \emph{OAQM}, CUP E83C22001800005.
M.S.A., L.G.\ and Y.T.\ are partially supported by GNAMPA--INdAM.

\appendix
\section{On strong locality}

Let us prove a simple criterion for strong commutativity of operators
satisfying linear energy bounds which follows from \cite[Theorem 19.4.4]{GJ87}. The arguments here are due to Sebastiano Carpi.

Let $H$ be a positive self-adjoint operator, $A$ a symmetric operator on $\dom(H)$
and assume that, $\|A\Psi\| \le C\|H\psi\|, \|[H, A]\Psi\| \le C\|H\psi\|$.
We denote
$R(\lambda) = (H + (\lambda + 1)\bb1)^{-1}, R = R(0) = (H + \bb1)^{-1}, \delta(A) = i[H,A],
\delta^k(A) = \underset{k\text{-times}}{\underbrace{\delta(\delta(\cdots(\delta(A)\cdots)))}}$.
By the proof of \cite[Theorem 19.4.1]{GJ87}, we have
\begin{align*}
 R^\frac12 &= \frac 1\pi \int_0^\infty R(\lambda) \lambda^{-\frac12}\,d\lambda, \\
 [A, R^\frac12] &= \frac 1 \pi \int_0^\infty R(\lambda) ((H+(\lambda+1)\bb1)A - A(H+(\lambda+1)\bb1)) R(\lambda) \lambda^{-\frac12}\,d\lambda \\
 &= -\frac i \pi \int_0^\infty R(\lambda) \delta(A) R(\lambda) \lambda^{-\frac12}\,d\lambda.
\end{align*}
Note also that $\|(H + \bb1)R(\lambda)\| = \|R(\lambda)(H + \bb1)\| \le 1$,
$\|R(\lambda)\| \le \frac1{1+\lambda}$
and $\|R^{-1}R(\lambda)\| = \|\frac{H + \bb1}{H + (\lambda + 1)\bb1}\| \le \frac1{\lambda + 1} \le 1$.
Therefore, we find that 
\begin{align*}
 \|[A, R^\frac12]\| &\le \frac 1 \pi \int_0^\infty \frac{\lambda^{-\frac12}}{1+\lambda}\| \delta(A) R(\lambda)\|\,d\lambda \\
 &\le \frac 1 \pi \int_0^\infty \frac{\lambda^{-\frac12}}{1+\lambda}\| \delta(A) R\|\,d\lambda \\
 &= \|\delta(A)R\|.
\end{align*}
This implies that
\begin{align*}
 \|AR - R^\frac 12 A R^\frac 12\| \le \|[A, R^\frac12]\|\cdot \|R^{\frac12} \| \le \|\delta(A)R\|.
\end{align*}
In particular, if $\|AR\|$ is bounded, it follows that
$\|R^\frac 12 A R^\frac 12\| \le \|AR\| + \|\delta(A)R\|$.
If we apply this to $\delta(A)$ and $\delta^2(A) = -[H,[H,A]]$ instead of $A$, we obtain
$\|R^\frac 12 \delta(A) R^\frac 12\| \le \|\delta(A)R\| + \|\delta^2(A)R\|$
and $\|R^\frac 12 \delta^2(A) R^\frac 12\| \le \|\delta^2(A)R\| + \|\delta^3(A)R\|$,
respectively.

Let us cite \cite[Theorem 19.4.3]{GJ87} with $n=1$
and \cite[Theorem 19.4.4]{GJ87}:
\begin{theorem}\label{th:GJ-locality}
The following hold.
\begin{itemize}
 \item 
 Let $H, A, R$ as above, and suppose that
 $R^\frac12 \delta(A) R^\frac12$ and $AR$ are bounded.
 Then $A$ is essentially self-adjoint on any core of $H$.
 \item
 Let $H, A, B, R$ as above, and suppose that
 $R^\frac12 \delta(A) R^\frac12, R^\frac12 \delta(B) R^\frac12, AR, BR$,
 $R^\frac12 \delta^2(A) R^\frac12$, $R^\frac12 \delta^2(B) R^\frac12, \delta(A)R, \delta(B)R$ are bounded.
 Suppose furthermore that $AB, BA$ are defined on $\dom(H)$ and $AB=BA$.
 Then $A,B$ are essentially self-adjoint on any core of $H$
 and their closures commute strongly.
\end{itemize}
\end{theorem}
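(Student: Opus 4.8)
The plan is to read both statements as instances of the commutator-theorem machinery with $H$ as the dominating operator, so that the only analytic input beyond the stated hypotheses is the integral representation of $R^\frac12$ and the commutator bound already derived above. The boundedness of $AR$ is the relative $H$-boundedness $\|A\Psi\|\le C\|(H+\bb1)\Psi\|$, while the boundedness of $R^\frac12\delta(A)R^\frac12$ is precisely the statement that $\delta(A)=i[H,A]$ is form-bounded by $H+\bb1$; these are exactly the two hypotheses of Nelson's commutator theorem.

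For the first bullet I would simply invoke Nelson's commutator theorem with $N=H+\bb1$: the bound $\|A\Psi\|\le C\|N\Psi\|$ and the form bound $|\langle A\Psi,N\Psi\rangle-\langle N\Psi,A\Psi\rangle|=|\langle\Psi,\delta(A)\Psi\rangle|\le C\|N^\frac12\Psi\|^2$ (equivalent to boundedness of $R^\frac12\delta(A)R^\frac12$) give that the symmetric operator $A$ is essentially self-adjoint on any core of $H$. If one prefers a self-contained argument, one shows that $\operatorname{Ran}(A\pm i\mu)$ is dense for large $\mu$ by testing against a putative defect vector $\Phi$, using $R^\frac12=\frac1\pi\int_0^\infty R(\lambda)\lambda^{-\frac12}\,d\lambda$ together with the bound $\|[A,R^\frac12]\|\le\|\delta(A)R\|$ to smooth $\Phi$ into $\dom(H)$ and then forcing $\Phi=0$.

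For the second bullet, after Part 1 the closures $\bar A,\bar B$ are self-adjoint. The plan is to prove $e^{it\bar A}\,\bar B\,e^{-it\bar A}=\bar B$ for all $t$, which is equivalent to strong commutativity $[e^{it\bar A},e^{is\bar B}]=0$. The decisive step is a domain-preservation estimate: $e^{it\bar A}$ maps $\dom(H)$ into itself with $\|(H+\bb1)e^{it\bar A}(H+\bb1)^{-1}\|\le e^{c|t|}$. I would establish this by a Gronwall argument, differentiating $\|(H+\bb1)e^{it\bar A}\Psi\|^2$ and bounding its derivative through $\delta(A)$ and $\delta^2(A)$; the boundedness of $R^\frac12\delta(A)R^\frac12$ and $R^\frac12\delta^2(A)R^\frac12$ supplies exactly the coefficients that close the differential inequality. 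Granting this, $F(t):=e^{it\bar A}\bar B e^{-it\bar A}$ is strongly differentiable on $\dom(H)$ with $F'(t)\Psi=ie^{it\bar A}[\bar A,\bar B]e^{-it\bar A}\Psi$; since $AB=BA$ on $\dom(H)$ and the flow preserves $\dom(H)$, the inner commutator vanishes, so $F(t)=\bar B$ on a core, and hence $e^{it\bar A}$ leaves the spectral projections of $\bar B$ invariant.

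The main obstacle is precisely this domain-preservation estimate for $e^{it\bar A}$ in Part 2. Because $A$ and $B$ are unbounded, the formal identity $F'(t)=ie^{it\bar A}[\bar A,\bar B]e^{-it\bar A}$ is delicate: one must ensure at every $t$ that $e^{-it\bar A}\Psi$ still lies in the domain where $[A,B]=0$ is known, and that the two successive commutators with $H$ remain controlled. This is the role of the second commutators $\delta^2(A),\delta^2(B)$ in the hypotheses---they are not needed for essential self-adjointness but are indispensable for the Gronwall control of $\|(H+\bb1)e^{it\bar A}\Psi\|$ that legitimizes differentiating $F(t)$ and concluding that its derivative vanishes.
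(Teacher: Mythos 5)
First, a structural point: the paper does not prove this theorem at all. Theorem \ref{th:GJ-locality} is quoted verbatim from \cite[Theorems 19.4.3 (with $n=1$) and 19.4.4]{GJ87}, and the paper's own work only begins in Theorem \ref{th:locality-doublecomm}, where the hypotheses of the quoted result are verified. So your proposal can only be measured against the Glimm--Jaffe proof. Your first bullet is correct and sits at the same ``citation level'' as the paper's treatment: boundedness of $AR$ is the relative bound $\|A\Psi\|\le C\|(H+\bb1)\Psi\|$, boundedness of $R^{\frac12}\delta(A)R^{\frac12}$ is equivalent to the form bound $|\langle\Psi,\delta(A)\Psi\rangle|\le C\|(H+\bb1)^{\frac12}\Psi\|^{2}$ on $\dom(H)$, and these are exactly the hypotheses of Nelson's commutator theorem \cite[Theorem X.37]{RSII} with $N=H+\bb1$; the paper itself points out this comparison in the paragraph following Theorem \ref{th:locality-doublecomm}.

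The second bullet, however, contains a genuine gap, and it sits precisely where you place the weight. The Gronwall argument is circular as stated: in order to differentiate $\|(H+\bb1)e^{it\bar A}\Psi\|^{2}$ in $t$ you must already know that $e^{it\bar A}\Psi\in\dom(H)$ for all $t$ and that $t\mapsto e^{it\bar A}\Psi$ is differentiable in the graph norm of $H$ --- which is exactly the statement to be proved. The known repair, and the actual mathematical content of \cite[Theorem 19.4.4]{GJ87}, is a regularization: replace $H$ by the bounded operators $H_n=nH(n+H)^{-1}$, for which a Duhamel identity $[H_n,e^{it\bar A}]=\int_0^t e^{i(t-s)\bar A}\,i[H_n,A]\,e^{is\bar A}\,ds$ can be justified on a suitable core; observe that $i[H_n,A]=(nR_n)\,\delta(A)\,(nR_n)$ with $R_n=(n+H)^{-1}$, so that $\|\delta(A)R\|<\infty$ yields an $n$-uniform Gronwall bound; and then let $n\to\infty$, using closedness of $H$, to obtain $e^{it\bar A}\dom(H)\subset\dom(H)$ together with $\|(H+\bb1)e^{it\bar A}R\|\le e^{c|t|}$. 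Note, in passing, that the coefficient which closes this differential inequality is the operator bound $\|\delta(A)R\|$ on the \emph{first} commutator (which is among the hypotheses); your claim that the form bounds on $\delta^2(A)$, $\delta^2(B)$ are what close it is inaccurate --- those enter elsewhere, for instance in controlling $[\delta(A),R^{\frac12}]$ through the integral representation established before the theorem, which is needed to legitimize the remaining differentiation steps. Once domain preservation with graph-norm control is in hand, the rest of your outline is sound: differentiating $\langle e^{-it\bar A}\psi,\bar Be^{-it\bar A}\phi\rangle$ for $\psi,\phi\in\dom(H)$, using $AB=BA$ on $\dom(H)$, one gets $e^{it\bar A}\bar Be^{-it\bar A}\supseteq B|_{\dom(H)}$, hence $e^{it\bar A}\bar Be^{-it\bar A}=\bar B$ because a self-adjoint extension of an essentially self-adjoint operator is its closure and $\dom(H)$ is a core for $\bar B$; the spectral projections then commute. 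As it stands, though, your proposal reproduces the architecture of the Glimm--Jaffe argument while leaving its load-bearing step unproven.
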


Combining these observations, we have the following result.
\begin{theorem}\label{th:locality-doublecomm}
The following hold.
\begin{itemize}
 \item 
 Let $H,A$ as above, and suppose that the operators
 $\|A\Psi\| \le C\|(H+\bb1)\Psi\|, \|[H, A]\Psi\| \le C \|(H + \bb1)\Psi\|,
 \|[H,[H, A]]\Psi\| \le C \|(H + \bb1)\Psi\|$.
 Then $A$ is essentially self-adjoint on any core of $H$.
 \item
 Let $H, A, B, R$ as above, and suppose that the operators
 \begin{itemize}
  \item $A, \delta(A) = i[H, A], \delta^2(A) = -[H, [H, A]], \delta^3(A) = -i[H,[H,[H,A]]]$
  \item $B, \delta(B) = i[H, B], \delta^2(B) = -[H, [H, B]], \delta^3(B) = -i[H,[H,[H,B]]]$
 \end{itemize}
 are defined on $\dom(H)$ and
 \begin{itemize}
 \item $\|A\Psi\| \le C\|(H+\bb1)\Psi\|, \|\delta^k(A)\Psi\| \le C \|(H + \bb1)\Psi\|, k=1,2,3$
 \item $\|B\Psi\| \le C\|(H+\bb1)\Psi\|, \|\delta^k(B)\Psi\| \le C \|(H + \bb1)\Psi\|, k=1,2,3$
 \end{itemize}
 Suppose furthermore that $AB, BA$ are defined on $\dom(H)$ and $AB=BA$.
 Then $A,B$ are essentially self-adjoint on any core of $H$
 and their closures commute strongly.
\end{itemize}
\end{theorem}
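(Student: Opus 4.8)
The plan is to deduce both assertions directly from Theorem~\ref{th:GJ-locality}, whose hypotheses are phrased in terms of boundedness of the operators $AR$, $\delta(A)R$, $R^{\frac12}\delta^k(A)R^{\frac12}$ (and their $B$-counterparts). All that is needed is to check that the energy bounds in the statement supply exactly these boundedness conditions. The first reduction is the observation that a bound of the form $\|\delta^k(A)\Psi\| \le C\|(H+\bb1)\Psi\|$ for $\Psi\in\dom(H)$ is equivalent to boundedness of $\delta^k(A)R$: substituting $\Psi = R\Phi$ for arbitrary $\Phi\in\cH$ gives $(H+\bb1)\Psi = \Phi$, whence $\|\delta^k(A)R\Phi\| \le C\|\Phi\|$, so $\|\delta^k(A)R\| \le C$, and the same applies verbatim to $B$. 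Thus the hypotheses on $A,\delta(A),\delta^2(A),\delta^3(A)$ (and on $B$) translate into boundedness of $AR,\delta(A)R,\delta^2(A)R,\delta^3(A)R$ (and of $BR,\delta(B)R,\delta^2(B)R,\delta^3(B)R$).

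With these in hand I would invoke the sandwich estimates already established above. The inequality $\|R^{\frac12}\delta^j(A)R^{\frac12}\| \le \|\delta^j(A)R\| + \|\delta^{j+1}(A)R\|$, which is the displayed estimate applied to $\delta^j(A)$ in place of $A$, shows that $R^{\frac12}AR^{\frac12}$ is bounded once $AR,\delta(A)R$ are, that $R^{\frac12}\delta(A)R^{\frac12}$ is bounded once $\delta(A)R,\delta^2(A)R$ are, and that $R^{\frac12}\delta^2(A)R^{\frac12}$ is bounded once $\delta^2(A)R,\delta^3(A)R$ are, and analogously for $B$. For the first assertion, boundedness of $AR$ (from the bound on $A$) together with boundedness of $R^{\frac12}\delta(A)R^{\frac12}$ (from the bounds on $\delta(A)$ and $\delta^2(A)$) are precisely the hypotheses of the first item of Theorem~\ref{th:GJ-locality}, giving essential self-adjointness of $A$ on every core of $H$. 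For the second assertion, the eight boundedness conditions required by the second item of Theorem~\ref{th:GJ-locality}, namely $R^{\frac12}\delta(A)R^{\frac12}$, $R^{\frac12}\delta(B)R^{\frac12}$, $AR$, $BR$, $R^{\frac12}\delta^2(A)R^{\frac12}$, $R^{\frac12}\delta^2(B)R^{\frac12}$, $\delta(A)R$ and $\delta(B)R$, are all supplied by the bounds on $\delta^k$ for $k\le 3$ through the two reductions above, while $AB=BA$ on $\dom(H)$ is assumed; Theorem~\ref{th:GJ-locality} then yields essential self-adjointness of $A$ and $B$ on every core of $H$ and strong commutativity of their closures.

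There is no deep analytic difficulty here: the entire content lies in the chain of sandwich estimates, and the only point requiring care is the \emph{bookkeeping} of the orders of $\delta$. The essential feature is that each application of $\|R^{\frac12}(\cdot)R^{\frac12}\| \le \|(\cdot)R\| + \|\delta(\cdot)R\|$ to $\delta^j$ consumes one further commutator with $H$; this is exactly why control of $\delta^2(A)$ suffices for essential self-adjointness alone, whereas the strong-commutativity statement, which needs $R^{\frac12}\delta^2(A)R^{\frac12}$ to be bounded, forces the hypothesis up to $\delta^3(A)$. I would also record at the outset that the operators $\delta^k(A)$ are assumed defined on $\dom(H)$, so that all manipulations take place on the common invariant domain and every estimate is applied only to vectors where the relevant terms make sense.
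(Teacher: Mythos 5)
Your proposal is correct and follows essentially the same route as the paper's own proof: both convert the energy bounds $\|\delta^k(\cdot)\Psi\| \le C\|(H+\bb1)\Psi\|$ into boundedness of $\delta^k(\cdot)R$, then feed the sandwich estimate $\|R^{\frac12}\delta^j(\cdot)R^{\frac12}\| \le \|\delta^j(\cdot)R\| + \|\delta^{j+1}(\cdot)R\|$ into the two items of Theorem~\ref{th:GJ-locality}, with the bound on $\delta^2$ securing the first assertion and the bound on $\delta^3$ securing $R^{\frac12}\delta^2(\cdot)R^{\frac12}$ for the second. Your bookkeeping of which order of $\delta$ is consumed at each step matches the paper exactly.
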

\begin{proof}
 By the hypothesis $\|A\Psi\| \le C\|(H+\bb1)\Psi\|, \|\delta(A)\Psi\| \le C \|(H + \bb1)\Psi\|$,
 we have that $\|AR\| \le C, \|\delta(A)R\| \le C$, therefore, by the observation above, $\|R^\frac12 A R^\frac12\| \le 2C$.
 Applying the same argument to $\delta(A)$ with the hypothesis $\|\delta^2(A)\Psi\| \le C \|(H + \bb1)\Psi\|$,
 we obtain $\|R^\frac12 \delta(A)R^\frac12\| \le 2C$.
 This and the first assertion of Theorem \ref{th:GJ-locality} (with $C$ replaced by $2C$) complete the proof of the first assertion.
 
 As for the second assertion, we use $\|\delta^3(A)R\| \le C$
 to infer that the boundedness of $R^\frac12 \delta^2(A)R^\frac12$, which is in the assumption of
 the second assertion of Theorem \ref{th:GJ-locality}.
 We have analogously the bounds for operators involving $B$, therefore,
 the second assertion of Theorem \ref{th:GJ-locality} applies.
\end{proof}

This should be compared with some other formulations of the commutator theorem,
\eg, \cite[Theorem X.37]{RSII} which assumes
that $\|A\psi\| \le C\|H\psi\|$ and $|\<\psi, [H,A]\psi\>| \le C\<\psi, H\psi\>$
(these assumptions are very similar to that of Theorem \ref{th:GJ-locality}, the first assertion)
and proves that $A$ is essentially self-adjoint on any core of $H$.
In general, for a closable operator $B$,
$\|B\psi\| \le C\|H\psi\|$ for all $\psi$ does not imply\footnote{
A counterexample on $\bbC^2$ is
$H= \left(\begin{array}{cc} 1 & 0 \\ 0 & 4\end{array}\right),
B= \frac1{\sqrt2}\left(\begin{array}{cc} 1 & 4  \\ -1 & 4 \end{array}\right) = \frac1{\sqrt 2}\left(\begin{array}{cc} 1 & 1 \\ -1 & 1\end{array}\right)H,
\psi= \left(\begin{array}{c} 1 \\ 1 \end{array}\right)$.} $|\<\psi, B\psi\>| \le C\<\psi, H\psi\>$,
therefore, we cannot infer the essential self-adjointness of $A$
from just from $\|A\Psi\| \le C\|H\Psi\|, \|[H,A]\Psi\| \le C\|(H+\bb1)\Psi\|$, but
we need a bound on the higher commutators.

It is also possibile to have weaker assumptions on the commutators, for example assuming
bounds as quadratic forms \cite{DF77}, but then one must be careful with the domains, see \cite[Appendix C]{Tanimoto16-1}.

{\small

\begin{thebibliography}{CDVIT21}

\bibitem[Bar54]{Bargmann54}
V.~Bargmann.
\newblock On unitary ray representations of continuous groups.
\newblock {\em Ann. of Math. (2)}, 59:1--46, 1954.
\newblock \url{http://www.jstor.org/stable/1969831}.

\bibitem[Bau95]{Baumgaertel95}
Hellmut Baumg{\"a}rtel.
\newblock {\em Operator algebraic methods in quantum field theory}.
\newblock Akademie Verlag, Berlin, 1995.
\newblock \url{https://books.google.com/books?id=rq3vAAAAMAAJ}.

\bibitem[BFV03]{BFV03}
Romeo Brunetti, Klaus Fredenhagen, and Rainer Verch.
\newblock {The Generally Covariant Locality Principle -- A New Paradigm for
  Local Quantum Field Theory}.
\newblock {\em Comm. Math. Phys.}, 237:31--68, 2003.
\newblock \url{https://arxiv.org/abs/math-ph/0112041}.

\bibitem[BGL93]{BGL93}
R.~Brunetti, D.~Guido, and R.~Longo.
\newblock Modular structure and duality in conformal quantum field theory.
\newblock {\em Comm. Math. Phys.}, 156(1):201--219, 1993.
\newblock \url{http://projecteuclid.org/euclid.cmp/1104253522}.

\bibitem[BGS22]{BGS22}
Marco Benini, Luca Giorgetti, and Alexander Schenkel.
\newblock A skeletal model for 2d conformal {AQFT}s.
\newblock {\em Comm. Math. Phys.}, 395(1):269--298, 2022.
\newblock \url{https://arxiv.org/abs/2111.01837}.

\bibitem[BJM23]{BJM23}
Jo\~ao C.\!~A.\! Barata, Christian J\"akel, and Jens Mund.
\newblock The $\mathscr{P}(\varphi)_2$ {Model on de Sitter Space}.
\newblock {\em Mem. Amer. Math. Soc.}, 281, 2023.
\newblock \url{https://arxiv.org/abs/1607.02265}.

\bibitem[BK01]{BaKi01}
Bojko Bakalov and Alexander Kirillov, Jr.
\newblock {\em Lectures on tensor categories and modular functors}, volume~21
  of {\em University Lecture Series}.
\newblock American Mathematical Society, Providence, RI, 2001.
\newblock \url{https://www.math.stonybrook.edu/~kirillov/tensor/tensor.html}.

\bibitem[BKL15]{BKL15}
Marcel Bischoff, Yasuyuki Kawahigashi, and Roberto Longo.
\newblock Characterization of 2{D} rational local conformal nets and its
  boundary conditions: the maximal case.
\newblock {\em Doc. Math.}, 20:1137--1184, 2015.
\newblock \url{https://arxiv.org/abs/1410.8848}.

\bibitem[BKLR15]{BKLR15}
Marcel Bischoff, Yasuyuki Kawahigashi, Roberto Longo, and Karl-Henning Rehren.
\newblock {\em Tensor categories and endomorphisms of von {N}eumann
  algebras---with applications to quantum field theory}, volume~3 of {\em
  SpringerBriefs in Mathematical Physics}.
\newblock Springer, Cham, 2015.
\newblock \url{https://arxiv.org/abs/1407.4793}.

\bibitem[BMT88]{BMT88}
Detlev Buchholz, Gerhard Mack, and Ivan Todorov.
\newblock The current algebra on the circle as a germ of local field theories.
\newblock {\em Nuclear Phys. B Proc. Suppl.}, 5B:20--56, 1988.
\newblock \url{https://www.researchgate.net/publication/222585851}.

\bibitem[BSM90]{BS90}
Detlev Buchholz and Hanns Schulz-Mirbach.
\newblock {H}aag duality in conformal quantum field theory.
\newblock {\em Rev. Math. Phys.}, 2(1):105--125, 1990.
\newblock \url{https://www.researchgate.net/publication/246352668}.

\bibitem[BSW19]{BSW19}
Marco Benini, Alexander Schenkel, and Lukas Woike.
\newblock Involutive categories, colored $*$-operads and quantum field theory.
\newblock {\em Theor. Appl. Categor.}, 34(2):13--57, 2019.
\newblock \url{https://arxiv.org/abs/1802.09555}.

\bibitem[BSW21]{BSW21}
Marco Benini, Alexander Schenkel, and Lukas Woike.
\newblock Operads for algebraic quantum field theory.
\newblock {\em Commun. Contemp. Math.}, 23(2):Paper No. 2050007, 39, 2021.
\newblock \url{https://arxiv.org/abs/1709.08657}.

\bibitem[Car04]{Carpi04}
Sebastiano Carpi.
\newblock On the representation theory of {V}irasoro nets.
\newblock {\em Comm. Math. Phys.}, 244(2):261--284, 2004.
\newblock \url{https://arxiv.org/abs/math/0306425}.

\bibitem[CDVIT21]{CDVIT21}
Sebastiano Carpi, Simone Del~Vecchio, Stefano Iovieno, and Yoh Tanimoto.
\newblock Positive energy representations of {S}obolev diffeomorphism groups of
  the circle.
\newblock {\em Anal. Math. Phys.}, 11(1):Paper No. 12, 36, 2021.
\newblock \url{https://arxiv.org/abs/1808.02384}.

\bibitem[CGGH23]{CGGH23-online}
Sebastiano Carpi, Tiziano Gaudio, Luca Giorgetti, and Robin Hillier.
\newblock Haploid algebras in ${C}^*$-tensor categories and the {S}chellekens
  list.
\newblock {\em Comm. Math. Phys.}, 402(1):169--212, 2023.
\newblock \url{https://link.springer.com/article/10.1007/s00220-023-04722-9}
  \url{https://arxiv.org/abs/2211.12790}.

\bibitem[CGP21]{CGP21}
Sergio Ciamprone, Marco~Valerio Giannone, and Claudia Pinzari.
\newblock Weak quasi-{H}opf algebras, ${C}^*$-tensor categories and conformal
  field theory, and the {K}azhdan-{L}usztig-{F}inkelberg theorem.
\newblock 2021.
\newblock \url{https://arxiv.org/abs/2101.10016}.

\bibitem[CKL08]{CKL08}
Sebastiano Carpi, Yasuyuki Kawahigashi, and Roberto Longo.
\newblock Structure and classification of superconformal nets.
\newblock {\em Ann. Henri Poincar{\'e}}, 9(6):1069--1121, 2008.
\newblock \url{https://arxiv.org/abs/0705.3609}.

\bibitem[CKLW18]{CKLW18}
Sebastiano Carpi, Yasuyuki Kawahigashi, Roberto Longo, and Mih\'aly Weiner.
\newblock From vertex operator algebras to conformal nets and back.
\newblock {\em Mem. Amer. Math. Soc.}, 254(1213):vi+85, 2018.
\newblock \url{https://arxiv.org/abs/1503.01260}.

\bibitem[CKM22]{CKM22Gluing}
Thomas Creutzig, Shashank Kanade, and Robert McRae.
\newblock Gluing vertex algebras.
\newblock {\em Adv. Math.}, 396:Paper No. 108174, 72, 2022.
\newblock \url{https://arxiv.org/abs/1906.00119}.

\bibitem[CRV22]{CRV22Lorenzian}
Sam Crawford, Kasia Rejzner, and Beno\^{i}t Vicedo.
\newblock Lorentzian 2{D} {CFT} from the p{AQFT} perspective.
\newblock {\em Ann. Henri Poincar\'{e}}, 23(10):3525--3585, 2022.
\newblock \url{https://arxiv.org/abs/2107.12347}.

\bibitem[CTW22]{CTW22}
Sebastiano Carpi, Yoh Tanimoto, and Mih\'{a}ly Weiner.
\newblock Local energy bounds and strong locality in chiral {CFT}.
\newblock {\em Comm. Math. Phys.}, 390(1):169--192, 2022.
\newblock \url{https://arxiv.org/abs/2103.16475}.

\bibitem[CTW23]{CTW23}
Sebastiano Carpi, Yoh Tanimoto, and Mih\'aly Weiner.
\newblock Unitary representations of the $\mathcal{W}_3$-algebra with $c\geq
  2$.
\newblock {\em Transform. Groups}, 28(2):561--590, 2023.
\newblock \url{https://arxiv.org/abs/1910.08334}.

\bibitem[CWX]{CWX}
Sebastiano Carpi, Mih{\'a}ly Weiner, and Feng Xu.
\newblock From vertex operator algebra modules to representations of conformal
  nets.
\newblock in preparation.

\bibitem[DF77]{DF77}
W.~Driessler and J.~Fr{\"o}hlich.
\newblock The reconstruction of local observable algebras from the euclidean
  green's functions of relativistic quantum field theory.
\newblock {\em Annales de L'Institut Henri Poincare Section Physique
  Theorique}, 27:221--236, 1977.
\newblock \url{https://eudml.org/doc/75959}.

\bibitem[DFK04]{DFK04}
Claudio D'Antoni, Klaus Fredenhagen, and S{\o}ren K{\"o}ster.
\newblock Implementation of conformal covariance by diffeomorphism symmetry.
\newblock {\em Lett. Math. Phys.}, 67(3):239--247, 2004.
\newblock \url{https://arxiv.org/abs/math-ph/0312017}.

\bibitem[DG18]{DG18}
Simone {Del Vecchio} and Luca Giorgetti.
\newblock Infinite index extensions of local nets and defects.
\newblock {\em Rev. Math. Phys.}, 30(02):1--58, 2018.
\newblock \url{https://arxiv.org/abs/1703.03605}.

\bibitem[DHR69a]{DHR69I}
Sergio Doplicher, Rudolf Haag, and John~E. Roberts.
\newblock Fields, observables and gauge transformations. {I}.
\newblock {\em Comm. Math. Phys.}, 13:1--23, 1969.
\newblock \url{http://projecteuclid.org/euclid.cmp/1103841481}.

\bibitem[DHR69b]{DHR69II}
Sergio Doplicher, Rudolf Haag, and John~E. Roberts.
\newblock {Fields, observables and gauge transformations. II}.
\newblock {\em Comm. Math. Phys.}, 15:173--200, 1969.
\newblock \url{http://projecteuclid.org/euclid.cmp/1103841943}.

\bibitem[DHR71]{DHR71}
Sergio Doplicher, Rudolf Haag, and John~E. Roberts.
\newblock Local observables and particle statistics. {I}.
\newblock {\em Comm. Math. Phys.}, 23:199--230, 1971.
\newblock \url{http://projecteuclid.org/euclid.cmp/1103857630}.

\bibitem[DHR74]{DHR74}
Sergio Doplicher, Rudolf Haag, and John~E. Roberts.
\newblock Local observables and particle statistics. {II}.
\newblock {\em Comm. Math. Phys.}, 35:49--85, 1974.
\newblock \url{https://link.springer.com/article/10.1007/BF01646454}.

\bibitem[DL14]{DL14}
Chongying Dong and Xingjun Lin.
\newblock Unitary vertex operator algebras.
\newblock {\em J. Algebra}, 397:252--277, 2014.
\newblock \url{https://doi.org/10.1016/j.jalgebra.2013.09.007}.

\bibitem[DMS97]{DMS97}
Philippe {Di Francesco}, Pierre Mathieu, and David S{\'e}n{\'e}chal.
\newblock {\em Conformal field theory}.
\newblock Springer-Verlag, New York, 1997.
\newblock \url{https://books.google.com/books?id=keUrdME5rhIC}.

\bibitem[DR72]{DR72}
Sergio Doplicher and John~E. Roberts.
\newblock Fields, statistics and non-abelian gauge groups.
\newblock {\em Comm. Math. Phys.}, 28:331--348, 1972.
\newblock \url{http://projecteuclid.org/euclid.cmp/1103858447}.

\bibitem[DR89a]{DR89-1}
Sergio Doplicher and John~E. Roberts.
\newblock Endomorphisms of {$C^*$}-algebras, cross products and duality for
  compact groups.
\newblock {\em Ann. of Math. (2)}, 130(1):75--119, 1989.
\newblock {\url{http://dx.doi.org/10.2307/1971477}}.

\bibitem[DR89b]{DR89}
Sergio Doplicher and John~E. Roberts.
\newblock A new duality theory for compact groups.
\newblock {\em Invent. Math.}, 98(1):157--218, 1989.
\newblock \url{http://dx.doi.org/10.1007/BF01388849}.

\bibitem[DR90]{DR90}
Sergio Doplicher and John~E. Roberts.
\newblock Why there is a field algebra with a compact gauge group describing
  the superselection structure in particle physics.
\newblock {\em Comm. Math. Phys.}, 131(1):51--107, 1990.
\newblock \url{http://projecteuclid.org/euclid.cmp/1104200703}.

\bibitem[EGNO15]{EGNO15}
Pavel Etingof, Shlomo Gelaki, Dmitri Nikshych, and Victor Ostrik.
\newblock {\em Tensor categories}, volume 205 of {\em Mathematical Surveys and
  Monographs}.
\newblock American Mathematical Society, Providence, RI, 2015.
\newblock \url{https://math.mit.edu/~etingof/egnobookfinal.pdf}.

\bibitem[FFK89]{FFK89}
G.~Felder, J.~Fr\"{o}hlich, and G.~Keller.
\newblock On the structure of unitary conformal field theory. {I}. {E}xistence
  of conformal blocks.
\newblock {\em Comm. Math. Phys.}, 124(3):417--463, 1989.
\newblock \url{http://projecteuclid.org/euclid.cmp/1104179209}.

\bibitem[FH05]{FH05}
Christopher~J. Fewster and Stefan Hollands.
\newblock Quantum energy inequalities in two-dimensional conformal field
  theory.
\newblock {\em Rev. Math. Phys.}, 17(5):577--612, 2005.
\newblock \url{https://arxiv.org/abs/math-ph/0412028}.

\bibitem[FHL93]{FHL93}
Igor~B. Frenkel, Yi-Zhi Huang, and James Lepowsky.
\newblock On axiomatic approaches to vertex operator algebras and modules.
\newblock {\em Mem. Amer. Math. Soc.}, 104(494):viii+64, 1993.
\newblock \url{https://doi.org/10.1090/memo/0494}.

\bibitem[FRS89]{FRS89}
K.~Fredenhagen, K.-H. Rehren, and B.~Schroer.
\newblock Superselection sectors with braid group statistics and exchange
  algebras. {I}.\ {G}eneral theory.
\newblock {\em Comm. Math. Phys.}, 125(2):201--226, 1989.
\newblock \url{http://projecteuclid.org/euclid.cmp/1104179464}.

\bibitem[FRS92]{FRS92}
Klaus Fredenhagen, Karl-Henning Rehren, and Bert Schroer.
\newblock Superselection sectors with braid group statistics and exchange
  algebras. {II}. {G}eometric aspects and conformal covariance.
\newblock {\em Rev. Math. Phys.}, (Special Issue):113--157, 1992.
\newblock \url{https://dx.doi.org/10.1142/S0129055X92000170}.


\bibitem[FRS02]{FuRS02-1}
J{\"u}rgen Fuchs, Ingo Runkel, and Christoph Schweigert.
\newblock {T{FT} construction of {RCFT} correlators. {I}. {P}artition
  functions}.
\newblock {\em Nuclear Phys. B}, 646(3):353--497, 2002.
\newblock \url{https://arxiv.org/abs/hep-th/0204148}.

\bibitem[GF93]{GF93}
Fabrizio Gabbiani and J{\"u}rg Fr{\"o}hlich.
\newblock Operator algebras and conformal field theory.
\newblock {\em Comm. Math. Phys.}, 155(3):569--640, 1993.
\newblock \url{http://projecteuclid.org/euclid.cmp/1104253398}.

\bibitem[GJ87]{GJ87}
James Glimm and Arthur Jaffe.
\newblock {\em Quantum physics}.
\newblock Springer-Verlag, New York, second edition, 1987.
\newblock \url{https://books.google.com/books?id=VSjjBwAAQBAJ}.

\bibitem[GL92]{GL92}
Daniele Guido and Roberto Longo.
\newblock Relativistic invariance and charge conjugation in quantum field
  theory.
\newblock {\em Comm. Math. Phys.}, 148(3):521--551, 1992.
\newblock \url{http://projecteuclid.org/euclid.cmp/1104251044}.

\bibitem[GL96]{GL96}
Daniele Guido and Roberto Longo.
\newblock The conformal spin and statistics theorem.
\newblock {\em Comm. Math. Phys.}, 181(1):11--35, 1996.
\newblock \url{http://projecteuclid.org/euclid.cmp/1104287623}.

\bibitem[GLR85]{GLR85}
P.~Ghez, R.~Lima, and J.~E. Roberts.
\newblock {$W^\ast$}-categories.
\newblock {\em Pacific J. Math.}, 120(1):79--109, 1985.
\newblock \url{http://projecteuclid.org/euclid.pjm/1102703884}.

\bibitem[GLW98]{GLW98}
D.~Guido, R.~Longo, and H.-W. Wiesbrock.
\newblock Extensions of conformal nets and superselection structures.
\newblock {\em Comm. Math. Phys.}, 192(1):217--244, 1998.
\newblock \url{https://arxiv.org/abs/hep-th/9703129}.

\bibitem[GR18]{GR18}
Luca Giorgetti and Karl-Henning Rehren.
\newblock Braided categories of endomorphisms as invariants for local quantum
  field theories.
\newblock {\em Comm. Math. Phys.}, 357(1):3--41, 2018.
\newblock \url{https://arxiv.org/abs/1512.01995}.

\bibitem[Gui20]{Gui20-2}
Bin Gui.
\newblock Unbounded field operators in categorical extensions of conformal
  nets.
\newblock 2020.
\newblock \url{https://arxiv.org/abs/2001.03095}.

\bibitem[Gui22]{Gui22}
Bin Gui.
\newblock Q-systems and extensions of completely unitary vertex operator
  algebras.
\newblock {\em Int. Math. Res. Not. IMRN}, (10):7550--7614, 2022.
\newblock \url{https://arxiv.org/abs/1908.03484}.

\bibitem[Haa96]{Haag96}
Rudolf Haag.
\newblock {\em Local quantum physics}.
\newblock Springer-Verlag, Berlin, second edition, 1996.
\newblock \url{https://books.google.com/books?id=OlLmCAAAQBAJ}.

\bibitem[HK07]{HK07FullFieldAlgebras}
Yi-Zhi Huang and Liang Kong.
\newblock Full field algebras.
\newblock {\em Comm. Math. Phys.}, 272(2):345--396, 2007.
\newblock \url{https://arxiv.org/abs/math/0511328}.

\bibitem[HKL15]{HKL15}
Yi-Zhi Huang, Alexander Kirillov, Jr., and James Lepowsky.
\newblock Braided tensor categories and extensions of vertex operator algebras.
\newblock {\em Comm. Math. Phys.}, 337(3):1143--1159, 2015.
\newblock \url{https://arxiv.org/abs/1406.3420}.

\bibitem[JT23]{JT23Towards}
Christian J\"akel and Yoh Tanimoto.
\newblock Towards integrable perturbation of 2d {CFT} on de {S}itter space.
\newblock {\em Lett. Math. Phys.}, 113, 89 (2023).
\newblock \url{https://arxiv.org/abs/2301.12468}.

\bibitem[KL04a]{KL04-1}
Yasuyuki Kawahigashi and Roberto Longo.
\newblock Classification of local conformal nets. {C}ase {$c<1$}.
\newblock {\em Ann. of Math. (2)}, 160(2):493--522, 2004.
\newblock \url{https://arxiv.org/abs/math-ph/0201015}.

\bibitem[KL04b]{KL04-2}
Yasuyuki Kawahigashi and Roberto Longo.
\newblock Classification of two-dimensional local conformal nets with {$c<1$}
  and 2-cohomology vanishing for tensor categories.
\newblock {\em Comm. Math. Phys.}, 244(1):63--97, 2004.
\newblock \url{https://arxiv.org/abs/math-ph/0304022}.

\bibitem[KLM01]{KLM01}
Yasuyuki Kawahigashi, Roberto Longo, and Michael M{\"u}ger.
\newblock Multi-interval subfactors and modularity of representations in
  conformal field theory.
\newblock {\em Comm. Math. Phys.}, 219(3):631--669, 2001.
\newblock \url{https://arxiv.org/abs/math/9903104}.

\bibitem[KO02]{KO02}
Jr.~Alexander Kirillov and Viktor Ostrik.
\newblock {On a {$q$}-analogue of the {M}c{K}ay correspondence and the {ADE}
  classification of {$\textrm{sl}_2$} conformal field theories}.
\newblock {\em Adv. Math.}, 171(2):183--227, 2002.
\newblock \url{https://arxiv.org/abs/math/0101219}.

\bibitem[Kon07]{Kong07}
Liang Kong.
\newblock Full field algebras, operads and tensor categories.
\newblock {\em Adv. Math.}, 213(1):271--340, 2007.
\newblock \url{https://arxiv.org/abs/math/0603065}.

\bibitem[KQR21]{KQR21Distributions}
Petr Kravchuk, Jiaxin Qiao, and Slava Rychkov.
\newblock Distributions in {CFT}. {P}art {II}. {M}inkowski space.
\newblock {\em J. High Energy Phys.}, (8):Paper No. 094, 129, 2021.
\newblock \url{https://arxiv.org/abs/2104.02090}.

\bibitem[KR87]{KR87}
V.~G. Kac and A.~K. Raina.
\newblock {\em Bombay lectures on highest weight representations of
  infinite-dimensional {L}ie algebras}.
\newblock World Scientific Publishing Co. Inc., Teaneck, NJ, 1987.
\newblock \url{https://books.google.com/books?id=0P23OB84eqUC}.

\bibitem[LM75]{LM75}
M.~L{\"u}scher and G.~Mack.
\newblock Global conformal invariance in quantum field theory.
\newblock {\em Comm. Math. Phys.}, 41:203--234, 1975.
\newblock \url{http://projecteuclid.org/euclid.cmp/1103898909}.

\bibitem[Lon94]{Longo94}
Roberto Longo.
\newblock {A duality for {H}opf algebras and for subfactors. {I}}.
\newblock {\em Comm. Math. Phys.}, 159(1):133--150, 1994.
\newblock \url{http://projecteuclid.org/euclid.cmp/1104254494}.

\bibitem[Lon97]{Longo97}
Roberto Longo.
\newblock An analogue of the {K}ac-{W}akimoto formula and black hole
  conditional entropy.
\newblock {\em Comm. Math. Phys.}, 186(2):451--479, 1997.
\newblock \url{https://arxiv.org/abs/gr-qc/9605073}.

\bibitem[LR95]{LR95}
R.~Longo and K.-H. Rehren.
\newblock Nets of subfactors.
\newblock {\em Rev. Math. Phys.}, 7(4):567--597, 1995.
\newblock Workshop on Algebraic Quantum Field Theory and Jones Theory (Berlin,
  1994). \url{https://arxiv.org/abs/hep-th/9411077}.

\bibitem[LR97]{LR97}
R.~Longo and J.~E. Roberts.
\newblock A theory of dimension.
\newblock {\em $K$-Theory}, 11(2):103--159, 1997.
\newblock \url{https://arxiv.org/abs/funct-an/9604008}.

\bibitem[Mas00]{Masuda00}
Toshihiko Masuda.
\newblock Generalization of {L}ongo-{R}ehren construction to subfactors of
  infinite depth and amenability of fusion algebras.
\newblock {\em J. Funct. Anal.}, 171(1):53--77, 2000.
\newblock \url{http://dx.doi.org/10.1006/jfan.1999.3523}.

\bibitem[Mor23]{Moriwaki23}
Yuto Moriwaki.
\newblock Two-dimensional conformal field theory, full vertex algebra and
  current-current deformation.
\newblock {\em Adv. Math.}, 427:109125, 2023.
\newblock \url{https://arxiv.org/abs/2007.07327}.

\bibitem[MT19]{MT18}
Vincenzo Morinelli and Yoh Tanimoto.
\newblock Scale and {M}\"{o}bius covariance in two-dimensional {H}aag-{K}astler
  net.
\newblock {\em Comm. Math. Phys.}, 371(2):619--650, 2019.
\newblock \url{https://arxiv.org/abs/1807.04707}.

\bibitem[MTW18]{MTW18}
Vincenzo Morinelli, Yoh Tanimoto, and Mih\'aly Weiner.
\newblock Conformal covariance and the split property.
\newblock {\em Comm. Math. Phys.}, 357(1):379--406, 2018.
\newblock \url{https://arxiv.org/abs/1609.02196}.

\bibitem[M{\"u}g01]{Mueger01}
Michael M{\"u}ger.
\newblock Conformal field theory and {D}oplicher-{R}oberts reconstruction.
\newblock In {\em Mathematical physics in mathematics and physics ({S}iena,
  2000)}, volume~30 of {\em Fields Inst. Commun.}, pages 297--319. Amer. Math.
  Soc., Providence, RI, 2001.
\newblock \url{https://arxiv.org/abs/math-ph/0008027}.

\bibitem[Reh90a]{Rehren90}
Karl-Henning Rehren.
\newblock Braid group statistics and their superselection rules.
\newblock In D.~Kastler, editor, {\em The algebraic theory of superselection
  sectors ({P}alermo, 1989)}, pages 333--355. World Sci. Publ., River Edge, NJ,
  1990.
\newblock \url{https://www.researchgate.net/publication/267377445}.

\bibitem[Reh90b]{Rehren90-1}
Karl-Henning Rehren.
\newblock Space-time fields and exchange fields.
\newblock {\em Comm. Math. Phys.}, 132(2):461--483, 1990.
\newblock \url{http://projecteuclid.org/euclid.cmp/1104201161}.

\bibitem[Reh97]{Rehren97}
K.-H. Rehren.
\newblock Bounded {B}ose fields.
\newblock {\em Lett. Math. Phys.}, 40(4):299--306, 1997.
\newblock \url{https://arxiv.org/abs/hep-th/9605156}.

\bibitem[Reh00]{Rehren00}
K.-H. Rehren.
\newblock Chiral observables and modular invariants.
\newblock {\em Comm. Math. Phys.}, 208(3):689--712, 2000.
\newblock \url{https://arxiv.org/abs/hep-th/9903262}.

\bibitem[RFFS07]{RFFS07}
Ingo Runkel, Jens Fjelstad, J\"{u}rgen Fuchs, and Christoph Schweigert.
\newblock Topological and conformal field theory as {F}robenius algebras.
\newblock In {\em Categories in algebra, geometry and mathematical physics},
  volume 431 of {\em Contemp. Math.}, pages 225--247. Amer. Math. Soc.,
  Providence, RI, 2007.
\newblock \url{https://doi.org/10.1090/conm/431/08275}.

\bibitem[RS75]{RSII}
Michael Reed and Barry Simon.
\newblock {\em Methods of modern mathematical physics. {II}. {F}ourier
  analysis, self-adjointness}.
\newblock Academic Press, New York, 1975.
\newblock \url{https://books.google.com/books?id=Kz7s7bgVe8gC}.

\bibitem[RTT22]{RTT22}
Christopher Raymond, Yoh Tanimoto, and James~E. Tener.
\newblock Unitary vertex algebras and {W}ightman conformal field theories.
\newblock {\em Comm. Math. Phys.}, 395(1):299--330, 2022.
\newblock \url{https://arxiv.org/abs/2203.10795}.

\bibitem[SW00]{SW00}
R.~F. Streater and A.~S. Wightman.
\newblock {\em P{CT}, spin and statistics, and all that}.
\newblock Princeton Landmarks in Physics. Princeton University Press,
  Princeton, NJ, 2000.
\newblock \url{https://www.jstor.org/stable/j.ctt1cx3vcq}.

\bibitem[Tan16]{Tanimoto16-1}
Yoh Tanimoto.
\newblock Bound state operators and wedge-locality in integrable quantum field
  theories.
\newblock {\em SIGMA Symmetry Integrability Geom. Methods Appl.}, 12:100, 39
  pages, 2016.
\newblock \url{https://arxiv.org/abs/1602.04696}.

\bibitem[Ten19a]{Tener19Fusion}
James~E. Tener.
\newblock Fusion and positivity in chiral conformal field theory.
\newblock 2019.
\newblock \url{https://arxiv.org/abs/1910.08257}.

\bibitem[Ten19b]{Tener19Representation}
James~E. Tener.
\newblock Representation theory in chiral conformal field theory: from fields
  to observables.
\newblock {\em Selecta Math. (N.S.)}, 25(5):Paper No. 76, 82, 2019.
\newblock \url{https://arxiv.org/abs/1810.08168}.

\bibitem[Thu74]{Thurston74}
William Thurston.
\newblock Foliations and groups of diffeomorphisms.
\newblock {\em Bull. Amer. Math. Soc.}, 80:304--307, 1974.
\newblock \url{http://projecteuclid.org/euclid.bams/1183535407}.

\bibitem[TL97]{Toledano-LaredoThesis}
Valerio Toledano~Laredo.
\newblock Fusion of positive energy representations of $\mathrm{LSpin}(2n)$.
\newblock 1997.
\newblock Ph.D. Thesis, University of Cambridge.
  \url{https://arxiv.org/abs/math/0409044}.

\bibitem[TL99]{Toledano-Laredo99-1}
Valerio Toledano~Laredo.
\newblock Integrating unitary representations of infinite-dimensional {L}ie
  groups.
\newblock {\em J. Funct. Anal.}, 161(2):478--508, 1999.
\newblock \url{https://arxiv.org/abs/math/0106195}.

\bibitem[TZ12]{TZ12}
Michael~P. Tuite and Alexander Zuevsky.
\newblock A generalized vertex operator algebra for {H}eisenberg intertwiners.
\newblock {\em J. Pure Appl. Algebra}, 216(6):1442--1453, 2012.
\newblock \url{https://arxiv.org/abs/1106.6149}.

\bibitem[Was98]{Wassermann98}
Antony Wassermann.
\newblock Operator algebras and conformal field theory. {III}. {F}usion of
  positive energy representations of {${\rm LSU}(N)$} using bounded operators.
\newblock {\em Invent. Math.}, 133(3):467--538, 1998.
\newblock \url{https://arxiv.org/abs/math/9806031}.

\bibitem[Wei06]{Weiner06}
Mih{\'a}ly Weiner.
\newblock Conformal covariance and positivity of energy in charged sectors.
\newblock {\em Comm. Math. Phys.}, 265(2):493--506, 2006.
\newblock \url{https://arxiv.org/abs/math-ph/0507066}.

\bibitem[Zam89]{Zamolodchikov89-1}
A.~B. Zamolodchikov.
\newblock Integrable field theory from conformal field theory.
\newblock In {\em Integrable systems in quantum field theory and statistical
  mechanics}, volume~19 of {\em Adv. Stud. Pure Math.}, pages 641--674.
  Academic Press, Boston, MA, 1989.
\newblock \url{https://inspirehep.net/literature/279986}.

\end{thebibliography}
\def\polhk#1{\setbox0=\hbox{#1}{\ooalign{\hidewidth
  \lower1.5ex\hbox{`}\hidewidth\crcr\unhbox0}}} \def\cprime{$'$}
}
\end{document}